\newcommand{\acli}[1]{\emph{\acl{#1}}}		
\newcommand{\acdef}[1]{\emph{\acl{#1}} \textup{(\acs{#1})}\acused{#1}}		
\newcommand{\acdefp}[1]{\emph{\aclp{#1}} \textup{(\acsp{#1})}\acused{#1}}	
\colorlet{MyRed}{FireBrick}
\colorlet{MyGreen}{DarkGreen}
\colorlet{MyBlue}{MediumBlue}
\newcommand{\afterhead}{.\;}		
\newcommand{\para}[1]{\medskip\paragraph{\textbf{#1\afterhead}}}
\newcommand{\citef}[1]{\citeauthor{#1} \citep{#1}}
\crefname{assumption}{Assumption}{Assumptions}
\newtheorem{theorem}{Theorem}		
\newtheorem{corollary}{Corollary}		
\newtheorem{lemma}{Lemma}		
\newtheorem{proposition}{Proposition}		
\newtheorem{fact}{Fact}
\newtheorem{model}{Model}
\newtheorem*{corollary*}{Corollary}		
\newtheorem{definition}{Definition}		
\newtheorem{example}{Example}		
\newtheorem*{definition*}{Definition}		
\newtheorem*{assumption*}{Assumptions}		
\newtheorem*{example*}{Example}		
\theoremstyle{remark}
\newtheorem{remark}{Remark}		
\newtheorem*{remark*}{Remark}		
\newcounter{proofpart}
\newcommand{\debug}[1]{#1}		
\newcommand{\newmacro}[2]{\newcommand{#1}{\debug{#2}}}		
\newcommand{\newop}[2]{\DeclareMathOperator{#1}{\debug{#2}}}		
\DeclarePairedDelimiter{\braces}{\{}{\}}		
\DeclarePairedDelimiter{\bracks}{[}{]}		
\DeclarePairedDelimiter{\parens}{(}{)}		
\DeclarePairedDelimiter{\abs}{\lvert}{\rvert}		
\DeclarePairedDelimiterX{\setdef}[2]{\{}{\}}{#1:#2}		
\DeclarePairedDelimiterXPP{\exclude}[1]{\mathopen{}\setminus}{\{}{\}}{}{#1}
\newcommand{\R}{\mathbb{R}}		
\DeclareMathOperator*{\argmax}{arg\,max}		
\DeclareMathOperator*{\argmin}{arg\,min}		
\DeclareMathOperator*{\intersect}{\bigcap}		
\DeclareMathOperator*{\union}{\bigcup}		
\DeclareMathOperator{\bigoh}{\mathcal{O}}		
\DeclareMathOperator{\dist}{dist}		
\DeclareMathOperator{\dom}{dom}		
\DeclareMathOperator{\im}{im}		
\DeclareMathOperator{\one}{\mathds{1}}		
\DeclareMathOperator{\relint}{ri}		
\DeclareMathOperator{\supp}{supp}		
\newcommand{\cf}{cf.\xspace}		
\newcommand{\eg}{e.g.,\xspace}		
\newcommand{\ie}{i.e.,\xspace}		
\newcommand{\textpar}[1]{\textup(#1\textup)}		
\newcommand{\alt}[1]{#1'}		
\newcommand{\altalt}[1]{#1''}		
\newmacro{\dd}{\:d}		
\newcommand{\eps}{\varepsilon}		
\newmacro{\const}{c}		
\newmacro{\coef}{\lambda}		
\newmacro{\param}{\theta}		
\newmacro{\params}{\Theta}		
\newmacro{\pexp}{p}		
\newmacro{\qexp}{q}		
\newmacro{\rexp}{r}		
\newmacro{\beforestart}{0}		
\newmacro{\start}{0}		
\newmacro{\afterstart}{1}		
\newmacro{\running}{\start,\afterstart,\dotsc}		
\newmacro{\run}{n}		
\newmacro{\runalt}{k}		
\newmacro{\runaltalt}{m}		
\newmacro{\nRuns}{T}		
\newmacro{\runs}{\mathcal{\nRuns}}		
\newmacro{\state}{X}		
\newmacro{\statealt}{Y}		
\newmacro{\statealtalt}{Z}		
\newcommand{\curr}[1][\state]{\debug{#1}_{\run}}		
\newmacro{\tstart}{0}		
\newmacro{\timealt}{s}		
\newmacro{\horizon}{T}		
\newmacro{\traj}{x}		
\newmacro{\trajalt}{y}		
\newmacro{\trajaltalt}{z}		
\newmacro{\flowmap}{\Theta}		
\DeclarePairedDelimiterXPP{\flowof}[2]{\flowmap_{#1}}{(}{)}{}{#2}		
\newop{\Nash}{NE}		
\newop{\CE}{CE}		
\newop{\CCE}{CCE}		
\newop{\NI}{NI}		
\newop{\brep}{br}		
\newop{\reg}{Reg}		
\newop{\preg}{\overline{Reg}}		
\newop{\val}{val}		
\newmacro{\play}{i}		
\newmacro{\playalt}{j}		
\newmacro{\playaltalt}{k}		
\newmacro{\nPlayers}{N}		
\newmacro{\players}{\mathcal{\nPlayers}}		
\newmacro{\pure}{\alpha}		
\newmacro{\purealt}{\beta}		
\newmacro{\purealtalt}{\gamma}		
\newmacro{\nPures}{A}		
\newmacro{\pures}{\mathcal{\nPures}}		
\newmacro{\strat}{x}		
\newmacro{\stratalt}{\alt\strat}		
\newmacro{\strataltalt}{\altalt\strat}		
\newmacro{\strats}{\mathcal{X}}		
\newmacro{\intstrats}{\strats^{\circle}}		
\newcommand{\eq}{\sol}		
\newmacro{\loss}{\ell}		
\newmacro{\pay}{u}		
\newmacro{\payv}{v}		
\newmacro{\pot}{f}		
\newmacro{\game}{\mathcal{G}}		
\newmacro{\gameall}{\game(\players,\points,\loss)}		
\newmacro{\fingame}{\Gamma}		
\newmacro{\fingameall}{\Gamma(\players,\pures,\pay)}		
\newmacro{\gmat}{g}		
\newmacro{\gdist}{\dist_{\gmat}}
\newmacro{\mfld}{M}		
\newmacro{\form}{\omega}		
\newmacro{\tvec}{z}		
\newmacro{\uvec}{u}		
\newmacro{\ball}{\mathbb{B}}		
\newmacro{\sphere}{\mathbb{S}}		
\newmacro{\vertex}{v}		
\newmacro{\vertexalt}{\alt\vertex}		
\newmacro{\vertexaltalt}{\altalt\vertex}		
\newmacro{\nVertices}{V}		
\newmacro{\vertices}{\mathcal{\nVertices}}		
\newmacro{\edge}{e}		
\newmacro{\edgealt}{\alt\edge}		
\newmacro{\edgealtalt}{\altalt\edge}		
\newmacro{\nEdges}{E}		
\newmacro{\edges}{\mathcal{\nEdges}}		
\newmacro{\graph}{\mathcal{G}}		
\newmacro{\graphall}{\graph(\vertices,\edges)}		
\newmacro{\vecspace}{\mathcal{V}}		
\newmacro{\subspace}{\mathcal{W}}		
\newmacro{\bvec}{e}		
\newmacro{\bvecs}{\mathcal{E}}		
\newmacro{\coord}{i}		
\newmacro{\coordalt}{j}		
\newmacro{\coordaltalt}{k}		
\newmacro{\nCoords}{d}		
\newmacro{\dims}{\nCoords}		
\newmacro{\vdim}{\nCoords}		
\newmacro{\pspace}{\mathcal{X}}		
\newmacro{\dspace}{\mathcal{Y}}		
\newmacro{\ppoint}{x}		
\newmacro{\ppointalt}{\alt\ppoint}		
\newmacro{\ppointaltalt}{\altalt\ppoint}		
\newmacro{\ppoints}{\mathcal{X}}		
\newmacro{\pstate}{X}		
\newmacro{\pvec}{u}		
\newmacro{\dpoint}{y}		
\newmacro{\dpointalt}{\alt\dpoint}		
\newmacro{\dpointaltalt}{\altalt\dpoint}		
\newmacro{\dpoints}{\mathcal{Y}}		
\newmacro{\dstate}{Y}		
\newmacro{\dvec}{v}		
\newmacro{\mat}{M}		
\newmacro{\hmat}{H}		
\newmacro{\ones}{\mathbf{1}}		
\newmacro{\eye}{I}		
\newmacro{\zer}{\mathbf{0}}		
\DeclarePairedDelimiter{\norm}{\lVert}{\rVert}		
\DeclarePairedDelimiterXPP{\dnorm}[1]{}{\lVert}{\rVert}{_{\ast}}{#1}		
\DeclarePairedDelimiterXPP{\onenorm}[1]{}{\lVert}{\rVert}{_{1}}{#1}		
\DeclarePairedDelimiterXPP{\twonorm}[1]{}{\lVert}{\rVert}{_{2}}{#1}		
\DeclarePairedDelimiterXPP{\supnorm}[1]{}{\lVert}{\rVert}{_{\infty}}{#1}		
\DeclarePairedDelimiterX{\braket}[2]{\langle}{\rangle}{#1\mathopen{}\delimsize\vert\mathopen{}#2}
\DeclarePairedDelimiterX{\inner}[2]{\langle}{\rangle}{#1,#2}		
\newcommand{\defeq}{\coloneqq}		
\newcommand{\from}{\colon}		
\newmacro{\source}{O}		
\newmacro{\sink}{D}		
\newmacro{\pair}{i}		
\newmacro{\pairalt}{j}		
\newmacro{\pairaltalt}{k}		
\newmacro{\nPairs}{N}		
\newmacro{\pairs}{\mathcal{\nPairs}}		
\newmacro{\route}{p}		
\newmacro{\routealt}{\alt\route}		
\newmacro{\routealtalt}{\altalt\route}		
\newmacro{\nRoutes}{P}		
\newmacro{\routes}{\mathcal{\nRoutes}}		
\newmacro{\flow}{f}		
\newmacro{\flowalt}{\alt\flow}		
\newmacro{\flowaltalt}{\altalt\flow}		
\newmacro{\flows}{\mathcal{F}}		
\newmacro{\load}{x}		
\newmacro{\loadalt}{\alt\load}		
\newmacro{\loadaltalt}{\altalt\load}		
\newmacro{\loads}{\mathcal{X}}		
\newop{\Opt}{Opt}		
\newop{\Sol}{Sol}		
\newop{\gap}{Gap}		
\newop{\orcl}{Or}		
\newmacro{\obj}{f}		
\newmacro{\objalt}{g}		
\newmacro{\sobj}{F}		
\newmacro{\gvec}{g}		
\newmacro{\oper}{A}		
\newmacro{\vecfield}{v}		
\newcommand{\sol}[1][\point]{#1^{\ast}}		
\newmacro{\vbound}{G}		
\newmacro{\lips}{L}		
\newmacro{\strong}{\alpha}		
\newmacro{\smooth}{\beta}		
\newop{\tspace}{T}		
\newop{\tcone}{TC}		
\newop{\dcone}{\tcone^{\ast}}		
\newop{\ncone}{NC}		
\newop{\pcone}{PC}		
\newop{\hull}{\Delta}		
\newmacro{\cvx}{\mathcal{C}}		
\newmacro{\subd}{\partial}		
\newmacro{\minmax}{L}		
\newmacro{\minvar}{\theta}		
\newmacro{\minvaralt}{\alt\minvar}		
\newmacro{\minvars}{\Theta}		
\newmacro{\maxvar}{\phi}		
\newmacro{\maxvaralt}{\alt\maxvar}		
\newmacro{\maxvars}{\Phi}		
\newmacro{\hreg}{h}		
\newmacro{\breg}{D}		
\newmacro{\mprox}{P}		
\newmacro{\mirror}{Q}		
\newmacro{\fench}{F}		
\newmacro{\hstr}{K}		
\newmacro{\depth}{H}		
\newmacro{\proxdom}{\points^{\hreg}}		
\DeclarePairedDelimiterXPP{\proxof}[2]{\mprox_{#1}}{(}{)}{}{#2}		
\newmacro{\zone}{\mathbb{D}}		
\newop{\Eucl}{\Pi}		
\newop{\logit}{\Lambda}		
\newop{\dkl}{KL}		
\newmacro{\point}{x}		
\newmacro{\pointalt}{\alt\point}		
\newmacro{\pointaltalt}{\altalt\point}		
\newmacro{\points}{\mathcal{K}}		
\newmacro{\intpoints}{\relint\points}		
\newmacro{\base}{p}		
\newmacro{\basealt}{q}		
\newmacro{\basealtalt}{u}		
\newmacro{\open}{\mathcal{U}}		
\newmacro{\closed}{\mathcal{C}}		
\newmacro{\cpt}{\mathcal{K}}		
\newmacro{\nhd}{U}		
\newop{\ex}{\mathbb{E}}		
\newop{\prob}{\mathbb{P}}		
\newop{\Var}{Var}		
\newop{\simplex}{\hull}		
\providecommand\given{}		
\DeclarePairedDelimiterXPP{\exof}[1]{\ex}{[}{]}{}{
\renewcommand\given{\nonscript\,\delimsize\vert\nonscript\,\mathopen{}} #1}
\DeclarePairedDelimiterXPP{\probof}[1]{\prob}{(}{)}{}{
\renewcommand\given{\nonscript\:\delimsize\vert\nonscript\:\mathopen{}} #1}
\DeclarePairedDelimiterXPP{\oneof}[1]{\one}{\{}{\}}{}{
\renewcommand\given{\nonscript\,\delimsize\vert\nonscript\,\mathopen{}} #1}
\newmacro{\sample}{\omega}		
\newmacro{\samples}{\Omega}		
\newmacro{\filter}{\mathcal{F}}		
\newmacro{\probspace}{(\samples,\filter,\prob)}		
\newmacro{\event}{E}       
\newmacro{\eventalt}{H}       
\newmacro{\mean}{\mu}		
\newmacro{\sdev}{\sigma}		
\newmacro{\variance}{\sdev^{2}}		
\newmacro{\step}{\gamma}		
\newmacro{\learn}{\eta}		
\newmacro{\proper}{\tau}		
\newmacro{\signal}{V}		
\newmacro{\error}{\xi}		
\newmacro{\noise}{Z}		
\newmacro{\bias}{b}		
\newmacro{\brown}{W}		
\newmacro{\serror}{\theta}		
\newmacro{\snoise}{\xi}		
\newmacro{\sbias}{\psi}		
\newmacro{\sbound}{M}		
\newmacro{\bbound}{B}		
\newmacro{\noisepar}{\sdev}		
\newmacro{\noisevar}{\variance}		
\newcommand{\parensnew}[1]{\left(#1\right)}
\newcommand{\absnew}[1]{\left|#1\right|}
\pgfplotsset{width=7cm,compat=1.8}
\newcommand{\currplay}[1][\state]{\debug{#1}_{\play,\run}}		
\newcommand{\nextplay}[1][\state]{\debug{#1}_{\play,\run+1}}		
\newmacro{\estate}{\hat{X}} 
\newmacro{\npay}{\hat{u}} 
\newmacro{\npayv}{\hat{v}} 
\newmacro{\epar}{\eps} 
\newmacro{\round}{n}
\newmacro{\subround}{k}
\newmacro{\Lipscon}{C}
\newmacro{\stcon}{K}
\newmacro{\paybound}{{\beta}'}
\newmacro{\noiselb}{\beta} 
\newmacro{\problb}{\pi} 
\newmacro{\conlevel}{\delta}
\newmacro{\PCsym}{G} 
\newmacro{\varbound}{\sigma^2}
\newmacro{\semibandit}{Limited first-order feedback}
\newmacro{\totalbound}{\mu}
\newmacro{\antivarpi}{\varrho }
\def\namedlabel#1#2{\begingroup
    #2%
    \def\@currentlabel{#2}%
    \phantomsection\label{#1}\endgroup
}
\newcommand{\manuallabel}[2]{\def\@currentlabel{#2}\label{#1}}
\newmacro{\rvdrift}{\mathsf{drift}}
\newmacro{\rvnoise}{\mathsf{noise}}
\newmacro{\rvbias}{\mathsf{bias}}
\newacro{LHS}{left-hand side}
\newacro{RHS}{right-hand side}
\newacro{iid}[i.i.d.]{independent and identically distributed}
\newacro{lsc}[l.s.c.]{lower semi-continuous}
\newacro{NE}{Nash equilibrium}
\newacro{CCE}{coarse correlated equilibrium}
\newacro{MAB}{multi-armed bandit}
\newacro{FTRL}{``follow the regularized leader''}
\newacro{EW}{exponential weights}
\newacro{IWE}{importance-weighted estimator}
\title
[Stable and unstable equilibria under regularized learning]
{Survival of the strictest:
Stable and unstable equilibria under regularized learning with partial information}
\begin{document}

\maketitle
\begin{abstract}

In this paper, we examine the \acl{NE} convergence properties of no-regret learning in general $\nPlayers$-player games.
For concreteness, we focus on the archetypal \acdef{FTRL} family of algorithms, and we consider the full spectrum of uncertainty that the players may encounter \textendash\ from noisy, oracle-based feedback, to bandit, payoff-based information.
In this general context, we establish a comprehensive equivalence between the stability of a \acl{NE} and its support:
\emph{a \acl{NE} is stable and attracting with arbitrarily high probability if and only if it is strict} (\ie each equilibrium strategy has a unique best response).
This equivalence extends existing continuous-time versions of the ``folk theorem'' of evolutionary game theory to a bona fide algorithmic learning setting,
and it provides a clear refinement criterion for the prediction of the day-to-day behavior of no-regret learning in games.

\end{abstract}


\acresetall		


\section{Introduction}
\label{sec:introduction}

The prototypical framework for online learning in games can be summarized as follows:
\begin{enumerate}
[itemsep=0pt]
\item
At each stage of the process, every participating agent chooses an action from some finite set.
\item
All agents receive a reward based on the actions of all other players and their individual payoff functions (assumed a priori unknown).
\item
The players record their rewards and any other feedback generated during the payoff phase, and the process repeats.
\end{enumerate}
This multi-agent framework has both important similarities and major differences with \emph{single-agent} online learning.
Indeed, if we isolate a single, focal player and abstract away all others,
we essentially recover a \ac{MAB} problem \textendash\ stochastic or adversarial, depending on the assumptions for the non-focal players \citep{CBL06,BCB12}.
In this case, the most widely used figure of merit is the agent's \emph{regret}, \ie the difference between the agent's cumulative payoff and that of the best fixed action in hindsight.
Accordingly, much of the literature on online learning has focused on deriving regret bounds that are min-max optimal, both in terms of the horizon $\nRuns$ of the process, as well as the number of actions $\nPures$ available to the focal player.

On the other hand, from a game-theoretic standpoint, the main question that arises is whether players eventually settle on an equilibrium profile from which no player has an incentive to deviate.
In this regard, a ``folk'' result states that the empirical frequency of play under no-regret play converges to the game's set of \acdefp{CCE} \citep{Han57,HMC00}.
However, there are two key caveats with this result.
First, \acp{CCE} are considerably weaker than \aclp{NE}, to the extent that they fail even the most basic postulates of rationalizability \citep{DF90}:
as was shown by \citet{VZ13}, \acp{CCE} may be supported \emph{exclusively} on \emph{strictly dominated} strategies, even in simple, symmetric two-player games.
Second, the convergence of the empirical mean does not carry any tangible guarantees for the players' day-to-day behavior:
under this type of convergence, the player's best payoff over time could be close to that of a \acl{NE}, but the players might otherwise be spending arbitrarily long periods of time on dominated strategies.

The above is just a well-known example of the convergence failures of no-regret learning in games with a possibly exotic equilibrium structure.
More to the point, even when the underlying game admits a \emph{unique} \acl{NE}, recent works have shown that no-regret algorithms \textendash\ such as the popular multiplicative weights update (MWU) method \textendash\ could still lead to chaotic \citep{PPP17,CP19,MP17} or Poincaré recurrent / cycling behavior \citep{HSV09,MS16,MPP18}.
From a convergence viewpoint, all these results can be seen as instances of a much more general impossibility result at play:
\emph{there are no uncoupled dynamics leading to \acl{NE} in \emph{all} games} \citep[\citeauthor{HMC03},][]{HMC03}.%
\footnote{``Uncoupled'' means here that each player's update rule does not depend explicitly on the payoffs of other players.}
Since no-regret dynamics are by definition unilateral, they are \emph{a fortiori} uncoupled, so this result shatters any hope of obtaining a universal \acl{NE} convergence result for the players' day-to-day behavior.

\para{Our contributions}

In view of the above, a critical question that arises is the following:
\emph{Is there a class of \aclp{NE} that consistently attract no-regret processes?}
Conversely,
\emph{are all \aclp{NE} equally likely to emerge as outcomes of a no-regret learning process?}

To address these questions in as general a setting as possible, we focus on the \acdef{FTRL} family of algorithms:
this is arguably the most widely used class of dynamics for no-regret learning in games, and it includes as special cases the seminal multiplicative weights\,/\,EXP3 algorithms \citep{SS11,SSS06,AHK12}.
In terms of feedback, we also consider a flexible, context-agnostic template in which players are only assumed to have access to an inexact model of their payoff vectors at a given stage.
This model for the players' feedback covers a broad range of modeling assumptions,
such as
\begin{enumerate*}
[(\itshape a\upshape)]
\item
the case where players can retroactively compute \textendash\ or otherwise observe \textendash\ their full payoff vectors (\eg as in routing games);
and
\item
the \emph{bandit} case, where players only observe their in-game payoffs and have no other information on the game being played.
\end{enumerate*}

The range of modeling assumptions covered by our framework is quite extensive, so one would likewise expect different, context-specific answers to these questions \textendash\ presumably with equilibria becoming ``less stable'' as information becomes ``more scarce''.
This expectation is justified by the behavior of no-regret learning in single-agent environments:
there, the type of information available to the learner has a dramatic effect on the achieved regret minimization rate.
Nevertheless, we show that this conjecture is \emph{false:}
as far as the algorithms' equilibrium convergence properties are concerned, the learning dynamics described above are all \emph{equivalent}.
\smallskip

In more detail, we show that all \ac{FTRL} algorithms under study enjoy the following properties:
\begin{enumerate}
[leftmargin=\parindent,label={\itshape \alph*\upshape)}]
\item
\emph{Strict \aclp{NE} are \emph{stochastically} asymptotically stable} \textendash\ \ie they are stable and attracting with arbitrarily high probability.
\item
\emph{Only strict \aclp{NE} have this property:}
mixed \aclp{NE} supported on more than one strategies are inherently unstable from a learning viewpoint.
\end{enumerate}
\smallskip
\noindent
We are not aware of a similar result in the literature at this level of generality (\ie including models with bandit feedback),
and we believe that this equivalence represents an important refinement criterion for the prediction of the day-to-day behavior of no-regret learners in the face of uncertainty and lack of perfect information.

\para{Related work}

To put our contributions in the proper context, we provide below an account of relevant works in the literature, classified along the two directions of our main result: ``strictness$\implies$stability'' and ``stability$\implies$strictness''.

\begin{enumerate}
[leftmargin=\parindent,label={\bfseries\Roman*.},itemsep=0pt]

\item
\textbf{Strictness$\implies$Stability.}
Analyzing the convergence of game-theoretic learning dynamics has generated a vast corpus of literature that is impossible to survey here.
Nonetheless, an emerging theme in this literature is the focus on specific classes of games (such as potential games or $2^{\nPlayers}$ games).
As a purely indicative \textendash\ and highly incomplete \textendash\ list, we cite here the works of \citef{LC05} and \citef{Les06}, \citef{CMS10}, \citef{KPT11}, \citef{CGM15}, \citef{SALS15}, and \citef{CHM17-NIPS}, who provide a range of equilibrium convergence results in potential, $2^{\nPlayers}$, and $(\lambda,\mu)$-smooth games, under different feedback assumptions \textendash\ from payoff vector observations \citep{KPT11,SALS15} to bandit \citep{LC05,Les06,CMS10,CHM17-NIPS}.
By contrast, our focus is determining the stochastic stability of a class of \emph{equilibria} \textendash\ not \emph{games}. %

As far as we are aware, the only comparable results in this literature concern an idealized continuous-time, deterministic, full-information version of our setting, which is common in applications to population biology and evolutionary game theory.
In this context, building on earlier results on the replicator dynamics \cite{Wei95,HS98}, the authors of \cite{MS16} showed that strict \aclp{NE} are asymptotically stable under the continuous-time dynamics of \ac{FTRL}.
However, we stress here again that these results only concern continuous-time, deterministic dynamical systems with an inherent full-information assumption;
we are not aware of a result providing convergence to strict \aclp{NE} with bandit feedback.

\item
\textbf{Stability$\implies$Strictness.}
In the converse direction, a related result in the literature on evolutionary games is that only strict \aclp{NE} are asymptotically stable under the (multi-population) replicator dynamics \cite{Wei95,San10,HS03},
a continuous-time, deterministic dynamical system which can be seen as the ``mean-field'' limit of the exponential weights algorithm \cite{Rus99,Sor09,HSV09}.
In a much more recent paper \cite{FVGL+20}, this implication was extended to the dynamics of \ac{FTRL}, but always in a deterministic, full-information, continuous-time setting.
In this regard, our results are aligned with \cite{FVGL+20};
however, other than this high-level conceptual link, there is no precise connection, either at the level of implications or at the level of proofs.
Specifically, the analysis of \cite{FVGL+20} relies crucially on volume-conservation arguments that are neither applicable nor relevant in a discrete-time stochastic setting \textendash\ where the various processes involved could jump around stochastically without any regard for volume contraction or expansion.
\end{enumerate}

\vspace{-0.25cm}
\para{Proof techniques}Learning with partial information is an inherently stochastic process, so our results are also stochastic in nature \textendash\ hence the requirement for asymptotic stability with arbitrarily high probability.
This constitutes a major point of departure from continuous-time models of learning \cite{MS16,FVGL+20}, so
our proof techniques are also radically different as a result.
The principal challenge in our proof of stability of strict \aclp{NE} comes in controlling the aggregation of error terms with possibly unbounded variance (coming from inverse propensity scoring of bandit-type observations).
Because of this, stochastic approximation techniques that have been used to show convergence with $L^{2}$-bounded feedback \cite{MZ19} cannot be applied in this setting;
we achieve this control by applying a sharp version of the Doob-Kolmogorov maximal inequality to control equilibrium deviations with high probability.
In the converse direction, the crucial argument in the proof of the \emph{instability} of mixed equilibria is a direct probabilistic estimate which leverages a non-degeneracy argument for the noise entering the process;
we are not aware of other works using a similar technique.
%
%
%

\section{Preliminaries}
\label{sec:prelims}
\para{The stage game}
Throughout this work we will focus on normal form games with a finite number of players and a finite number of actions per player.
Formally, such a game is defined as a tuple $\fingame = \fingameall$ with the following primitives:
\begin{itemize}
[leftmargin=\parindent,itemsep=0pt]
\item
A finite set of \emph{players} \textendash\ or \emph{agents} \textendash\ indexed by $\play \in \players = \braces{1,\dotsc,\nPlayers}$.
\item
A finite set of \emph{actions} \textendash\ or \emph{pure strategies} \textendash\ indexed by $\pure_{\play} \in \pures_{\play} = \braces{1,\dotsc,\nPures_{\play}}$, $\play\in\players$.
Players can also play \emph{mixed strategies}, which represent probability distributions $\strat_{\play} \in \strats_{\play} \defeq \simplex(\pures_{\play})$;
in this case, we will write $\strat_{\play\pure_{\play}}$ for the probability that player $\play\in\players$ selects $\pure_{\play}\in\pures_{\play}$.
Aggregating over all players, we will also write $\strat = (\strat_{1},\dotsc,\strat_{\nPlayers})$ for the players' \emph{mixed strategy profile} and $\strats \defeq \prod_{\play} \strats_{\play}$ for the set thereof.
Finally, when we want to focus on the strategy (or action) of a particular player $\play\in\players$, we will use the shorthand $(\strat_{\play};\strat_{-\play}) \defeq (\strat_{1},\dotsc,\strat_{\play},\dotsc,\strat_{\nPlayers})$ \textendash\ and, similarly, $(\pure_{\play};\pure_{-\play})$ for pure strategies.
\item
An ensemble of \emph{payoff functions} $\pay_{\play} \from \pures \to \R$ where $\pures \defeq \prod_{\play} \pures_{\play}$ is the space of all pure strategy profiles.
The expected payoff of player $\play$ in a mixed strategy profile $\strat\in\strats$ is then given by
\begin{equation}
\label{eq:pay}
\pay_{\play}\parens{\strat}
	\equiv \pay_{\play}\parens{\strat_{\play};\strat_{-\play}}
	= \sum_{\pure_{1}\in\pures_{1}} \dotsi \sum_{\pure_{\nPlayers}\in\pures_{\nPlayers}}
		\pay_{\play}\parens{\pure_{1},\dotsc,\pure_{\nPlayers}}
			\cdot \strat_{1,\pure_{1}} \dotsm \strat_{\nPlayers,\pure_{\nPlayers}}
\end{equation}
where $\pay_{\play}\parens{\pure_{1},\dotsc,\pure_{\nPlayers}}$ is the payoff of player $\play$ in the action profile $\pure =\parens{\pure_{1},\dotsc,\pure_{\nPlayers}}\in \pures$.
\end{itemize}

For posterity, we will also write
\(
\payv_{\play\pure_{\play}}\parens{\strat}
	= \pay_{\play} \parens{\pure_{\play};\strat_{-\play}}
\)
for the payoff that player $\play$ would have gotten by playing $\pure_{\play}\in\pures_{\play}$ against the mixed strategy profile $\strat_{-\play}$ of all other players.
In this way, the \emph{mixed payoff vector} of the $\play$-th player will be
\begin{equation}
\label{eq:indpay-mixed}
\payv_{\play}(\point)
	= (\payv_{\play\pure_{\play}}(\point))_{\pure_{\play}\in\pures_{\play}}
\end{equation}
and we will write $\payv(\point) = (\payv_{1}(\point),\dotsc,\payv_{\nPlayers}(\point))$ for the ensemble thereof.
For notational convenience, we will also set $\dspace_{\play} = \R^{\pures_{\play}}$ and $\dspace = \prod_{\play}\dspace_{\play}$ for the space of payoff vectors and profiles respectively.
Finally, in a slight abuse of notation, we will identify $\pure_{\play}$ with the mixed strategy that assigns all probability to $\pure_{\play}$, and we will denote the corresponding \emph{pure payoff vector} as
\(
\payv_{\play}\parens{\pure}
	= \parens{\pay_{\play}\parens{\pure_{\play};\pure_{-\play}}}_{\pure_{\play}\in\pures_{\play}}.
\)
The distinction between pure and mixed payoff vectors will become important later on, when we discuss the information at each player's disposal.

\para{\acl{NE}}

In this general context, the most widely used solution concept is that of a \acl{NE}, \ie a mixed strategy profile that discourages unilateral deviations.
Formally, $\eq$ is a \acli{NE} of $\fingame$ if
\begin{equation}
\label{eq:NEgen}
\tag{NE}
\pay_{\play}(\eq)
	\geq \pay_{\play}(\strat_{\play};\eq_{-\play})
	\quad
	\text{for all $\strat_{\play}\in\strats_{\play}$ and all $\play\in\players$}.
\end{equation}
The set of pure strategies supported at the equilibrium component $\eq_{\play} \in \strats_{\play}$ of each player will be denoted by $\supp\parens{\eq_{\play}} = \setdef{\pure_{\play}\in\pures_{\play}}{\eq_{\play\pure_{\play}} > 0}$.
Accordingly, \aclp{NE} can be equivalently characterized by means of the variational inequality
\begin{equation}
\label{eq:NE}
\payv_{\play\eq[\pure_{\play}]}(\eq)
	\geq \payv_{\play\pure_{\play}}(\eq)
	\quad
	\text{for all $\eq[\pure_{\play}]\in\supp\parens{\eq_{\play}}$ and all $\pure_{\play} \in \pures_{\play}$, $\play\in\players$}.
\end{equation}

\noindent
The above characterization gives rise to the following classification of \aclp{NE}:
\begin{itemize}
\setlength{\itemsep}{0pt}
\item
$\eq$ is a \emph{pure equilibrium} if $\supp\parens{\eq_{\play}}$ only contains a single strategy for all $\play\in\players$.
\item
$\eq$ is a \emph{mixed equilibrium} in any other case;
in particular, if $\supp\parens{\eq_{\play}} = \pures_{\play}$ for all $\play\in\players$, we say that $\eq$ is \emph{fully mixed}.
\end{itemize}

By definition,
pure equilibria correspond to vertices of $\strats$,
fully mixed equilibria lie in the relative interior $\relint(\strats)$ of $\strats$,
and, more generally,
mixed equilibria lie in the relative interior of the face of the simplex spanned by the support of each player's equilibrium component.
A further distinction between \aclp{NE} that is inherited by the inequality \eqref{eq:NE} is as follows:
if \eqref{eq:NE} holds as a strict inequality for all $\pure_{\play}\in\pures_{\play}\setminus \supp\parens{\eq_{\play}}$, $\play \in\players$, the equilibrium in question is said to be \emph{quasi-strict} \cite{FT91}.
Quasi-strict equilibria have the defining property that all pure best responses are played with positive probability;
it is also well known that all \aclp{NE} in all but a measure-zero set of games are quasi-strict.
For this reason, the property of having a quasi-strict equilibrium is generic, and games that enjoy this property are called themselves \emph{generic} [
Specifically, the set of games with \aclp{NE} that are not quasi-strict is \emph{meager} in the Baire category sense.]

We stress here that quasi-strict equilibria could be either mixed or pure:
for example, the equilibrium of Matching Pennies is fully mixed and quasi-strict, whereas the equilibrium of the Prisoner's dilemma is quasi-strict and pure.
In this last case, when a quasi-strict equilibrium is pure, it will be called \emph{strict}:
any deviation from an equilibrium strategy results in a strictly worse payoff.

\section{Regret minimization and regularized learning}
\label{sec:FTRL}

A key requirement in the context of online learning is the minimization of the players' regret, \ie the cumulative payoff difference between each player's chosen action and the best possible action in hindsight over a given horizon of play $\nRuns$.
Formally, given a sequence of play $\state_{\run} \in \pspace$, $\run=\running$, the (external) \emph{regret} of player $\play\in\players$ is defined as
\begin{equation}
\label{eq:regret}
\reg_{\play}(\nRuns)
	= \max_{\strat_{\play}\in\strats_{\play}}
		\sum_{\run=\start}^{\nRuns} \bracks{
			\pay_{\play}(\strat_{\play};\state_{-\play,\run}) - \pay_{\play}(\state_{\play,\run};\state_{-\play,\run})
			} 
\end{equation}
and we will say that player $\play$ has no regret if $\reg_{\play}\parens{\nRuns} = o\parens{\nRuns}$. 

One of the most widely used online learning schemes to achieve this requirement is the so-called \acdef{FTRL} family of algorithms \cite{SSS06,SS11}.
Heuristically, at each stage of the learning process, \ac{FTRL} prescribes a mixed strategy that maximizes the player's (perceived) cumulative payoff modulo a regularization penalty whose role is to ``smooth out'' the transition between strategies during play.
Formally, this leads to the round-by-round recursive rule
\begin{equation}
\label{eq:FTRL}
\tag{FTRL}
\begin{aligned}
\state_{\play,\run}
	&= \mirror_{\play}(\dstate_{\play,\run})
	\\
\dstate_{\play,\run+1}
	&= \dstate_{\play,\run}
		+ \step_{\run} \npayv_{\play,\run}
\end{aligned}
\end{equation}
where
$\mirror_{\play}\from\dspace_{\play} \to \strats_{\play}$ denotes the ``choice map'' of player $\play\in\players$,
$\step_{\run} > 0$ is a ``learning rate'' parameter such that $\sum_{\run}\step_{\run} = \infty$,
and
$\npayv_{\play,\run}$ is a ``payoff signal'' that provides an estimate for the mixed payoffs of player $\play$ at stage $\run$.
We discuss each of these components in detail below.

\subsection{The feedback model}
\label{sec:feedback}

Depending on the specific framework at play, the modeling details concerning the feedback received by the players may vary wildly.
For example, when modeling congestion in a city, it is reasonable to assume that commuters can estimate the time it would have taken them to get to their destination via a different route \textendash\ \eg by means of a GPS service or an app like GoogleMaps or Waze.
By contrast, in applications of online learning to auctions and online advertising, it is not clear how a player could estimate the payoff of actions they did not play.

To account for as broad a range of feedback models as possible, we will take a context-agnostic approach and assume that each player receives a ``black-box'' model of their payoff vector of the form
\begin{equation}
\label{eq:feedback}
\npayv_{\run}
	= \payv\parens{\pstate_{\run}}
	+ \error_{\run}
\end{equation}
for some abstract error process $\error_{\run} = (\error_{\play,\run})_{\play\in\players}$.
To differentiate between random (zero-mean) and systematic (non-zero-mean) errors, we will further decompose $\error_{\run}$ as
\(
\error_{\run}
	= \noise_{\run} + \bias_{\run},
\)
where
\begin{equation}
\bias_{\run}
	= \exof{\error_{\run} \given \filter_{\run}}
	\quad
	\text{and}
	\quad
\exof{\noise_{\run} \given \filter_{\run}}
	= 0
\end{equation}
with $\filter_{\run}$ denoting the history of $\state_{\run}$ up to stage $\run$ (inclusive)
\footnote{Of course, since the feedback signal is generated only \emph{after} the player chooses a strategy, $\npayv_{\run}$ is not $\filter_{\run}$-measurable in general.}.
We may then characterize the input signal $\curr[\npayv]$ by means of the following statistics:
\vspace{-\smallskipamount}
\begin{subequations}
\label{eq:sigbounds}
\begin{alignat}{3}
\label{eq:bbound}
a)
	\quad
	&\textit{Bias:}
	&\hspace{2em}
	&\exof{\dnorm{\bias_{\run}} \given \filter_{\run}}
		\leq \bbound_{\run}
	\\
b)
	\quad
	&\textit{Mean square:}
	&\hspace{2em}
	&\exof{\dnorm{\npayv_{\run}}^{2} \given \filter_{\run}}
		\leq \sbound_{\run}^{2}
	\hspace{15em}
\end{alignat}
\end{subequations}
In the above, $\bbound_{\run}$ and $\sbound_{\run}$ represent deterministic bounds on the bias and variance of the feedback signal $\npayv_{\run}$.
For concreteness, we will also make the following blanket assumptions:
\begin{enumerate}
[left=1ex,label={\upshape({A}\arabic*)}]
\item
\label{asm:bias}
\emph{Bias control:}
$\lim_{\run\to\infty} \bbound_{\run} = 0$ and $\sum_{\run} \step_{\run} \bbound_{\run} < \infty$.
\item
\label{asm:variance}
\emph{Variance control:}
$\sum_{\run} \step_{\run}^{2} \sbound_{\run}^{2} < \infty$.
\item
\label{asm:degen}
\emph{Generic observation errors at equilibrium:}
For every mixed \acl{NE} $\eq$ of $\fingame$ and for all $\run=\running$, there exists a player $\play\in\players$ and strategies $a,b\in\supp(\eq_{\play})$ such that
\begin{equation}
\probof{\abs{\npayv_{\play a,\run}-\npayv_{\play b,\run}}\geq \noiselb \given \filter_{\run}}
	> 0
	\quad
	\text{for all sufficiently small $\noiselb > 0$}.
\end{equation}
\end{enumerate}
The formulation of these hypotheses has been kept intentionally abstract because we have not made any modeling assumptions for how the players' payoff signals are generated.
In this regard, they are to be construed as an ``inexact model'' that allows for a wide variety of settings;
as an application, we illustrate below how these assumptions are verified in two widely used learning frameworks.

\begin{model}
[Oracle-based feedback]
\label{semibandit feedback}
Assume that each player chooses an action based on a given mixed strategy.
Then, once this procedure has been completed, an oracle reveals to each player the payoffs corresponding to their pure strategies given the other players' chosen strategies \textpar{in the congestion example, this oracle could be Waze or a GPS device}.
Formally, at each round $\run$, every player $\play\in\players$ picks an action $\currplay[\pure]\in\pures_{\play}$ based on $\currplay\in\strats_{\play}$ and observes the pure payoff vector $\payv_{\play}\parens{\pure_{\run}}\equiv \parens{\pay_{\play}\parens{\pure_\play;\pure_{-\play,\run}}}_{\pure_\play\in\pures_{\play}}$.
Then the player's feedback signal is $\npayv_{\play,\run} = \payv_{\play}\parens{\pure_{\run}}$, which is a special case of the model \eqref{eq:feedback} with $\noise_{\run} = \payv\parens{\pstate_{\run}} - \payv\parens{\pure_{\run}}$ and $\bias_{\run} = 0$.
In more detail, we have:
\begin{itemize}[topsep=.5ex,leftmargin=\parindent]
\setlength{\itemsep}{0pt}
\setlength{\parskip}{.2ex}
\item
\ref{asm:bias} is trivial because $\bias_{\run} =0$.
\item
\ref{asm:variance} is satisfied as long as $\sum_{\run}\step_{\run}^{2} <\infty$
\textpar{because $\sup_{\run}\exof{\dnorm{\npayv_{\run}}^{2}} \leq \max_{\strat} \dnorm{\payv(\strat)}^{2} < \infty$}.
\item
\ref{asm:degen} is proved in \cref{appendix instability}. 
\end{itemize}
\end{model}

\begin{model}
[Payoff-based feedback]
\label{bandit feedback}
Assume that each player picks an action
based on some mixed strategy as above;
however, players now only observe their realized payoffs $\pay_{\play}\parens{\pure_{\play,\run};\pure_{-\play,\run}}$.
This is the standard model for \aclp{MAB} \cite{CBL06,BCB12}, and it is also known as the ``bandit feedback'' setting.
In this case, players can estimate their payoff vectors by means of the \acl{IWE}:
\begin{equation}
\label{eq:IWE}
\tag{IWE} 
\npayv_{\play\pure_{\play},\run}
	= \frac{\oneof{\pure_{\play,\run} = \pure_{\play}}}{\estate_{\play\pure_{\play,\run}}}
		\pay_{\play}(\pure_{\run})
\end{equation}
where $\estate_{\play,\run} = \parens{1-\epar_{\run}}\pstate_{\play,\run} + \epar_{\run}/\abs{\pures_{\play}}$ is the mixed strategy of the $\play$-th player at stage $\run$.
Compared to $\state_{\play,\run}$, the player's actual sampling strategy is recalibrated by an explicit exploration parameter $\epar_{\run}\to0$ whose role is to stabilize the learning process by controlling the variance of \eqref{eq:IWE}.
The idea is that even if a strategy has zero probability to be chosen under $\state_{\run}$, it will still be sampled with positive probability thanks to the mixing factor $\epar_{\run}$.

A standard calculation \textpar{that we defer to \cref{appendix instability}} shows that \eqref{eq:IWE} can be recast in the general form \eqref{eq:feedback} with $\bbound_{\run} = \bigoh(\epar_{\run})$ and $\sbound_{\run}^{2} = \bigoh(1/\epar_{\run})$.
We then have:
\begin{itemize}[topsep=.5ex,leftmargin=\parindent]
\setlength{\itemsep}{0pt}
\setlength{\parskip}{.2ex}
\item
\ref{asm:bias} is satisfied as long as $\epar_{\run}\to0$ and $\sum_{\run} \step_{\run}\epar_{\run} < \infty$.
\item
\ref{asm:variance} is satisfied as long as $\sum_{\run} \step_{\run}^{2} / \epar_{\run} < \infty$.
\item
\ref{asm:degen} is proved in \cref{appendix instability}. 
\end{itemize}

\end{model}

\begin{remark*}
The above conditions for the method's learning rate and exploration parameters can be achieved by using schedules of the form $\step_{\run} \propto 1/\run^{\pexp}$ and $\epar_{\run} \propto 1/\run^{\qexp}$ with $\pexp+\qexp>1$ and $2\pexp-\qexp > 1$.
A popular choice is $\pexp = 2/3 + \delta$ and $\qexp = 1/3 + \delta$ for some arbitrarily small $\delta>0$ \textendash\ or $\delta=0$ and including an extra logarithmic factor, \cf \cite{Sli19} and references therein.
\end{remark*}

\subsection{Regularization}
\label{sec:regularizers}

The second component of the \ac{FTRL} method is the players' ``choice map'' $\mirror_{\play}\from\dspace_{\play}\to\strats_{\play}$.
Because the players' score variables $\dstate_{\play,\run}$ essentially represent an estimate of each strategy's cumulative payoff over time, $\mirror_{\play}$ is defined as a ``regularized'' version of the best-response correspondence $\dpoint_{\play} \mapsto \argmax_{\strat_{\play}\in\strats_{\play}}\braces{\inner{\dpoint_{\play}}{\strat_{\play}}}$ (the regularization being necessary to avoid prematurely committing to a strategy).
On that account, we will consider \emph{regularized best responses} of the general form
\begin{equation}
\label{eq:choice}
\mirror_{\play}(\dpoint_{\play})
	= \argmax_{\strat_{\play}\in\strats_{\play}} \braces{ \inner{\dpoint_{\play}}{\strat_{\play}} - \hreg_{\play}(\strat_{\play}) }.
\end{equation}
In the above, each player's \emph{regularizer} $\hreg_{\play}\from\strats_{\play}\to\R$ is defined as $\hreg_{\play}(\strat_{\play}) = \sum_{\pure_{\play}\in\pures_{\play}} \theta_{\play}(\point_{\play})$ for some ``kernel function'' $\theta_{\play}\from[0,1]\to\R$ with the following properties:
\begin{enumerate*}[(\itshape i\hspace*{.5pt}\upshape)] 
\item
\manuallabel{h properties}{reguralizer's properties}
$\theta_{\play}$ is \emph{continuous} on $[0,1]$;
\item
$C^{2}$-smooth on $(0,1]$;
and
\item
$\inf_{[0,1]} \theta''_{\play} >0$.
\end{enumerate*}
%
Of course, different regularizers give rise to different instances of \eqref{eq:FTRL};
for concreteness, we present below two prototypical examples thereof.

\begin{example}
[Multiplicative/Exponential weights update]
\label{MWU}
A popular choice of regularizer is the \textpar{negative} entropy $\hreg_i\parens{x} = \sum_{i}x_i\log x_i$, which leads to the \emph{logit choice} map $\Lambda_i\parens{y} = \exp\parens{y_i}/\sum_j \exp\parens{y_j}$ and the algorithm known as \emph{multiplicative weights update} \textpar{MWU}, \cf \cite{Vov90,LW94,ACBFS95,AHK12,SS11}.
\end{example}

\begin{example}[Euclidean projection]
\label{Projection}
Another popular regularizer is the quadratic penalty $\hreg_i\parens{x} = \sum_i{x_i}^{2}/2$, which yields the \emph{payoff projection} choice map $\Eucl_i\parens{y} = \argmin_{x\in \Delta}\norm{y-x}^{2}$, \cf \cite{Zin03,LS08}.
\end{example}

\section{Analysis and Results}
\label{sec:results}

To understand the long-run behavior of \eqref{eq:FTRL}, we will focus on the following overarching question:
\emph{Which \aclp{NE} hold convergence and stability properties and how are these properties affected by the uncertainty in the players' feedback model?}

We provide the technical groundwork for our answers in \cref{sec:stability} below;
subsequently, we state our results in \cref{sec:statements}, and present the technical analysis in \cref{sec:Proof Techniques}.

\subsection{Asymptotic Stability}
\label{sec:stability} 

The first thing to note in this general context is that a game may admit several \aclp{NE}, both mixed and pure.
As a result, global convergence to an equilibrium from all initializations is not possible;
for this reason, we will focus on the notion of (\emph{stochastic}) \emph{asymptotic stability} \cite{HS98,San10,Kha12}.
Heuristically, an equilibrium is \emph{stochastically stable} if any sequence of play that begins close enough to the equilibrium in question, remains close enough with high probability;
in addition, if the sequence of play eventually converges to said equilibrium, then we say that it is stochastically asymptotically stable.
Formally, we have the following definition.

\begin{definition}
\label{stability}
Fix some arbitrary confidence level $\conlevel>0$.
Then $\eq\in\strats$ is said to be
\begin{enumerate}
[topsep=.5ex,label={\bfseries \arabic*.}]
\item
\textbf{Stochastically stable}
if, for every neighborhood $\nhd$ of $\eq$ in $\strats$, there exists a neighborhood $\nhd_{\start}$ of $\eq$ such that
whenever $\state_{\start} = \mirror(\dstate_{\start}) \in \nhd_{\start}$, we have
\begin{equation}
\probof{ \text{$\state_{\run}\in\nhd$ for all $\run=\running$} }
	\geq 1-\conlevel
\end{equation}
whenever $\state_{\start} = \mirror(\dstate_{\start}) \in \nhd_{\start}$.

\item
\textbf{Attracting}
if there exists a neighborhood $\nhd_{0}$ of $\eq$ such that
\begin{equation}
\probof{ \text{$\lim\nolimits_{\run\to\infty} \state_{\run} = \eq$}}
	\geq 1-\conlevel
\end{equation}
whenever $\state_{0} = \mirror(\dstate_{0}) \in \nhd_{\start}$.

\item
\textbf{Stochastically asymptotically stable}
if it is stochastically stable and attracting.
\end{enumerate}
\end{definition}

\noindent
\Cref{stability} will be the mainstay of our analysis and results, so some remarks are in order.

\begin{remark}
A first intricate detail in the above definition is the high probability requirement:
indeed, under uncertainty, a single unlucky estimation of the players' payoff vector could drive $\state_{\run}$ away from any neighborhood of $\eq$, possibly never to return.
In this regard, local stability results cannot be expected to hold with probability $1$, hence the requirement to hold with some arbitrary confidence level in the definition above.
\end{remark}

\begin{remark}
\label{stability remark}
Another remark worth making is the requirement $\state_{\start} = \mirror(\dstate_{\start}) \in \nhd_{\start}$ that indicates that some strategies in $\strats$ are not admissible as initial states.
Going back to the two archetypal examples of \eqref{eq:FTRL}, \cref{MWU,Projection}, there is a dichotomy in the properties of the corresponding mirror maps.
On the one hand, the kernel of the Euclidean/quadratic regularizer is differentiable on all of $[0,1]$.
On the other hand, the derivative of the kernel of the negative Shannon-entropy goes to $-\infty$ as $x$ goes to $0$.
This means that in the latter the boundaries are off the limits and inevitably some initial conditions do not belong in $\im\mirror$.
We discuss this dichotomy extensively in \cref{ss:steep vs non-steep}.
\end{remark}

\subsection{Main Results}%
\label{sec:statements}

We are now in a position to state our main results.
The informal version is as follows.

\begin{center}
\begin{tcolorbox}[enhanced,width=0.99\textwidth, drop fuzzy shadow southwest, 
                    boxrule=0.4pt,colframe=black!80!black,colback=black!10]\label{main theorem}
\vspace{0.1cm}
\centering
\textbf{Main Theorem.}
Suppose that Assumptions \ref{asm:bias}\textendash \ref{asm:degen} hold.
Then:
\\
$\eq$ is a strict \acl{NE}
	$\iff$
$\eq$ is stochastically asymptotically stable under \eqref{eq:FTRL}
\end{tcolorbox}
\end{center}

Formally, we get the following precise statements and corollaries for the specific feedback models described in \cref{sec:feedback}.

\begin{theorem}
\label{theorem:if}
Let $\eq\in\strats$ be a strict \acl{NE} of $\fingame$.
If \eqref{eq:FTRL} is run with inexact payoff feedback satisfying Assumptions \ref{asm:bias} and \ref{asm:variance}, then $\eq$ is stochastically asymptotically stable.
\end{theorem}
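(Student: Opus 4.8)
The plan is to use the Fenchel coupling of the players' regularizers as an energy function and to show that, near a strict equilibrium, it behaves as an almost-supermartingale with a strictly negative mean drift. Since a strict equilibrium is pure, $\eq$ is a vertex of $\strats$ and each player $\play$ has a unique best response $\eq[\pure_\play]$, with $\payv_{\play\eq[\pure_\play]}(\eq) > \payv_{\play\pure_\play}(\eq)$ for every $\pure_\play \neq \eq[\pure_\play]$. Write $\fench(\eq,\dstate) = \sum_\play[\hreg_\play(\eq_\play) + \hreg_\play^\ast(\dstate_\play) - \inner{\dstate_\play}{\eq_\play}]$ for the aggregate Fenchel coupling, where $\hreg_\play^\ast$ is the convex conjugate of $\hreg_\play$. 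By Fenchel--Young $\fench \geq 0$, and the $\hstr$-strong convexity of the kernels gives $\fench(\eq,\dstate) \geq \tfrac{\hstr}{2}\twonorm{\mirror(\dstate)-\eq}^2$; moreover $\fench(\eq,\dstate_\run)\to 0$ forces $\state_\run\to\eq$, the converse comparison near the vertex depending on the steep/non-steep nature of the regularizer (\cf \cref{stability remark}).

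First I would record the one-step descent inequality
\[
\fench(\eq,\dstate_{\run+1}) \leq \fench(\eq,\dstate_\run) + \step_\run\inner{\npayv_\run}{\state_\run - \eq} + \frac{\step_\run^2}{2\hstr}\dnorm{\npayv_\run}^2 ,
\]
a standard consequence of the definition of $\mirror$ and strong convexity. Splitting $\npayv_\run = \payv(\state_\run) + \noise_\run + \bias_\run$, the mean-field term is controlled by strictness: since $\inner{\payv_\play(\state)}{\state_\play - \eq_\play} = -\sum_{\pure_\play\neq\eq[\pure_\play]}[\payv_{\play\eq[\pure_\play]}(\state) - \payv_{\play\pure_\play}(\state)]\state_{\play\pure_\play}$, continuity of $\payv$ furnishes a neighborhood $\open$ of $\eq$ and a constant $\const > 0$ with $\inner{\payv(\state)}{\state - \eq} \leq -\const\onenorm{\state - \eq}$ for all $\state\in\open$. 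The noise contribution $\inner{\noise_\run}{\state_\run - \eq}$ is a martingale difference (as $\state_\run$ is $\filter_\run$-measurable and $\exof{\noise_\run\given\filter_\run}=0$), the bias is absorbed by $\bbound_\run$, and the second-order term has conditional mean at most $\step_\run^2\sbound_\run^2/(2\hstr)$. Hence, writing $\fench_\run = \fench(\eq,\dstate_\run)$ and $D = \diam\strats$, on the event $\{\state_\run\in\open\}$ we get
\[
\exof{\fench_{\run+1}\given\filter_\run} \leq \fench_\run - \const\step_\run\onenorm{\state_\run-\eq} + D\step_\run\bbound_\run + \frac{\step_\run^2}{2\hstr}\sbound_\run^2 ,
\]
where the last two terms are summable by Assumptions \ref{asm:bias} and \ref{asm:variance}.

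The heart of the argument is confinement. Given a target neighborhood $\nhd$ of $\eq$, I would pick a sublevel threshold $R$ small enough that $\{\fench(\eq,\cdot)\leq R\}\subseteq\open\cap\nhd$, introduce the exit time $\tau = \inf\{\run : \fench_\run > R\}$, and study the stopped process $\fench_{\run\wedge\tau}$, for which the displayed drift holds at every step. Expressing $\fench_{\run\wedge\tau}$ as its initial value minus the accumulated (nonpositive) drift, plus the summable bias/variance tails, plus a martingale, I would apply a sharp Doob--Kolmogorov maximal inequality to the martingale part to bound $\probof{\sup_\run\fench_{\run\wedge\tau} > R}$ by a quantity that vanishes as the initial energy $\fench_{\start}$ and the tail sums $\sum_{\run\geq\run_0}\step_\run\bbound_\run$, $\sum_{\run\geq\run_0}\step_\run^2\sbound_\run^2$ tend to $0$. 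Choosing $\nhd_\start$ so small that $\state_\start\in\nhd_\start$ forces $\fench_{\start}$ small, and taking the tails small, yields $\probof{\tau=\infty}\geq 1-\conlevel$; on $\{\tau=\infty\}$ the iterates never leave $\nhd$, which is stochastic stability.

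I expect this confinement step to be the main obstacle, for exactly the reason flagged in the paper: in the bandit/\eqref{eq:IWE} regime the per-step second moment $\sbound_\run^2 = \bigoh(1/\epar_\run)$ diverges, so the martingale increments are not uniformly $L^2$-bounded and off-the-shelf stochastic-approximation arguments do not apply --- only the weighted sum $\sum_\run\step_\run^2\sbound_\run^2$ is finite. Turning this summability into a high-probability bound on the running maximum, with constants sharp enough to push the exit probability below $\conlevel$, is precisely what the refined maximal inequality must deliver. Finally, for attraction, on $\{\tau=\infty\}$ the almost-supermartingale $\fench_\run$ converges almost surely (Robbins--Siegmund), while the telescoped drift forces $\sum_\run\step_\run\onenorm{\state_\run-\eq}<\infty$; since $\sum_\run\step_\run=\infty$ this gives $\liminf_\run\onenorm{\state_\run-\eq}=0$, and together with the convergence of $\fench_\run$ and $\fench_\run\geq\tfrac{\hstr}{2}\twonorm{\state_\run-\eq}^2$ we conclude $\fench_\run\to 0$, \ie $\state_\run\to\eq$, with probability at least $1-\conlevel$.
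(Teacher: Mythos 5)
Your proposal follows the same architecture as the paper's own proof: the Fenchel coupling $\fench_\hreg(\eq,\cdot)$ as a Lyapunov function, the one-step descent inequality from \cref{prop:Prop Fenchel}, the decomposition of $\npayv_\run$ into drift, noise, bias and second-order terms, variational stability of the strict equilibrium to make the drift nonpositive, Doob-type maximal inequalities for confinement, and (super)martingale convergence for attraction. Two of your choices are genuine (and welcome) streamlinings: the quantitative form of variational stability, $\inner{\payv(\point)}{\point-\eq}\leq-\const\onenorm{\point-\eq}$ near $\eq$, combined with Robbins--Siegmund on the stopped process, delivers simultaneously the a.s.\ convergence of $\fench_\run$ and $\sum_\run\step_\run\onenorm{\state_\run-\eq}<\infty$, thereby replacing the paper's separate contradiction-based subsequence lemma (\cref{lem:strict remain to neighborhood}); and your exit-time formulation of confinement is an equivalent repackaging of the paper's induction conditioned on the three Doob events. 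Both you and the paper share the same implicit requirement that $\sum_\run\step_\run^2\sbound_\run^2$ and $\sum_\run\step_\run\bbound_\run$ be not merely finite but small relative to $\eps\conlevel$, so that is not a point of divergence.

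There is, however, a genuine gap in your last deduction. Robbins--Siegmund gives you that $\fench_\run$ converges to some limit $\fench_\infty\geq 0$ and that $\liminf_\run\onenorm{\state_\run-\eq}=0$. To conclude $\fench_\infty=0$ you must pass from ``$\state_{\run_\subround}\to\eq$ along a subsequence'' to ``$\fench_{\run_\subround}\to 0$ along that subsequence''. The inequality you invoke at this point, $\fench_\run\geq\tfrac{\hstr}{2}\twonorm{\state_\run-\eq}^2$, runs in exactly the opposite direction (small coupling implies small distance), and the three facts at your disposal are jointly consistent with $\fench_\infty>0$: nothing you have written excludes the coupling staying bounded away from zero while $\mirror(\dstate_\run)$ dips toward $\eq$ along a subsequence \textendash\ this is precisely the steep-regularizer subtlety you flagged in your opening paragraph and then did not use. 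What is needed is the \emph{reciprocity} property of the Fenchel coupling, $\fench_\hreg(\eq,\dpoint)\to 0$ whenever $\mirror(\dpoint)\to\eq$, which the paper proves for decomposable regularizers (\cref{prop:rec Fencel}) and invokes at exactly this step; the same property is also used silently in your stability part, when you assert that $\nhd_\start$ can be chosen so that $\state_\start\in\nhd_\start$ forces $\fench_\start$ small. With reciprocity established and inserted at these two places your argument closes; without it, the final step does not follow from what you have proved.
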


\begin{theorem}
\label{theorem:only-if}
Let $\eq$ be a mixed \acl{NE} of $\fingame$.
If \eqref{eq:FTRL} is run with inexact payoff feedback satisfying assumption \ref{asm:degen}, then $\eq$ is not stochastically asymptotically stable.
\end{theorem}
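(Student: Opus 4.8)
The plan is to refute stochastic stability of $\eq$ (which already negates stochastic asymptotic stability) by isolating a scalar functional of the score process that behaves, near $\eq$, like a driftless random walk with non-degenerate increments and hence cannot stay confined to a small neighborhood. Since $\eq$ is mixed, some player's equilibrium component is supported on at least two strategies; \ref{asm:degen}, together with the finiteness of $\players$ and of each $\pures_{\play}$ and a pigeonhole over the infinitely many rounds, supplies a \emph{fixed} player $\play\in\players$ and a \emph{fixed} pair $a,b\in\supp(\eq_{\play})$ for which the conditional law of the estimated payoff gap $\npayv_{\play a,\run}-\npayv_{\play b,\run}$ is non-degenerate along a positive-density set of rounds. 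I would then track the scalar score difference $\Delta_{\run}\defeq\dstate_{\play a,\run}-\dstate_{\play b,\run}$, whose \eqref{eq:FTRL} update reads $\Delta_{\run+1}=\Delta_{\run}+\step_{\run}\,(\npayv_{\play a,\run}-\npayv_{\play b,\run})$.

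Two structural facts drive the argument. First, convergence $\state_{\run}\to\eq$ \emph{forces} $\Delta_{\run}$ to a deterministic limit: since the kernel of \eqref{eq:choice} is separable and strictly convex, the KKT conditions give $\Delta_{\run}=\theta'_{\play}(\state_{\play a,\run})-\theta'_{\play}(\state_{\play b,\run})$, a strictly increasing function of the $a$-weight, so $\state_{\play a,\run}\to\eq_{\play a}>0$ and $\state_{\play b,\run}\to\eq_{\play b}>0$ pin down the unique value $\Delta_{\run}\to\Delta^{\ast}\defeq\theta'_{\play}(\eq_{\play a})-\theta'_{\play}(\eq_{\play b})$. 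Second, the \emph{drift} of $\Delta_{\run}$ vanishes at $\eq$: because $a,b\in\supp(\eq_{\play})$, the characterization \eqref{eq:NE} gives $\payv_{\play a}(\eq)=\payv_{\play b}(\eq)$, and player $\play$'s own gap $\payv_{\play a}(\state)-\payv_{\play b}(\state)=\pay_{\play}(a;\state_{-\play})-\pay_{\play}(b;\state_{-\play})$ depends only on $\state_{-\play}$, so it is continuous and vanishes at $\eq$. Hence near $\eq$ the conditional-mean increment of $\Delta_{\run}$ is uniformly small (its bias part being summable along the trajectory by \ref{asm:bias}), while \ref{asm:degen} guarantees a zero-mean noise term of conditional variance bounded below on a positive-density set of rounds. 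In short, close to $\eq$ the score difference carries no restoring force but a persistent non-degenerate noise.

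With this in hand I would establish the \emph{direct probabilistic estimate}: there is a constant $\conlevel_{0}>0$ such that, for any initialization $\state_{\start}=\mirror(\dstate_{\start})$ in an arbitrarily small neighborhood of $\eq$, the difference $\Delta_{\run}$ exits a fixed interval around $\Delta^{\ast}$ with probability at least $\conlevel_{0}$. The engine is anti-concentration of the accumulated noise: the zero-mean increments contribute a total conditional variance $\sum_{\run}\step_{\run}^{2}\,\Var(\npayv_{\play a,\run}-\npayv_{\play b,\run}\mid\filter_{\run})$ bounded below by a fixed $V_{0}>0$ (independent of the neighborhood, via \ref{asm:degen}), whereas inside a small enough neighborhood the drift and bias are dominated by the noise. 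Choosing the target neighborhood $\nhd$ so small that its $\Delta$-width falls below $\sqrt{V_{0}}$, an anti-concentration / maximal-inequality argument yields escape of $\Delta_{\run}$, and thus of $\state_{\run}$, from $\nhd$ with probability at least $\conlevel_{0}$, uniformly over starting points. For confidence $\conlevel<\conlevel_{0}$ this contradicts the stability clause of \cref{stability}, so $\eq$ is not stochastically stable and, a fortiori, not stochastically asymptotically stable.

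The main obstacle is exactly this anti-concentration estimate. Per round, \ref{asm:degen} only asserts \emph{positive} probability of a gap of size $\noiselb$, not a quantitative lower bound on the conditional variance that is uniform in $\run$ and in $\filter_{\run}$; turning this qualitative non-degeneracy into a fixed aggregate lower bound $V_{0}$, and then into a uniform lower bound on the \emph{escape probability} despite vanishing steps $\step_{\run}\to0$ and the coupling between the small drift and the noise, is the delicate point. I expect it to require a careful per-round variance lower bound along the ``good'' rounds furnished by \ref{asm:degen}, combined with a second-moment (Paley--Zygmund-type) or reverse-martingale estimate that converts accumulated variance into escape probability. In the concrete models of \cref{sec:feedback} these lower bounds are explicit \textemdash\ e.g.\ under bandit feedback a sampled importance-weighted coordinate creates a gap of fixed macroscopic size with probability proportional to its sampling weight \textemdash\ which is precisely where \ref{asm:degen} is verified.
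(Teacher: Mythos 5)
Your structural groundwork is essentially the paper's: the identity $\Delta_\round = \theta'_\play(\pstate_{\play a,\round})-\theta'_\play(\pstate_{\play b,\round})$ for coordinates in the support is exactly \cref{lem:Useful Expression} (derived there from $\dstate_\play\in\partial\hreg_\play(\pstate_\play)$ and the polar-cone structure), and your observations that $\payv_{\play a}(\eq)=\payv_{\play b}(\eq)$ and that confinement near $\eq$ pins $\Delta_\round$ to an interval around $\Delta^{\ast}$ are correct. The genuine gap is the one you flag yourself: your contradiction rests entirely on an aggregate variance lower bound $V_{0}$ and an anti-concentration escape estimate, and neither can be extracted from \ref{asm:degen}. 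That assumption is qualitative and round-wise: the witnessing player and pair may change with $\round$ (your pigeonhole yields a fixed pair only along an infinite subsequence, \emph{not} a positive-density set); the conditional probability $\problb_\round$ is history-dependent and may decay to zero arbitrarily fast; positive probability of a gap of size $\noiselb$ does not lower-bound the conditional variance of the \emph{centered} increment (the event can be carried entirely by the conditional mean); and each round's contribution to your accumulated variance is weighted by $\step_\round^{2}$, which the theorem must tolerate for any step-size schedule. So there is no route from \ref{asm:degen} to a fixed $V_{0}$ uniform over neighborhoods and initializations, the escape probability $\conlevel_{0}$ never materializes, and as written the proof does not close.

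What the paper does instead \textemdash\ and what rescues your setup without any new estimates \textemdash\ is a \emph{one-round} argument. Fix a single round $\round$ and suppose stochastic asymptotic stability. On the stability event (probability at least $1-\conlevel$) the trajectory stays in an $\eps$-neighborhood at rounds $\round$ and $\round+1$; your identity, together with the fact that $\theta'_\play$ has a finite Lipschitz-type constant on $[L(\eps),1]$ because supported coordinates stay bounded away from $0$ inside the neighborhood, turns this confinement into the \emph{deterministic} bound $\abs{\npayv_{\play a,\round}-\npayv_{\play b,\round}} \leq \bigoh(\eps/\step_\round)$ on that event. Since stability must hold for \emph{every} $\eps$ and \emph{every} $\conlevel$, one may choose $\eps$ smaller than a constant times $\step_\round\noiselb$ \emph{after} fixing $\round$, so that on the stability event the signal gap is $<\noiselb$; but \ref{asm:degen} makes the gap $\geq\noiselb$ with conditional probability $\problb_\round>0$ at that same round, and taking $\conlevel<\problb_\round$ forces the two events' probabilities to sum to more than $1$ \textemdash\ a contradiction. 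A single round with merely positive probability suffices; no accumulation of noise, no uniformity in $\round$, and no anti-concentration or Paley--Zygmund machinery is needed. Your random-walk picture is the right intuition, but the quantifier order in \cref{stability} (arbitrary $\conlevel$ and $\eps$, chosen after the round) is precisely what converts the qualitative hypothesis \ref{asm:degen} into a complete proof.
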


\begin{corollary}
Suppose that \eqref{eq:FTRL} is run in a generic game with oracle-based feedback as in \cref{semibandit feedback}
and a sufficiently small step-size $\step_{\run}$ with $\sum_{\run} \step_{\run}^{2} < \infty$.
Then, a \acl{NE} is stochastically asymprotically stable if and only if it is strict.
\end{corollary}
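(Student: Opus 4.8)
The plan is to obtain the corollary by specializing \cref{theorem:if,theorem:only-if} to the oracle-based feedback of \cref{semibandit feedback}, and to use the genericity of $\fingame$ to identify the non-strict equilibria with the mixed ones, so that the two theorems together cover \emph{every} \acl{NE} of the game.

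For the ``strict $\Rightarrow$ stable'' direction, I would let $\eq$ be a strict \acl{NE} and check that oracle-based feedback fulfills the hypotheses of \cref{theorem:if}. Since this feedback is unbiased, $\bias_{\run} = 0$, so Assumption \ref{asm:bias} holds trivially. For Assumption \ref{asm:variance}, the mean square $\exof{\dnorm{\npayv_{\run}}^{2} \given \filter_{\run}}$ is bounded by the constant $\max_{\strat}\dnorm{\payv(\strat)}^{2} < \infty$ (payoffs are finite), so the requirement $\sum_{\run}\step_{\run}^{2}\sbound_{\run}^{2} < \infty$ reduces to the assumed condition $\sum_{\run}\step_{\run}^{2} < \infty$. \cref{theorem:if} then gives that $\eq$ is stochastically asymptotically stable.

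For the converse, I would first reduce ``non-strict'' to ``mixed'' via genericity. In a generic game every \acl{NE} is quasi-strict, and a quasi-strict equilibrium is strict precisely when it is pure; hence in this setting a \acl{NE} fails to be strict exactly when it is mixed. I would then argue by contraposition using \cref{theorem:only-if}, whose sole hypothesis---Assumption \ref{asm:degen}---holds for oracle-based feedback (as recorded in \cref{semibandit feedback}). Thus any mixed equilibrium is \emph{not} stochastically asymptotically stable, so any stochastically asymptotically stable equilibrium must be pure, and therefore strict. Combining the two directions yields the stated equivalence.

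The corollary is, in effect, bookkeeping on top of \cref{theorem:if,theorem:only-if}: the substantive analysis is entirely contained in those theorems. The only step that requires genuine care is the genericity reduction, since it is precisely the equivalence ``non-strict $=$ mixed'' that closes the potential gap left open by the two theorems---namely pure-but-not-quasi-strict equilibria, which are ruled out exactly because $\fingame$ is assumed generic.
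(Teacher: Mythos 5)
Your proof is correct and takes essentially the same route as the paper, which leaves this corollary without explicit proof precisely because it is the specialization of \cref{theorem:if,theorem:only-if} to \cref{semibandit feedback}: \ref{asm:bias} and \ref{asm:variance} are verified exactly as you do (zero bias; bounded mean square reducing \ref{asm:variance} to $\sum_{\run}\step_{\run}^{2}<\infty$), and \ref{asm:degen} is established in \cref{appendix instability} via \cref{lem:not equal payoffs}, where genericity is also what makes the argument work. Your explicit reduction ``non-strict $=$ mixed'' in a generic game (all equilibria quasi-strict, hence pure $\Rightarrow$ strict), which rules out the pure-but-not-quasi-strict case left uncovered by the two theorems, is exactly the bookkeeping step the paper relies on implicitly.
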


\begin{corollary}
Suppose that \eqref{eq:FTRL} is run in a generic game with bandit feedback as in \cref{bandit feedback}
and sufficiently small step-size and explicit exploration paramters with $\sum_\run \step_\run^2/\epar_\run <\infty$, $\sum_\run \step_\run \epar_\run <\infty$.
Then, a \acl{NE} is stochastically asymprotically stable if and only if it is strict.
\end{corollary}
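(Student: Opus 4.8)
The plan is to derive this statement as a specialization of the two main theorems, \cref{theorem:if} and \cref{theorem:only-if}, to the importance-weighted estimator \eqref{eq:IWE} of \cref{bandit feedback}, using the genericity hypothesis as the bridge that turns the strict/mixed dichotomy into the asserted equivalence. The crucial structural fact is that, in a generic game, every \acl{NE} is quasi-strict; since a strict equilibrium is exactly a \emph{pure} quasi-strict equilibrium (for a pure profile, quasi-strictness and strictness coincide), this means that in a generic game ``not strict'' is synonymous with ``mixed.'' Consequently the two theorems jointly exhaust every \acl{NE}, and the only remaining work is to confirm that \eqref{eq:IWE} meets their respective hypotheses under the prescribed schedules.

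For the direction ``strict $\implies$ stochastically asymptotically stable,'' I would first recall from \cref{bandit feedback} that \eqref{eq:IWE} fits the abstract feedback template \eqref{eq:feedback} with $\bbound_\run = \bigoh(\epar_\run)$ and $\sbound_\run^{2} = \bigoh(1/\epar_\run)$. The standing hypotheses $\epar_\run \to 0$ and $\sum_\run \step_\run \epar_\run < \infty$ then give \ref{asm:bias}, while $\sum_\run \step_\run^{2}/\epar_\run < \infty$ gives \ref{asm:variance}. With \ref{asm:bias} and \ref{asm:variance} verified, \cref{theorem:if} applies directly and yields stochastic asymptotic stability of any strict \acl{NE}.

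For the converse I would argue by contraposition: if a \acl{NE} $\eq$ fails to be strict, then genericity forces $\eq$ to be mixed, so it suffices to rule out stability of mixed equilibria. This is precisely the content of \cref{theorem:only-if}, whose only requirement is the non-degeneracy condition \ref{asm:degen}. I would therefore invoke the verification of \ref{asm:degen} for \eqref{eq:IWE} (carried out in the appendix), which exploits that inverse-propensity weighting keeps the estimated payoffs genuinely random at a mixed equilibrium; together with \cref{theorem:only-if} this shows that no non-strict equilibrium can be stochastically asymptotically stable. Chaining the two implications produces the equivalence.

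The real mathematical weight of the result sits inside the two theorems and the appendix verifications rather than in the corollary, which is essentially a matter of matching hypotheses. At the level of this plan, the one genuinely delicate ingredient is establishing \ref{asm:degen} for \eqref{eq:IWE}: one must exhibit, at every mixed equilibrium and every stage, a player and a pair of support actions whose estimated payoffs differ by a bounded-below amount with positive conditional probability, and this must survive the vanishing exploration factor $\epar_\run$. A minor but necessary check is that the two summability constraints $\sum_\run \step_\run^{2}/\epar_\run < \infty$ and $\sum_\run \step_\run \epar_\run < \infty$ are jointly compatible with the no-regret condition $\sum_\run \step_\run = \infty$; the family $\step_\run \propto \run^{-\pexp}$, $\epar_\run \propto \run^{-\qexp}$ with $\pexp + \qexp > 1$ and $2\pexp - \qexp > 1$ exhibits a non-empty admissible range, so the hypotheses of the corollary are not vacuous.
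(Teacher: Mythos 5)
Your proposal is correct and follows exactly the paper's (implicit) route: verify \ref{asm:bias} and \ref{asm:variance} from the Model~\ref{bandit feedback} bounds $\bbound_\run=\bigoh(\epar_\run)$, $\sbound_\run^2=\bigoh(1/\epar_\run)$ to invoke \cref{theorem:if}, use the appendix verification of \ref{asm:degen} (\cref{lem:not equal payoffs} and its corollaries) to invoke \cref{theorem:only-if}, and close the dichotomy via genericity, under which every non-strict equilibrium is mixed. Nothing is missing.
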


These results \textendash\ and, in particular, the implications for the bandit case \textendash\ provide a learning justification to the abundance of arguments that have been made in the refinement literature against selecting mixed \aclp{NE} \citep{vD87,FT91,DF90}.
In the rest of our paper, we present an outline of the main proof ideas and defer the details to the appendix.

\section{Our Techniques}
\label{sec:Proof Techniques}
\subsection{The Stochastic Asymptotic Stability of Strict Nash Equilibria}\label{sec:Stability:Proof Techniques}
At a high level, the standard tool in FTRL dynamics for questions pertaining to asymptotic stability of strict \aclp{NE} is the construction of a potential \textendash\ or \emph{Lyapunov} \textendash\ function.
However, the analysis and the underlying structural results are considerably more involved when we shift from the continuous dynamics to discrete algorithms and more importantly in a stochastic framework with incomplete feedback information. Still, to build  intuition we first recall the continuous and deterministic analogue. 

\subparagraph{The continuous-time case.}
In prior work \citep{HS75,Wei00,Har11}, multiple instantiations of Bregman functions, like the KL-divergence  have been employed as a potent tool for understanding  \emph{replicator \& population dynamics}, which are the continuous analogues of MWU/EW (\cref{MWU}). 
Unfortunately, Bregman functions are  insufficient  to cover the full spectrum of regularizers studied in this work. %
 This limitation has been sidesteped in \cite{MS16} by exploiting the information of the dual space $\dspace$ of the payoff scores,  via the Fenchel coupling: %
\begin{equation}
\fench_\hreg\parens{x,y}  = \hreg\parens{x} + \hreg^*\parens{y} -\inner{y}{x} \text{ for all } x\in\pspace, y\in\dspace
\end{equation}   
where $\hreg^*: \dspace \to \R$ is the convex conjugate
of $\hreg$: $\hreg^*\parens{y} = \sup_{x\in\pspace}\braces{\inner{y}{x} -\hreg\parens{x}}$. 
Indeed, %
$\fench_\hreg\parens{\eq,y}\ge 0$ where equality holds if and only if $\eq=Q(y)$ (\cref{prop:Prop Fenchel}).
Therefore,  for the continuous FTRL dynamics $\dot{y}(t)=v(x(t)), x(t)=Q(y(t))$, it remains to show that  the time derivative of the Lyapunov-candidate-function $L_{\eq}(y(t))=\fench_\hreg\parens{\eq,y(t)}$ is negative. %
This last key ingredient for the strict Nash equilibria is derived by their \emph{variational stability} property. 
\noindent Formally, a point $\eq$ is \emph{variationally stable} if there exists a neighborhood $U$ of $\eq$ such that 
\begin{equation}\label{variational stability-main}
     \inner{\payv\parens{x}}{x-\eq}\leq 0 \text{ for all  }x\in U \tag{VS}
\end{equation}
with equality if and only if $x=\eq$. Roughly speaking, this property states that the payoff vectors are pointing ``towards'' the equilibrium in question since in a neighborhood of $\eq$, it strictly dominates over all other strategies. Thus by applying the chain rule, \eqref{variational stability-main} implies that ${\mathrm{d} L_{\eq}(y(t))}/{\mathrm{d} t}\leq 0$
\footnote{Analytically, $\dfrac{\mathrm{d} L_{\eq}(y(t))}{\mathrm{d} t}=\dfrac{\mathrm{d}{\hreg^*\parens{y(t)}}}{\mathrm{d} t} -\inner{\dot{y}(t)}{\eq}= \inner{\dot{y}(t)}{\nabla\hreg^*\parens{y}} -\inner{\dot{y}(t)}{\eq}= \inner{\payv\parens{x(t)}}{x(t)-\eq}\leq 0.
$%
}.
\noindent Given their usefulness also in the discrete time stochastic case, we present all the aformentioned properties in detail 
in the paper’s supplement (\cref{ss:Bregman}-\ref{ss:variational stability}).

\subparagraph{The discrete time.}
The core elements of the continuous time proof do not trivially extend to the discrete time case.
Even though  we are not able to show that $\parens{\fench_\hreg\parens{\eq,\dstate_\subround}}_{\subround=1}^{\infty}$ is a decreasing sequence, due to the discretization and the uncertainty involved, we prove that $ \fench_\hreg\parens{\eq,\dstate_\subround} \to 0$. This immediately implies that FTRL algorithm converges to $\eq$, since from \cref{prop:Prop Fenchel} $\fench_\hreg\parens{\eq,\dstate_\subround}\ge \frac{1}{2\stcon_\hreg}\norm{\eq-\pstate_\subround}$.

To exploit again the Fenchel coupling as a Lyapunov function, successive differences have to be taken among 
$\fench_\hreg\parens{\eq,\dstate_{\round+1}}, \ldots, \fench_\hreg\parens{\eq,\dstate_{0}}$.
In contrast to the continuous time analysis, since the chain rule no longer applies, we can only do a second order Taylor expansion of the Fenchel coupling. 
Additionally, let us recall that in our stochastic feedback model, the payoff vector $\npayv_\round= \payv\parens{\pstate_\round}+\noise_\round + \bias_\round$ including possibly either random zero-mean noise or systematic biased noise.
Combining \cref{prop:Prop Fenchel}, definition of $\npayv_\round$ and \eqref{eq:FTRL}, we can create the following upper-bound of Fenchel coupling at each round:
\begin{equation}\label{fenchel-telescopic-main}
     \fench_\hreg\parens{\eq,\dstate_{\round+1}} \leq \fench_\hreg\parens{\eq,\dstate_0} + \sum_{\subround =0}^\round \step_\subround(\rvdrift_\subround+\rvnoise_\subround+\rvbias_\subround)  + \dfrac{1}{2\stcon_\hreg}\sum_{\subround =0 }^\round \step_\subround^2\norm{\npayv_\subround}_*^2
\tag{$\star$}
\end{equation}
\smallskip
where $\rvdrift_\subround=\inner{\payv\parens{\pstate_\subround}}{\pstate_\subround - \eq},
\rvnoise_\subround=\inner{{\noise_\subround}}{\pstate_\subround - \eq},
\rvbias_\subround=\inner{{\bias_\subround}}{\pstate_\subround - \eq}
$ are the related terms with the drift of the actual payoff, the zero-mean noise and the bias correspondingly.
When $\pstate_\round$ lies in a variationally stable region $\nhd_{\ref{variational stability-main}}$ of $\eq$,
the first-order term of $\rvdrift_\subround$, which also appears in the continuous time, corresponds actually to the negative ``drift'' of the variational stability which attracts Fenchel coupling to zero. 

Having settled the basic framework, we split the proof sketch of \Cref{theorem:if} into two parts: \emph{stochastic stability} \& \emph{convergence}. Our analysis relies heavily on tools from the convex analysis and martingale limit theory to control the influence of the stochastic terms in the aforementioned bound.
\smallskip

\noindent\textbf{Step 1: \emph{Stability}.} Let $\nhd_\eps = \braces{x: \breg_\hreg\parens{\eq,x}<\eps}$ and $\nhd_\eps^* = \left\{y\in\dspace: \fench_\hreg\parens{\eq,y} < \eps\right\}$ be the $\eps-$sublevel sets of Bregman function and  Fenchel coupling respectively. Our first observation is that for all ``natural'' decomposable regularizers, it holds  the so-called ``reciprocity condition'' (\cref{prop:rec Bregman,prop:rec Fencel}): essentially, this posits that $\nhd_\eps$ and $Q(\nhd_\eps^*)$ are neighborhoods of $\eq$ in $\strats$. 
Additionally, since $\fench_\hreg\parens{\eq,y} = \breg_\hreg\parens{\eq,x}$ whenever $\mirror\parens{y}=x$ and $\supp\parens{x}$ contains $\supp\parens{\eq}$, from \cref{prop:Prop Fenchel}, it holds that  $\mirror\parens{\nhd_\eps^*}\subseteq \nhd_\eps$ and $\mirror^{-1}\parens{\nhd_\eps}= \nhd_\eps^*$. Thus, we conclude that whenever $y\in \nhd_\eps^*$, $x=\mirror\parens{y}\in \nhd_\eps$.

To proceed, fix a confidence level $\conlevel$ and $\eps$ sufficiently small such that \eqref{variational stability-main} holds for all $x \in \nhd_{\eps}$.
Using Doob's maximal inequalities  %
for (sub)martingales (\cref{Max inequality martingales,Max inequality submartingales}) we can prove that  with probability at least $1-\delta$,
\begin{enumerate*}[(a)]\item\label{item:noise} $\{\sum_{\subround =0}^\round \step_\subround\rvnoise_\subround\}$, \item \label{item:bias}$
\{\sum_{\subround =0}^\round \step_\subround\rvbias_\subround\}$ and \item \label{item:discrete-error}$ \{\frac{1}{2\stcon_\hreg}\sum_{\subround =0 }^\round \step_\subround^2\norm{\npayv_\subround}_*^2\}$\end{enumerate*} are less than $\eps/4$ for all $\round\ge 0$. 
For concision, we defer the full proof to the supplement of the paper in \cref{sec:deferred proofs-of main lemmas}. For the rest of this part, we condition on this event and rewrite \eqref{fenchel-telescopic-main} as $\fench_\hreg\parens{\eq,\dstate_{\round+1}} < \sum_{\subround =0}^\round \step_\subround\rvdrift_\subround+\eps$.
\medskip

\noindent Following the definition of stability (\cref{stability}),  we prove inductively that if  $\pstate_0$ belongs a smaller neighborhood, namely if  $\pstate_0 \in \nhd_{\eps/4}\cap \im\mirror$, then $\pstate_\round$ never escapes $\nhd_\eps$, $\pstate_\round\in \nhd_\eps$ for all $\round\geq 0$.

\begin{itemize}[topsep=.5ex,leftmargin=\parindent]
\setlength{\itemsep}{0pt}
\setlength{\parskip}{.2ex}
    \item  Induction Basis/Hypothesis: Since $\pstate_0 \in \nhd_{\eps/4}\cap \im\mirror$, apparently $\fench_\hreg\parens{\eq,\dstate_0} <\eps/4$ and $\pstate_0 \in \nhd_{\eps}$. Assume that $\pstate_\subround \in \nhd_{\eps}$ for all $0\leq \subround\leq \round$. 
    \item  Induction Step: We will prove that $\dstate_{\round +1}\in \nhd_{\eps}^*$ and consequently $\pstate_{\round+1}\in \nhd_{\eps}$. 
    Since $\nhd_{\eps}$ is a neighborhood of $\eq$ in which \eqref{eq:variational stability} holds  we have that
   $\rvdrift_\subround\leq 0$ for all  $0\leq\subround\leq\round $. Consequently $\fench_\hreg\parens{\eq,\dstate_{\round+1}} < \eps$
    which implies that $\dstate_{\round+1} \in \nhd_{\eps}^*$  or equivalently $\pstate_{\round+1} \in \nhd_{\eps}$.  
\end{itemize}
\smallskip

\noindent\textbf{Step 2: \emph{Convergence}.} A tandem combination of stochastic Lyapunov and variational stability is the following lemma:
\begin{lemma}[Informal statement of \cref{lem:strict remain to neighborhood}]
Let $\eq\in \pures$ be a strict Nash equilibrium. If $\curr[\pstate]$ does not exit a neighborhood $R$ of $\eq$, in which variational stability holds, then  there exists a subsequence $\pstate_{\round_\subround}$ of $\state_\round$ that converges to $\eq$ almost surely.
\end{lemma}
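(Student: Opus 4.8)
The plan is to use the Fenchel coupling $L_\subround \defeq \fench_\hreg(\eq,\dstate_\subround)$ as a stochastic Lyapunov function and to extract the convergent subsequence directly from the telescopic estimate \eqref{fenchel-telescopic-main}. In per-step form that bound reads $L_{\round+1} \le L_0 + \sum_{\subround=0}^\round \step_\subround(\rvdrift_\subround + \rvnoise_\subround + \rvbias_\subround) + \tfrac{1}{2\stcon_\hreg}\sum_{\subround=0}^\round \step_\subround^2 \dnorm{\npayv_\subround}^2$, and its four pieces play very different roles: the drift $\rvdrift_\subround = \inner{\payv(\pstate_\subround)}{\pstate_\subround - \eq}$ is nonpositive because $\pstate_\subround$ stays in the variationally stable region $R$ (by \eqref{variational stability-main}); $\rvnoise_\subround$ is a martingale difference; $\rvbias_\subround$ is a systematic error; and the last sum is a nonnegative second-order remainder. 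To make the martingale bookkeeping rigorous under the conditioning ``$\pstate_\round$ never exits $R$'', I would introduce the stopping time $\proper \defeq \inf\{\round : \pstate_\round \notin R\}$ and argue on $\{\proper = \infty\}$. Since $\{\proper > \subround\} \in \filter_\subround$, multiplying each increment by $\oneof{\proper > \subround}$ preserves both the zero conditional mean of the noise term and the boundedness $\norm{\pstate_\subround - \eq} \le \diam(\cl R) =: D$ that the estimates below exploit.

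Next I would show that, almost surely, the three error streams have bounded partial sums. For the noise, Cauchy--Schwarz gives $\abs{\rvnoise_\subround} \le D\,\dnorm{\noise_\subround}$, so the stopped martingale has summable conditional variance, $\sum_\subround \exof{\step_\subround^2 \rvnoise_\subround^2 \oneof{\proper>\subround} \given \filter_\subround} = \bigoh(1)\sum_\subround \step_\subround^2 \sbound_\subround^2 < \infty$ by \ref{asm:variance}, and hence converges a.s.\ by Doob's $L^2$ martingale convergence theorem. For the bias, $\abs{\rvbias_\subround} \le D\,\dnorm{\bias_\subround}$ yields the absolutely convergent bound $\sum_\subround \step_\subround \abs{\rvbias_\subround} \le D\sum_\subround \step_\subround \bbound_\subround < \infty$ by \ref{asm:bias}. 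For the remainder, $\exof{\sum_\subround \step_\subround^2 \dnorm{\npayv_\subround}^2} \le \sum_\subround \step_\subround^2 \sbound_\subround^2 < \infty$ forces $\sum_\subround \step_\subround^2 \dnorm{\npayv_\subround}^2 < \infty$ a.s. Feeding these into \eqref{fenchel-telescopic-main} and using $L_{\round+1} \ge 0$ (\cref{prop:Prop Fenchel}), the partial sums of the nonnegative series $\sum_\subround \step_\subround(-\rvdrift_\subround)$ are bounded above by $L_0$ plus the (a.s.\ convergent) error streams; being monotone and bounded, the series converges, so $\sum_\subround \step_\subround(-\rvdrift_\subround) < \infty$ almost surely on $\{\proper=\infty\}$.

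Finally I would convert this into subsequential convergence. Set $\psi(x) \defeq \inner{\payv(x)}{\eq - x}$, which is continuous, nonnegative on $R$, and vanishes exactly at $\eq$ by the strict equality case of \eqref{variational stability-main}; thus $\sum_\subround \step_\subround \psi(\pstate_\subround) < \infty$ while $\sum_\subround \step_\subround = \infty$, which forces $\liminf_\subround \psi(\pstate_\subround) = 0$ a.s.\ (otherwise $\psi(\pstate_\subround)$ would be bounded below by some $c>0$ eventually and the weighted series would diverge). Picking a subsequence along which $\psi(\pstate_{\round_\subround}) \to 0$ and passing, by compactness of $\cl R$, to a convergent sub-subsequence with limit $x^\ast$, continuity gives $\psi(x^\ast) = 0$ and hence $x^\ast = \eq$; relabeling yields the claimed subsequence $\pstate_{\round_\subround} \to \eq$ a.s. I expect the crux to be the noise term: because bandit-type feedback has per-step second moment $\sbound_\subround^2$ that may blow up, its control hinges on combining the boundedness of $\pstate_\subround - \eq$ inside $R$ (secured by the stopping time) with the summability $\sum_\subround \step_\subround^2 \sbound_\subround^2 < \infty$ from \ref{asm:variance}, which is precisely what the $L^2$ martingale convergence theorem requires.
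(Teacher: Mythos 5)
Your proposal is correct, but it takes a genuinely different route from the paper's proof of \cref{lem:strict remain to neighborhood}. The paper argues by \emph{contradiction}: assuming $\pstate_\round$ eventually avoids some neighborhood $\nhd\subseteq R$ of $\eq$, it extracts a uniform drift bound $\rvdrift_\subround\leq -c<0$, normalizes \eqref{fenchel-telescopic-main} by $\tau_\round=\sum_{\subround=0}^{\round}\step_\subround$, and shows the noise, bias and second-order streams are $o(\tau_\round)$ almost surely \textendash\ the noise via the strong law of large numbers for martingales (\cref{law of large numbers}), the bias and remainder via Doob's convergence theorem (\cref{Doob's convergence}) applied to the $L_1$-bounded submartingales $J_\round$ and $R_\round$ \textendash\ so that $\fench_\hreg\parens{\eq,\dstate_{\round+1}}\leq \fench_\hreg\parens{\eq,\dstate_0}-c\tau_\round+o(\tau_\round)\to-\infty$, contradicting nonnegativity of the Fenchel coupling. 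You instead prove the \emph{stronger} statement that the three error streams are a.s.\ convergent (not merely $o(\tau_\round)$): the stopped noise martingale is $L^2$-bounded since $\sum_\subround\step_\subround^2\sbound_\subround^2<\infty$, the remainder has finite expectation, and the bias stream is absolutely summable \textendash\ a step that is legitimate pathwise because $\bias_\subround=\exof{\error_\subround\given\filter_\subround}$ is $\filter_\subround$-measurable, so \ref{asm:bias} amounts to $\dnorm{\bias_\subround}\leq\bbound_\subround$ a.s. Combined with $\fench_\hreg\geq 0$, this yields summability of the nonnegative drift series $\sum_\subround\step_\subround\,\psi(\pstate_\subround)<\infty$ with $\psi(x)=\inner{\payv\parens{x}}{\eq-x}$, whence $\liminf_\round\psi(\pstate_\round)=0$ by $\sum_\round\step_\round=\infty$, and the subsequence follows by compactness and continuity. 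Your route buys a cleaner, direct argument that dispenses with the Ces\`aro normalization and the martingale SLLN, and it makes the conditioning on ``never exits $R$'' rigorous via the stopping time $\proper$ \textendash\ a point the paper handles informally through its ``without loss of generality'' reduction; it also prefigures the maximal-inequality bookkeeping the paper only deploys later in the stability part of \cref{th:strict are stable}. Both proofs share the same minor gloss at the boundary of $R$: the paper needs $\inf\psi>0$ on the closure of $R\setminus\nhd$, while you need $\eq$ to be the unique zero of $\psi$ on $\cl R$; both are secured at no cost by shrinking $R$ so that $\cl R$ lies inside the variational-stability neighborhood of \eqref{variational stability-main}.
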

\noindent Indeed, if $\state_\round$ is entrapped in a variationally stable region $\nhd_\eps$ of $\eq$ without converging to $\eq$, we can show that $\sum_{\subround=0}^\infty \step_\subround \rvdrift_k \to -\infty$, while comparatively by the law of the large numbers for martingales (\cref{law of large numbers}), the contribution of \ref{item:noise},\ref{item:bias},\ref{item:discrete-error}  is negligible. Thus, in limit \eqref{fenchel-telescopic-main} implies that $0\leq \liminf \fench_\hreg\parens{\eq,\dstate_{\round}}\le  -\infty $, which is a contradiction.

Our final ingredient to complete the proof is that $\parens{\fench_\hreg\parens{\eq,\dstate_\subround}}_{\subround=1}^{\infty}$ behaves like an almost supermartingale when it is entrapped in a variationally stable region $\nhd_\eps$ of $\eq$. 
So, by convergence theorem for (sub)-martingales (\cref{Doob's convergence}),  $\parens{\fench_\hreg\parens{\eq,\dstate_\subround}}_{\subround=1}^{\infty}$ actually converges to a random  finite variable. Inevitably though, $\liminf_{\round\to\infty}\fench_\hreg\parens{\eq,\dstate_\round} = \lim_{\round\to\infty} \fench_\hreg\parens{\eq,\dstate_\round} = 0$ and by \cref{prop:Prop Fenchel}, $\mirror\parens{\dstate_\round} = \curr \to \eq$.

\subsection{The Stochastic Instability of Mixed Nash Equilibria}\label{sec:Instability:Proof Techniques}
For the proof of \cref{theorem:only-if}, it is worth mentioning that in this case stability fails for any choice of step-size. We start by focusing on the assumption of non-degeneracy \ref{asm:degen} of theorem's statement.%
\begin{itemize}[topsep=.5ex,leftmargin=\parindent]
\setlength{\itemsep}{0pt}
\setlength{\parskip}{.2ex}
\item From a game-theoretic perspective, \ref{asm:degen} actually demands that with non-zero probability, %
when players receive the payoffs corresponding to pure strategy profiles, there exists at least one player for whom at least two strategies of the equilibrium have distinct payoff signal. %
Note that if for each player, the payoffs corresponding to two different strategies of $\supp (\eq)$ were all equal \footnote{ when all other players' also employ strategies of the equilibrium} immediately implies a non-generic game with pure \aclp{NE}.
\item To illustrate this assumption in our generic feedback model, suppose that this error term $\error_{\run}$ is  standard normal random noise $\xi_{\run}$.
Indeed, the requirement of \ref{asm:degen} is satisfied since $\probof{\abs{ \payv_{\play, a}\parens{\pstate_{\run}} +\xi_{\play a,\run}-\payv_{\play,b}\parens{\pstate_{\run}}-\xi_{\play b,\run}}\geq 1/|\players| }> 1-\bigoh\left(\exp(-1/|\players|^2)\right)$. Such kind of property can be derived actually for any per-coordinate independent noise since actually the event of two independent coordinates to be exactly equal has zero measure.
\end{itemize}
For the bandit models \cref{semibandit feedback}, \cref{bandit feedback} of the previous section, we show that \ref{asm:degen} is satisfied in \cref{cor:non zero payoff,cor:payoffs bounded} of \cref{appendix instability}. 

Moving on to the proof of \Cref{theorem:only-if}, we start our analysis by connecting the difference of the payoff signal between two pure strategies, with the difference of the changes in the output of the regularizers' kernels, $\theta_\play$:  
\begin{lemma}[Informal Statement of \Cref{lem:Useful Expression}]\label{Main:Useful Expression}
Let $\pstate_{\play,\round}$ be the sequence of play in \eqref{eq:FTRL} \ie $\pstate_{\play,\round} =\mirror\parens{\dstate_{\play,\round}}\in\pspace_\play$ of player $\play\in\players$; and for some round $\round \geq 0$ let $a,b\in\supp\parens{\pstate_{\play,\round}}$ be two pure strategies of player $\play\in\players$. Then 
 it holds:
\begin{equation*}
   \parensnew{{\theta}'_\play\parens{\pstate_{\play a,\round+1 }} - {\theta}'_\play\parens{\pstate_{\play a,\round }}} - \parensnew{{\theta}'_\play\parens{\pstate_{\play b,\round+1}} - {\theta}'_\play\parens{\pstate_{\play b,\round}}}  = \step_\round\parens{\npayv_{\play a,\round}-\npayv_{\play b,\round}}
\end{equation*}
\end{lemma}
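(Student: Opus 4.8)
The plan is to read the identity directly off the first-order (KKT) optimality conditions of the strongly concave program that defines the choice map $\mirror_{\play}$, and then to subtract these conditions across two consecutive rounds, letting the score update of \eqref{eq:FTRL} supply the \acs{RHS}.

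\textbf{Optimality conditions.} Since each kernel satisfies $\inf_{[0,1]}\theta''_{\play}>0$, the regularizer $\hreg_{\play}(\strat_{\play})=\sum_{\pure_{\play}}\theta_{\play}(\strat_{\play\pure_{\play}})$ is strongly convex, so the maximization in \eqref{eq:choice} admits a unique solution $\pstate_{\play}=\mirror_{\play}(\dstate_{\play})$. Introducing a multiplier $\mu$ for the constraint $\sum_{\pure_{\play}}\strat_{\play\pure_{\play}}=1$ and multipliers $\lambda_{\pure_{\play}}\ge 0$ for the sign constraints $\strat_{\play\pure_{\play}}\ge 0$, stationarity of the Lagrangian of \eqref{eq:choice} reads, coordinatewise,
\begin{equation}
\dstate_{\play\pure_{\play}} - \theta'_{\play}(\pstate_{\play\pure_{\play}}) - \mu + \lambda_{\pure_{\play}} = 0,
\qquad
\lambda_{\pure_{\play}}\,\pstate_{\play\pure_{\play}} = 0 .
\end{equation}
For any $\pure_{\play}\in\supp(\pstate_{\play})$ we have $\pstate_{\play\pure_{\play}}>0$, so complementary slackness forces $\lambda_{\pure_{\play}}=0$ and hence $\theta'_{\play}(\pstate_{\play\pure_{\play}})=\dstate_{\play\pure_{\play}}-\mu$. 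The multiplier $\mu$ is common to all supported coordinates, so subtracting this relation for two supported strategies $a,b$ eliminates it:
\begin{equation}
\label{eq:support-inversion}
\theta'_{\play}(\pstate_{\play a}) - \theta'_{\play}(\pstate_{\play b}) = \dstate_{\play a} - \dstate_{\play b}.
\end{equation}
In words, \eqref{eq:support-inversion} is the ``inverse'' of the choice map along the support: differences of kernel derivatives track differences of scores.

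\textbf{Telescoping across rounds.} Next I would apply \eqref{eq:support-inversion} at rounds $\round$ and $\round+1$ and subtract. This gives $\theta'_{\play}(\pstate_{\play a,\round})-\theta'_{\play}(\pstate_{\play b,\round})=\dstate_{\play a,\round}-\dstate_{\play b,\round}$ together with the analogous identity at $\round+1$; subtracting the two and inserting the score update $\dstate_{\play,\round+1}=\dstate_{\play,\round}+\step_{\round}\npayv_{\play,\round}$ of \eqref{eq:FTRL} coordinatewise yields
\begin{equation}
\bigl(\theta'_{\play}(\pstate_{\play a,\round+1})-\theta'_{\play}(\pstate_{\play a,\round})\bigr) - \bigl(\theta'_{\play}(\pstate_{\play b,\round+1})-\theta'_{\play}(\pstate_{\play b,\round})\bigr) = \step_{\round}(\npayv_{\play a,\round}-\npayv_{\play b,\round}),
\end{equation}
which is exactly the claimed identity.

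\textbf{Main obstacle.} The round-$(\round+1)$ instance of \eqref{eq:support-inversion} needs $a,b\in\supp(\pstate_{\play,\round+1})$, whereas the hypothesis only grants $a,b\in\supp(\pstate_{\play,\round})$. For steep regularizers (e.g.\ the entropic kernel of \cref{MWU}) this is automatic, since $\im\mirror_{\play}=\relint\strats_{\play}$ and every strategy is always supported. For non-steep regularizers (e.g.\ the Euclidean kernel of \cref{Projection}) the support can contract between rounds, in which case $\theta'_{\play}(\pstate_{\play a,\round+1})$ no longer equals $\dstate_{\play a,\round+1}-\mu_{\round+1}$ and the identity can fail. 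I would therefore either carry the extra hypothesis that $a,b$ remain supported through round $\round+1$ --- which is precisely the regime relevant to the instability analysis near a mixed equilibrium, where all equilibrium strategies are played with positive probability --- or state the formal lemma under that strengthened support assumption.
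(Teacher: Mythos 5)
Your proposal is correct and takes essentially the same route as the paper: your KKT stationarity conditions with the common multiplier $\mu$ are precisely the paper's subdifferential decomposition $\dstate_{\play} = \nabla\hreg_{\play}\parens{\pstate_{\play}} + \PCsym$ with $\PCsym \in \pcone\parens{\pstate_{\play}}$ and $\PCsym_{a} = \PCsym_{b}$ for supported $a,b$ (\cref{prop:y in subgradient}), followed by the same subtraction across the two rounds using the score update of \eqref{eq:FTRL}. The caveat you flag is also resolved exactly as in the paper: the formal \cref{lem:Useful Expression} is stated under the strengthened hypothesis that the sequence of play maintains both $a$ and $b$ in its support, and in the instability argument this holds because the play is confined to a neighborhood of the mixed equilibrium small enough that all equilibrium strategies retain probability bounded away from zero.
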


\noindent To proceed with the proof of \cref{theorem:only-if} assume ad absurdum that a mixed \acl{NE} $\eq$ is stochastically asymptotically stable.
Since $\eq$ is mixed, there exist $a,b\in\supp(\eq)$.
Second, the stochastic stability implies that for all $\eps,\conlevel >0$ if $\pstate_0$ belongs to an initial neighborhood $\nhd_{\eps}$, then $\norm{\curr -\eq} <\eps$ for all $\round \geq 0$, with probability at least $1-\conlevel$.
Third, by the triangle inequality for two consecutive instances of the sequence of play $\pstate_{\play,\round},\pstate_{\play,\round+1}$ for any player $\play\in\players$ it holds:
\begin{equation}\label{eq:closeness-per-round}
    |\pstate_{\play a,\round+1 } -\pstate_{\play a,\round }|+|\pstate_{\play b,\round+1 } -\pstate_{\play b,\round }|< \bigoh(\eps) \text{ with probability } 1-\conlevel
\end{equation}
Consider $\eps$ sufficiently small, such that the probabilities of the strategies that belong to the support of the equilibrium are bounded away from $0$, for all the points of the neighborhood. Since $\theta_\play$ is continuously differentiable in $(0,1]$, the differences described in \cref{Main:Useful Expression} are bounded from $\bigoh\parens{\eps}$ due to \cref{eq:closeness-per-round}. Thus, if the sequence of play $\curr$ is contained to an  $\eps-$neighborhood of $\eq$, then the difference of the feedback, for any player $\play\in\players$, to two strategies of the equilibrium is $\bigoh\parens{\eps/\step_\round}$ with probability at least $1-\conlevel$:
\[
 \probof{
\abs{\npayv_{\play a,\round} - \npayv_{\play b,\round}}= \bigoh\parens{\eps/\step_\round}\given\filter_\round} \geq 1-\conlevel
\]
However, from assumption \ref{asm:degen} for a fixed round $\round$ and some player $\play\in\players$, there exist $\noiselb,\problb>0$ such that:
\(
    \probof{\abs{\npayv_{\play a,\round} - \npayv_{\play b,\round}}\geq \noiselb\given\filter_\round} = \problb>0.
\)
Thus by choosing $\eps=\bigoh(\noiselb \step_\round )$ and $\delta=\problb/2$, we obtain a contradiction and our proof is complete.

\section{Discussion}

The equivalence between strict \aclp{NE} and stable attracting states of feedback-limited \eqref{eq:FTRL} implies that any equilibrium that exhibits payoff-indiffirence between different strategies is inherently unstable.
This fragility has already been remarked from an epistemic viewpoint \cite{vD87}, and our results provide a complementary justification based on realistic models of learning.

In the converse direction, the generality of the feedback models considered also provides a template for proving stochastic asymptotic stability results in more demanding learning environments.
A particular case of interest arises in online ad auctions where payoffs are observed with delay (or are dropped completely):
depending on the delay, the estimation of the player's payoff could exhibit a bias relative to the sampling strategy, and our generic conditions provide an estimate of how large the delays can be before convergence breaks down.
This opens the door to an array of fruitful research directions that we intend to pursue in the future.

%
%
%



\bibliography{bibtex/IEEEabrv,bibtex/Bibliography-PM}

\begin{thebibliography}{45}
\providecommand{\natexlab}[1]{#1}
\providecommand{\url}[1]{\texttt{#1}}
\expandafter\ifx\csname urlstyle\endcsname\relax
  \providecommand{\doi}[1]{doi: #1}\else
  \providecommand{\doi}{doi: \begingroup \urlstyle{rm}\Url}\fi

\bibitem[Arora et~al.(2012)Arora, Hazan, and Kale]{AHK12}
Sanjeev Arora, Elad Hazan, and Satyen Kale.
\newblock The multiplicative weights update method: A meta-algorithm and
  applications.
\newblock \emph{Theory of Computing}, 8\penalty0 (1):\penalty0 121--164, 2012.

\bibitem[Auer et~al.(1995)Auer, Cesa-Bianchi, Freund, and Schapire]{ACBFS95}
Peter Auer, Nicol{\`o} Cesa-Bianchi, Yoav Freund, and Robert~E. Schapire.
\newblock Gambling in a rigged casino: The adversarial multi-armed bandit
  problem.
\newblock In \emph{Proceedings of the 36th Annual Symposium on Foundations of
  Computer Science}, 1995.

\bibitem[Bubeck and Cesa-Bianchi(2012)]{BCB12}
S{\'e}bastien Bubeck and Nicol{\`o} Cesa-Bianchi.
\newblock Regret analysis of stochastic and nonstochastic multi-armed bandit
  problems.
\newblock \emph{Foundations and Trends in Machine Learning}, 5\penalty0
  (1):\penalty0 1--122, 2012.

\bibitem[Cesa-Bianchi and Lugosi(2006)]{CBL06}
Nicol{\`o} Cesa-Bianchi and G{\'a}bor Lugosi.
\newblock \emph{Prediction, Learning, and Games}.
\newblock Cambridge University Press, 2006.

\bibitem[Cheung and Piliouras(2019)]{CP19}
Yun~Kuen Cheung and Georgios Piliouras.
\newblock Vortices instead of equilibria in minmax optimization: Chaos and
  butterfly effects of online learning in zero-sum games.
\newblock In \emph{COLT '19: Proceedings of the 32nd Annual Conference on
  Learning Theory}, 2019.

\bibitem[Cohen et~al.(2017)Cohen, H{\'e}liou, and Mertikopoulos]{CHM17-NIPS}
Johanne Cohen, Am{\'e}lie H{\'e}liou, and Panayotis Mertikopoulos.
\newblock Learning with bandit feedback in potential games.
\newblock In \emph{NIPS '17: Proceedings of the 31st International Conference
  on Neural Information Processing Systems}, 2017.

\bibitem[Cominetti et~al.(2010)Cominetti, Melo, and Sorin]{CMS10}
Roberto Cominetti, Emerson Melo, and Sylvain Sorin.
\newblock A payoff-based learning procedure and its application to traffic
  games.
\newblock \emph{Games and Economic Behavior}, 70\penalty0 (1):\penalty0 71--83,
  2010.

\bibitem[Coucheney et~al.(2015)Coucheney, Gaujal, and Mertikopoulos]{CGM15}
Pierre Coucheney, Bruno Gaujal, and Panayotis Mertikopoulos.
\newblock Penalty-regulated dynamics and robust learning procedures in games.
\newblock \emph{Mathematics of Operations Research}, 40\penalty0 (3):\penalty0
  611--633, August 2015.

\bibitem[Dekel and Fudenberg(1990)]{DF90}
Eddie Dekel and Drew Fudenberg.
\newblock Rational behavior with payoff uncertainty.
\newblock \emph{Journal of Economic Theory}, 52:\penalty0 243--267, 1990.

\bibitem[Flokas et~al.(2020)Flokas, Vlatakis-Gkaragkounis, Lianeas,
  Mertikopoulos, and Piliouras]{FVGL+20}
Lampros Flokas, Emmanouil~Vasileios Vlatakis-Gkaragkounis, Thanasis Lianeas,
  Panayotis Mertikopoulos, and Georgios Piliouras.
\newblock No-regret learning and mixed {Nash} equilibria: {They} do not mix.
\newblock In \emph{NeurIPS '20: Proceedings of the 34th International
  Conference on Neural Information Processing Systems}, 2020.

\bibitem[Fudenberg and Tirole(1991)]{FT91}
Drew Fudenberg and Jean Tirole.
\newblock \emph{Game Theory}.
\newblock The MIT Press, 1991.

\bibitem[Hall and Heyde(1980)]{HH80}
P.~Hall and C.~C. Heyde.
\newblock \emph{Martingale Limit Theory and Its Application}.
\newblock Probability and Mathematical Statistics. Academic Press, New York,
  1980.

\bibitem[Hannan(1957)]{Han57}
James Hannan.
\newblock Approximation to {Bayes} risk in repeated play.
\newblock In Melvin Dresher, Albert~William Tucker, and P.~Wolfe, editors,
  \emph{Contributions to the Theory of Games, Volume {III}}, volume~39 of
  \emph{Annals of Mathematics Studies}, pages 97--139. Princeton University
  Press, Princeton, NJ, 1957.

\bibitem[Harper(2011)]{Har11}
Marc Harper.
\newblock Escort evolutionary game theory.
\newblock \emph{Physica D: Nonlinear Phenomena}, 240\penalty0 (18):\penalty0
  1411--1415, September 2011.

\bibitem[Hart and Mas-Colell(2000)]{HMC00}
Sergiu Hart and Andreu Mas-Colell.
\newblock A simple adaptive procedure leading to correlated equilibrium.
\newblock \emph{Econometrica}, 68\penalty0 (5):\penalty0 1127--1150, September
  2000.

\bibitem[Hart and Mas-Colell(2003)]{HMC03}
Sergiu Hart and Andreu Mas-Colell.
\newblock Uncoupled dynamics do not lead to {Nash} equilibrium.
\newblock \emph{American Economic Review}, 93\penalty0 (5):\penalty0
  1830--1836, 2003.

\bibitem[Hewitt and Stromberg(1975)]{HS75}
Edwin Hewitt and Karl Stromberg.
\newblock \emph{Real and abstract analysis}.
\newblock Graduate Texts in Mathematics. Springer-Verlag, New York, NY, 1975.

\bibitem[Hofbauer and Sigmund(1998)]{HS98}
Josef Hofbauer and Karl Sigmund.
\newblock \emph{Evolutionary Games and Population Dynamics}.
\newblock Cambridge University Press, Cambridge, UK, 1998.

\bibitem[Hofbauer and Sigmund(2003)]{HS03}
Josef Hofbauer and Karl Sigmund.
\newblock Evolutionary game dynamics.
\newblock \emph{Bulletin of the American Mathematical Society}, 40\penalty0
  (4):\penalty0 479--519, July 2003.

\bibitem[Hofbauer et~al.(2009)Hofbauer, Sorin, and Viossat]{HSV09}
Josef Hofbauer, Sylvain Sorin, and Yannick Viossat.
\newblock Time average replicator and best reply dynamics.
\newblock \emph{Mathematics of Operations Research}, 34\penalty0 (2):\penalty0
  263--269, May 2009.

\bibitem[Khasminskii(2012)]{Kha12}
Rafail~Z. Khasminskii.
\newblock \emph{Stochastic Stability of Differential Equations}.
\newblock Number~66 in Stochastic Modelling and Applied Probability.
  Springer-Verlag, Berlin, 2 edition, 2012.

\bibitem[Kleinberg et~al.(2011)Kleinberg, Piliouras, and Tardos]{KPT11}
Robert~David Kleinberg, Georgios Piliouras, and {\'E}va Tardos.
\newblock Load balancing without regret in the bulletin board model.
\newblock \emph{Distributed Computing}, 24\penalty0 (1):\penalty0 21--29, 2011.

\bibitem[Lahkar and Sandholm(2008)]{LS08}
Ratul Lahkar and William~H. Sandholm.
\newblock The projection dynamic and the geometry of population games.
\newblock \emph{Games and Economic Behavior}, 64:\penalty0 565--590, 2008.

\bibitem[Leslie(2006)]{Les06}
David~S. Leslie.
\newblock Generalised weakened fictitious play.
\newblock \emph{Games and Economic Behavior}, 56\penalty0 (2):\penalty0
  285--298, August 2006.

\bibitem[Leslie and Collins(2005)]{LC05}
David~S. Leslie and E.~J. Collins.
\newblock Individual {$Q$}-learning in normal form games.
\newblock \emph{SIAM Journal on Control and Optimization}, 44\penalty0
  (2):\penalty0 495--514, 2005.

\bibitem[Littlestone and Warmuth(1994)]{LW94}
Nick Littlestone and Manfred~K. Warmuth.
\newblock The weighted majority algorithm.
\newblock \emph{Information and Computation}, 108\penalty0 (2):\penalty0
  212--261, 1994.

\bibitem[Mertikopoulos and Sandholm(2016)]{MS16}
Panayotis Mertikopoulos and William~H. Sandholm.
\newblock Learning in games via reinforcement and regularization.
\newblock \emph{Mathematics of Operations Research}, 41\penalty0 (4):\penalty0
  1297--1324, November 2016.

\bibitem[Mertikopoulos and Zhou(2019)]{MZ19}
Panayotis Mertikopoulos and Zhengyuan Zhou.
\newblock Learning in games with continuous action sets and unknown payoff
  functions.
\newblock \emph{Mathematical Programming}, 173\penalty0 (1-2):\penalty0
  465--507, January 2019.

\bibitem[Mertikopoulos et~al.(2018)Mertikopoulos, Papadimitriou, and
  Piliouras]{MPP18}
Panayotis Mertikopoulos, Christos~H. Papadimitriou, and Georgios Piliouras.
\newblock Cycles in adversarial regularized learning.
\newblock In \emph{SODA '18: Proceedings of the 29th annual ACM-SIAM Symposium
  on Discrete Algorithms}, 2018.

\bibitem[Monnot and Piliouras(2017)]{MP17}
Barnab{\'e} Monnot and Georgios Piliouras.
\newblock Limits and limitations of no-regret learning in games.
\newblock \emph{The Knowledge Engineering Review}, 32, 2017.

\bibitem[Palaiopanos et~al.(2017)Palaiopanos, Panageas, and Piliouras]{PPP17}
Gerasimos Palaiopanos, Ioannis Panageas, and Georgios Piliouras.
\newblock Multiplicative weights update with constant step-size in congestion
  games: Convergence, limit cycles and chaos.
\newblock In \emph{NIPS '17: Proceedings of the 31st International Conference
  on Neural Information Processing Systems}, 2017.

\bibitem[Rockafellar and Wets(1998)]{RW98}
Ralph~Tyrrell Rockafellar and Roger J.~B. Wets.
\newblock \emph{Variational Analysis}, volume 317 of \emph{A Series of
  Comprehensive Studies in Mathematics}.
\newblock Springer-Verlag, Berlin, 1998.

\bibitem[Rustichini(1999)]{Rus99}
Aldo Rustichini.
\newblock Optimal properties of stimulus-response learning models.
\newblock \emph{Games and Economic Behavior}, 29\penalty0 (1-2):\penalty0
  244--273, 1999.

\bibitem[Sandholm(2010)]{San10}
William~H. Sandholm.
\newblock \emph{Population Games and Evolutionary Dynamics}.
\newblock MIT Press, Cambridge, MA, 2010.

\bibitem[Shalev-Shwartz(2011)]{SS11}
Shai Shalev-Shwartz.
\newblock Online learning and online convex optimization.
\newblock \emph{Foundations and Trends in Machine Learning}, 4\penalty0
  (2):\penalty0 107--194, 2011.

\bibitem[Shalev-Shwartz and Singer(2006)]{SSS06}
Shai Shalev-Shwartz and Yoram Singer.
\newblock Convex repeated games and {Fenchel} duality.
\newblock In \emph{NIPS' 06: Proceedings of the 19th Annual Conference on
  Neural Information Processing Systems}, pages 1265--1272. MIT Press, 2006.

\bibitem[Slivkins(2019)]{Sli19}
Aleksandrs Slivkins.
\newblock Introduction to multi-armed bandits.
\newblock \emph{Foundations and Trends in Machine Learning}, 12\penalty0
  (1-2):\penalty0 1--286, November 2019.

\bibitem[Sorin(2009)]{Sor09}
Sylvain Sorin.
\newblock Exponential weight algorithm in continuous time.
\newblock \emph{Mathematical Programming}, 116\penalty0 (1):\penalty0 513--528,
  2009.

\bibitem[Syrgkanis et~al.(2015)Syrgkanis, Agarwal, Luo, and Schapire]{SALS15}
Vasilis Syrgkanis, Alekh Agarwal, Haipeng Luo, and Robert~E. Schapire.
\newblock Fast convergence of regularized learning in games.
\newblock In \emph{NIPS '15: Proceedings of the 29th International Conference
  on Neural Information Processing Systems}, pages 2989--2997, 2015.

\bibitem[van Damme(1987)]{vD87}
Eric van Damme.
\newblock \emph{Stability and perfection of {Nash} equilibria}.
\newblock Springer-Verlag, Berlin, 1987.

\bibitem[Viossat and Zapechelnyuk(2013)]{VZ13}
Yannick Viossat and Andriy Zapechelnyuk.
\newblock No-regret dynamics and fictitious play.
\newblock \emph{Journal of Economic Theory}, 148\penalty0 (2):\penalty0
  825--842, March 2013.

\bibitem[Vovk(1990)]{Vov90}
Vladimir~G. Vovk.
\newblock Aggregating strategies.
\newblock In \emph{COLT '90: Proceedings of the 3rd Workshop on Computational
  Learning Theory}, pages 371--383, 1990.

\bibitem[Weibull(1995)]{Wei95}
J{\"o}rgen~W. Weibull.
\newblock \emph{Evolutionary Game Theory}.
\newblock MIT Press, Cambridge, MA, 1995.

\bibitem[Weinberg(2000)]{Wei00}
Steven Weinberg.
\newblock \emph{Quantum Field Theory}.
\newblock Cambridge University Press, Cambridge, UK, 2000.

\bibitem[Zinkevich(2003)]{Zin03}
Martin Zinkevich.
\newblock Online convex programming and generalized infinitesimal gradient
  ascent.
\newblock In \emph{ICML '03: Proceedings of the 20th International Conference
  on Machine Learning}, pages 928--936, 2003.

\end{thebibliography}

\numberwithin{lemma}{section}		
\numberwithin{proposition}{section}		
\numberwithin{equation}{section}		
\newpage
\appendix

\section{Proof of stability of strict Nash equilibria}\label{appendix stability}
 Looking at the continuous analogues of FTRL algorithms, the standard methodology leverages  potential-Lyapunov arguments. However, in the discrete case multiple intrinsic challenges arise especially in the presence of uncertainty. In the prototypical example of MWU/EW (\cref{MWU}) the standard potential function is the Kullback-Leibler divergence. Thus, a natural candidate for the generalization from MWU/EW to any FTRL algorithm would be the Bregman divergence. Unfortunatelly, Bregman divergence does not always capture the actual behavior of FTRL algorithms in the boundary of the simplex. Hence, in order to trace the information entailed in the dual space, where FTRL algorithms truly evolve, we leverage the Fenchel coupling. In the first three subsections we present the main properties of Bregman divergence (\cref{ss:Bregman}),  the dichotomy among different regularizers (\cref{ss:steep vs non-steep}) and then Fenchel coupling and its properties (\cref{ss:Fenchel}). Then we explore a structural property of strict \aclp{NE}, namely variational stability (\cref{ss:variational stability}). In the last section before we present our proofs we introduce some notions from martingales limit theory. With these last two sections we have established all the necessary machinery to bound the influence of the uncertainty in the behavior of the algorithm and finally prove the stability result.

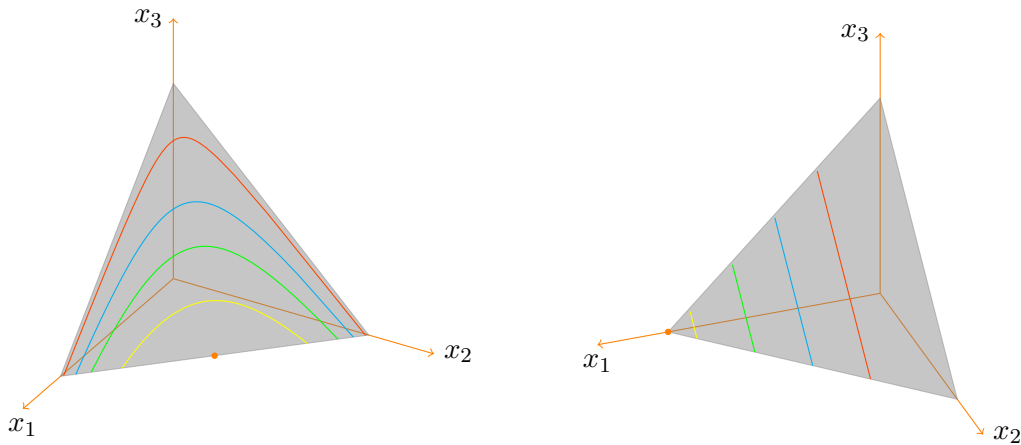
\begin{figure}[h!]
\usetikzlibrary{3d}
\tdplotsetmaincoords{60}{120}
\begin{tikzpicture}[tdplot_main_coords]
    \def\laxis{4}
    \def\ltriangle{3}
    \begin{scope}[->,orange,tdplot_main_coords]
        \draw (0,0,0) -- (\laxis,0,0) node [below] {\textcolor{black}{$x_1$}};
        \draw (0,0,0) -- (0,\laxis,0) node [right] {\textcolor{black}{$x_2$}};
        \draw (0,0,0) -- (0,0,\laxis) node [left] {\textcolor{black}{$x_3$}};
    \end{scope}
    \filldraw [opacity=.45,gray,tdplot_main_coords] (\ltriangle,0,0) -- (0,\ltriangle,0) -- (0,0,\ltriangle) -- cycle;
    \filldraw[orange,tdplot_main_coords] (1.5,1.5,0.0) circle (1pt);

%

\def\c{0.01*(1-0.01)}
\draw [OrangeRed,tdplot_main_coords]({\ltriangle*0.01}, {\ltriangle*(1-0.01)}, 0)
\foreach \t in {0.02, 0.03 ,..., 0.99}
{   --( {\ltriangle*\t},{\ltriangle*(\c/\t)}, {\ltriangle*(1-\t-\c/\t)}  )
};
\def\c{0.05*(1-0.05)}
\draw ({\ltriangle*0.05} ,{\ltriangle*(1-0.05)},0)[cyan]
\foreach \t in {0.07, 0.08 ,..., 0.95}
{   --( {\ltriangle*\t},{\ltriangle*(\c/\t)}, {\ltriangle*(1-\t-\c/\t)}   )
};
\def\c{0.1*(1-0.1)}
\draw ({\ltriangle*0.1} ,{\ltriangle*(1-0.1)},0)[green]
\foreach \t in {0.11, 0.12 ,..., 0.9}
{   --( {\ltriangle*\t}, {\ltriangle*(\c/\t)}, {\ltriangle*(1-\t-\c/\t)}   )
};
\def\c{0.2*(1-0.2)}
\draw ({\ltriangle*0.2} ,{\ltriangle*(1-0.2)},0)[Yellow]
\foreach \t in {0.21, 0.22 ,..., 0.8}
{   --( {\ltriangle*\t}, {\ltriangle*(\c/\t)} , {\ltriangle*(1-\t-\c/\t)})
};
\def\transferx{-10}
\def\transfery{0}
\def\transferz{-2.2}
    \tdplotsetrotatedcoords{0}{0}{-40}
    \begin{scope}[->,orange,tdplot_rotated_coords]
        \draw (0+\transferx,0+\transfery,0+\transferz) -- (\laxis+\transferx,0+\transfery,0+\transferz) node [below] {\textcolor{black}{$x_1$}};
        \draw (0+\transferx,0+\transfery,0+\transferz) -- (0+\transferx,\laxis+\transfery,0+\transferz) node [right] {\textcolor{black}{$x_2$}};
        \draw (0+\transferx,0+\transfery,0+\transferz) -- (0+\transferx,0+\transfery,\laxis+\transferz) node [left] {\textcolor{black}{$x_3$}};
    \end{scope}
 
    \filldraw [opacity=.45,gray,tdplot_rotated_coords] (\ltriangle + \transferx,0+\transfery,0+\transferz) -- (0+\transferx,\ltriangle+\transfery,0+\transferz) -- (0+\transferx,0+\transfery,\ltriangle+\transferz) -- cycle;
    \filldraw[orange,tdplot_rotated_coords] (1*\ltriangle+\transferx,0+\transfery,0+\transferz) circle (1pt);
\def\c{0.3}
\draw[OrangeRed,tdplot_rotated_coords] ({\ltriangle*\c +\transferx},{ (1-\c)*\ltriangle +\transfery} ,{0+\transferz})
\foreach \t in {0.01, 0.02 ,..., 0.7}
{   --( {\ltriangle*\c+\transferx}, {\ltriangle*(1-\t-\c) +\transfery}, {\ltriangle*(\t)+\transferz}  )
};
\def\c{0.5}
\draw[cyan,tdplot_rotated_coords] ({\ltriangle*\c +\transferx},{ (1-\c)*\ltriangle +\transfery} ,{0+\transferz})
\foreach \t in {0.01, 0.02 ,..., 0.5}
{   --( {\ltriangle*\c+\transferx}, {\ltriangle*(1-\t-\c) +\transfery}, {\ltriangle*(\t)+\transferz}  )
};
\def\c{0.7}
\draw[green,tdplot_rotated_coords] ({\ltriangle*\c +\transferx},{ (1-\c)*\ltriangle +\transfery} ,{0+\transferz})
\foreach \t in {0.01, 0.02 ,..., 0.3}
{   --( {\ltriangle*\c+\transferx}, {\ltriangle*(1-\t-\c) +\transfery}, {\ltriangle*(\t)+\transferz}  )
};
\def\c{0.9}
\draw[Yellow,tdplot_rotated_coords] ({\ltriangle*\c +\transferx},{ (1-\c)*\ltriangle +\transfery} ,{0+\transferz})
\foreach \t in {0.01, 0.02 ,..., 0.1}
{   --( {\ltriangle*\c+\transferx}, {\ltriangle*(1-\t-\c) +\transfery}, {\ltriangle*(\t)+\transferz}  )
};
\end{tikzpicture}
\caption{The level sets of KL-divergence}
\label{level sets of KL}
\end{figure}

\subsection{Bregman divergence}\label{ss:Bregman}
 Bregman divergence provides a way to measure the distance of two points that belong to the simplex. Its properties render it a useful tool to prove convergence results. Below we state its definition and prove these properties that would be crucial in the establishment of our proof. Given a fixed point $p\in\pspace$ then the Bregman divergence of a function $\hreg$ is defined for all points $x\in\pspace$ as
\begin{equation}\label{eq:Bregman}
\breg_\hreg\parens{p,x} = \hreg\parens{p} - \hreg\parens{x} - \hreg'\parens{x;p-x} \text{ for all }p,x \in\pspace
\end{equation}
where $\hreg'\parens{x;p-x}$ is the one-sided derivative
\begin{equation}
    \hreg'\parens{x;p-x}\equiv \lim_{t\to 0^+} t^{-1} \bracks{\hreg\parens{x+t\parens{p-x}} -\hreg\parens{x}}
\end{equation}
Notice that this definition of the Bregman divergence permits to work also with points on the boundary. It is possible that the limit of $\breg_\hreg$ attains the value of $+\infty$ if $\hreg'\parens{x;p-x} = -\infty$, as  $x\to p$, where $p$ is a point of the boundary. However, the condition below ensures that this is not the case.
\begin{equation}\label{eq:Bregman reciprocity}\tag{Reciprocity}
    \breg_\hreg\parens{p;x} \to 0 \text{ whenever }x\to p
\end{equation}
This is known as the reciprocity condition. What this property actually means is that the sublevel sets of $\breg\parens{p,\cdot}$ are neighborhoods of $p$. This is illustrated in \cref{level sets of KL}, when the function employed is the negative Shannon-entropy and the induced Bregman divergnce the Kullback–Leibler divergence. %
Notice that for most decomposable functions $\hreg$, this property holds. Below we present a proof of this statement.
\begin{proposition}\label{prop:rec Bregman}
If $\hreg\parens{x} = \sum_i\theta\parens{x_i}$, with $\theta$ having the properties described in \eqref{h properties} and furthermore it holds that $\theta'\parens{x} = o\parens{1/x}$ for $x$ close to $0$, then $\breg_\hreg\parens{p;x}\to 0 $ whenever $x\to p$ for all $x,p\in\pspace$.
\end{proposition}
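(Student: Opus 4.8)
The plan is to exploit the separable structure $\hreg\parens{x}=\sum_i\theta\parens{x_i}$ to split the divergence coordinatewise and reduce everything to a one–dimensional estimate on the kernel $\theta$, handling interior and boundary coordinates of $p$ separately. First I would note that for $x$ in a small enough neighborhood of $p$, every coordinate $i$ with $p_i>0$ already satisfies $x_i>0$, so the only potentially infinite one–sided derivative (the right derivative of $\theta$ at $0$, which may equal $-\infty$ for steep kernels such as the entropy) never enters: in each coordinate the direction $p_i-x_i$ is taken either at a point where $\theta$ is differentiable, or else $p_i=x_i=0$ and the contribution is zero. Since each coordinate limit is then finite, the one–sided derivative of the finite sum is the sum of the coordinate derivatives, and
\[
\breg_\hreg\parens{p,x}=\sum_i\bracks{\theta\parens{p_i}-\theta\parens{x_i}-\theta'\parens{x_i}\parens{p_i-x_i}},
\]
so it suffices to show each summand vanishes as $x\to p$.

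For a coordinate with $p_i>0$ the argument is routine: as $x\to p$ we have $x_i\to p_i>0$, hence eventually $x_i$ stays in a compact subinterval of $(0,1]$ on which $\theta$ is $C^2$. Continuity of $\theta$ and $\theta'$ then forces $\theta\parens{p_i}-\theta\parens{x_i}-\theta'\parens{x_i}\parens{p_i-x_i}\to 0$, and a second–order Taylor expansion even bounds this summand by $\bigoh((p_i-x_i)^2)$.

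The crux — and the only place where the extra hypothesis $\theta'\parens{x}=o\parens{1/x}$ is used — is a coordinate with $p_i=0$, where $x_i\to 0$. If $x_i=0$ the summand is zero; for $x_i>0$ the direction is $p_i-x_i=-x_i$, so the summand equals
\[
\theta\parens{0}-\theta\parens{x_i}+x_i\,\theta'\parens{x_i}.
\]
The first difference tends to $0$ by continuity of $\theta$ on $[0,1]$, while the second tends to $0$ precisely because $\theta'\parens{x}=o\parens{1/x}$ is the statement that $x\,\theta'\parens{x}\to 0$ as $x\to 0^+$. I expect this term to be the main obstacle: it is exactly the possibly singular quantity $x_i\theta'\parens{x_i}$ that the added hypothesis is designed to control — without it a kernel with $\theta'\parens{x}\sim c/x$ would give $x_i\theta'\parens{x_i}\to c\neq 0$ and break reciprocity, whereas for the entropy $\theta\parens{x}=x\log x$ one has $x\theta'\parens{x}=x\log x+x\to 0$, matching the running examples. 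Summing the finitely many coordinate contributions, each tending to $0$, yields $\breg_\hreg\parens{p,x}\to 0$, which is the reciprocity condition.
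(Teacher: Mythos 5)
Your proof is correct and takes essentially the same route as the paper's: reduce via separability to the one-dimensional kernel, where the only nontrivial case is a boundary coordinate $p_i=0$, handled by continuity of $\theta$ on $[0,1]$ (for $\theta(0)-\theta(x_i)$) together with the hypothesis $\theta'(x)=o(1/x)$ (for $x_i\theta'(x_i)\to 0$). The paper's own proof records only this boundary computation, so your additional care with the interior coordinates and with why the possibly infinite one-sided derivative at $0$ never enters is just a fuller write-up of the same argument.
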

\begin{proof}
It is sufficient to prove that $\lim_{x\to 0}\parens{\theta\parens{0} - \theta\parens{x} - \theta'\parens{x}\parens{0-x}} = 0$. The difference of the first two terms is obviously gives zero. Now, for the last term notice that if $\theta'\parens{x} = o\parens{1/x}$ for $x$ close to $0$, then $\lim_{x\to0}x\theta'\parens{x} = 0$ and the proof is completed.
\end{proof}
\noindent Additionally, Bregman divergence satisfies the properties described below.
\begin{proposition}\label{prop:Prop Bregman}
Let $\hreg$ be a $\stcon$-strongly convex function defined on the simplex $\pspace = \Delta\parens{\pures}$, that has the properties described in \ref{h properties} and let $\Delta_{p}$ be the union of the relative interiors of the faces of $\pspace$ that contain $p$ \ie
\begin{equation}
    \Delta_{p} = \braces{x\in\pspace : \supp\parens{p}\subseteq\supp\parens{x}} = \braces{x\in\pspace: x_a >0 \text{ whenever }p_a>0}
\end{equation}
Then
\begin{enumerate}
    \item $\breg_\hreg\parens{p,x}< \infty$ whenever $x\in\Delta_{p}$.
    \item $\breg_\hreg\parens{p,x}\geq 0$ for all $x\in\pspace$, with equality if and only if $p = x$, more particularly
    \begin{equation}
        \breg_\hreg\parens{p,x} \geq \dfrac{1}{2}\stcon\norm{x-p}^2\text{ for all }x\in\pspace
    \end{equation}
\end{enumerate}
\end{proposition}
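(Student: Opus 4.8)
The plan is to treat the two claims separately, since the first is purely about finiteness (an issue that arises only at the boundary of the simplex) while the second is a direct consequence of strong convexity.

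For the finiteness claim, I would exploit the decomposable structure $\hreg(x) = \sum_i \theta(x_i)$ to split the one-sided derivative coordinate by coordinate, using linearity of the difference-quotient limit over a finite sum: $\hreg'(x;p-x) = \sum_i \theta'(x_i; p_i - x_i)$. The endpoint terms $\hreg(p) = \sum_i \theta(p_i)$ and $\hreg(x) = \sum_i \theta(x_i)$ are automatically finite, because $\theta$ is continuous on the compact interval $[0,1]$ and every coordinate lies in $[0,1]$; hence the only way $\breg_\hreg(p,x)$ could fail to be finite is if the directional-derivative term were $-\infty$. I would then examine each coordinate. When $x_i > 0$, the $C^2$-smoothness of $\theta$ on $(0,1]$ gives $\theta'(x_i; p_i - x_i) = \theta'(x_i)(p_i - x_i)$, which is finite. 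When $x_i = 0$, the hypothesis $x \in \Delta_p$ forces $p_i = 0$ (since $\supp(p)\subseteq\supp(x)$ means $p_i>0 \Rightarrow x_i>0$), so the direction vanishes in that coordinate and the contribution is $\theta'(0;0)=0$. Every summand is therefore finite, the directional derivative is finite, and $\breg_\hreg(p,x) < \infty$.

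For the quadratic lower bound, I would use the defining property of $\stcon$-strong convexity: the function $g \defeq \hreg - \tfrac{\stcon}{2}\norm{\cdot}^2$ is convex on $\pspace$. The subgradient inequality in one-sided form, $g(p) \ge g(x) + g'(x;p-x)$, holds for any convex function because the quotients $t^{-1}[g(x+t(p-x))-g(x)]$ are nondecreasing in $t$ and equal $g(p)-g(x)$ at $t=1$; this remains valid at boundary points and even when $g'(x;p-x)=-\infty$. Writing $g'(x;p-x) = \hreg'(x;p-x) - \stcon\inner{x}{p-x}$ (the quadratic part being smooth) and expanding $g(p)$ and $g(x)$, I would rearrange into $\hreg(p) - \hreg(x) - \hreg'(x;p-x) \ge \tfrac{\stcon}{2}\bigl(\norm{p}^2 + \norm{x}^2 - 2\inner{x}{p}\bigr) = \tfrac{\stcon}{2}\norm{p-x}^2$, which is exactly $\breg_\hreg(p,x) \ge \tfrac{\stcon}{2}\norm{p-x}^2$. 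Nonnegativity is then immediate, and the equality characterization follows since $\stcon>0$ forces $\breg_\hreg(p,x)>0$ whenever $p\ne x$, while $\breg_\hreg(p,p) = \hreg(p)-\hreg(p)-\hreg'(p;0) = 0$.

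The routine steps here really are routine; the one place that demands care is the manipulation of the one-sided directional derivative at points on the boundary of the simplex, where $\hreg$ need not be differentiable and, for steep kernels such as the entropy, individual coordinate derivatives can blow up. I expect this to be the main obstacle: I must justify the coordinatewise splitting of $\hreg'(x;p-x)$ as a valid identity in the extended reals, and confirm that the convex subgradient inequality survives when $\hreg'(x;p-x) = -\infty$. Both facts are underwritten by the monotonicity of convex difference quotients, so no regularity beyond \ref{h properties} and strong convexity is required.
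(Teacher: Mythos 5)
Your proof is correct; it differs from the paper's mainly in packaging, and the differences are worth noting. For the finiteness claim, the paper argues at the level of $\hreg$ itself: when $x\in\Delta_p$, the map $t\mapsto\hreg\parens{x+t\parens{p-x}}$ is finite and smooth in a neighborhood of $t=0$, so the one-sided derivative, and hence $\breg_\hreg\parens{p,x}$, is finite. Your coordinatewise version of this via decomposability \textendash\ observing that $x_i=0$ forces $p_i=0$, so the potentially singular coordinates receive the zero direction \textendash\ is the same idea made explicit at the level of the kernels, and it is arguably more transparent about why steep kernels such as the entropy cause no harm. For the quadratic bound, the paper proceeds directly: it divides the strong-convexity inequality $\hreg\parens{tp+(1-t)x}\leq t\hreg\parens{p}+(1-t)\hreg\parens{x}-\tfrac{1}{2}\stcon t(1-t)\norm{x-p}^2$ by $t$, rearranges, and sends $t\to0^+$, at which point the difference quotient converges by definition to $\hreg'\parens{x;p-x}$ (possibly $-\infty$, which only strengthens the conclusion). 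Your route through $g=\hreg-\tfrac{\stcon}{2}\norm{\cdot}^2$ and the subgradient inequality for $g$ is equivalent but Euclidean-specific: both the characterization ``$\stcon$-strong convexity of $\hreg$ if and only if convexity of $g$'' and the polarization step $\norm{p}^2+\norm{x}^2-2\inner{x}{p}=\norm{p-x}^2$ require the norm to come from an inner product, whereas the paper's direct manipulation is valid for strong convexity with respect to an arbitrary norm. In the present setting (decomposable regularizers with $\inf\theta''>0$, hence strong convexity in the Euclidean norm) this costs nothing, and your extended-real bookkeeping for the case $\hreg'\parens{x;p-x}=-\infty$ is careful and correct; just be aware that the paper's direct argument is the one that generalizes beyond inner-product norms.
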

\begin{proof} For the first part, if $x\in\Delta_p$ then $\hreg\parens{x+t\parens{x-p}}$ is finite and smooth in a neighborhood of $0$ and thus $\breg\parens{p,x}$ is also finite.\\
The second  part of the proposition, let $z=x-p$ then strong convexity yields
\begin{align*}
    \hreg\parens{x + tz}\leq t\hreg\parens{p} +\parens{1-t}\hreg\parens{x} - \dfrac{1}{2}\stcon t\parens{1-t}\norm{x-p}^2&\\
    t^{-1}\parens{\hreg\parens{x + tz} - \hreg\parens{x}} \leq \hreg\parens{p} -\hreg\parens{x}  - \dfrac{1}{2}\parens{1-t}\stcon \norm{x-p}^2&\\
    \hreg\parens{p} -\hreg\parens{x} -  t^{-1}\parens{\hreg\parens{x + tz} - \hreg\parens{x}} \geq \dfrac{1}{2}\parens{1-t}\stcon \norm{x-p}^2&
\end{align*}
And by taking $t\to 0$, we obtain the result.
\end{proof}
We mention at this point that from \eqref{h properties}, since for each $\play\in\players$: $\inf_{\in[0,1]}\theta_\play''>0$, there exists $\stcon_\play >0$ such that for all $x,y\in[0,1]$ and $t\in[0,1]$
\begin{equation}
    \theta_\play\parens{tx+(1-t)y}\leq t\theta_\play\parens{x} + \parens{1-t}\theta_\play\parens{y} - \dfrac{\stcon_\play}{2}t\parens{1-t}\abs{x-y}^2
\end{equation}
\noindent In all the proofs
$\hreg$ symbolizes the aggregate function of all the regularizers \ie $\hreg\parens{x}=\sum_{\play}\hreg_{\play}\parens{x_{\play}}$, with strong convexity parameter $\stcon\equiv \min_{\play}\stcon_{\play}$.
\subsection{Steep vs non-steep}\label{ss:steep vs non-steep} In this section we elaborate in detail the dichotomy of the properties of different regularizers mentioned in \cref{stability remark}. As we mentioned players may have different regularizers $\hreg_\play$ employed in their choice maps $\mirror_\play\parens{y}=\argmax_{x\in\pspace_\play}\left\{\inner{x}{y}-\hreg_\play\parens{x}\right\}$. Depending on the regularizer chosen, FTRL dynamics may differ significantly.
To formally express this difference,  it is convenient to consider that $\hreg$ is an extended-real valued function $\hreg:\mathcal{V}\to\R\cup\braces{\infty}$ with value $\infty$ outside of the simplex $\pspace$. Then the subdifferential of $\hreg$ at $x\in\mathcal{V}$ is defined as:
\begin{equation}
\partial\hreg\parens{x} = \braces{y\in {\mathcal{V}}^* : \hreg\parens{{x}'}\geq \hreg\parens{x} +\inner{y}{{x}'-x} \; \forall {x}'\in \mathcal{V}}
\end{equation}
If $\partial\hreg\parens{x}$ is nonempty, then $\hreg$ is called subdifferentiable at $x\in\pspace$.  When $x\in\relint\parens{\pspace}$ then $\partial\hreg\parens{x}$ is always non-empty or $\relint\parens{\pspace}\subseteq \dom\partial\hreg\equiv \braces{x\in\pspace:\partial\hreg\parens{x}\neq \emptyset}$.  Notice that when the gradient of $\hreg$ exists, then its subgradient always contains it. With these in mind, we present a typical separation between the different regularizers. On the one hand,  \emph{steep} regularizers like the negative Shannon-entropy  become infinitely steep as $x$ approaches the boundary or $\norm{\nabla h\parens{x}}\to\infty$. On the other hand, \emph{non-steep} are everywhere differentiable, like the Euclidean, allowing the sequence of play to transfer between the different faces of the simplex. In the dual space of payoffs, steepness implies that the choice map is not surjective (since it cannot map all payoff vectors to points of the boundary), it is however injective (it maps a payoff vector plus a  multiple of $(1,1,\ldots,1)$ to the same strategy). Non-steep regularizers give rise to surjective maps, which are not injective, not even up to a multiple of $\parens{1,1,\hdots,1}$, to the boundary. Focusing on the more simple case of decomposable regularizers, the kernel of a steep one is differentiable on $(0,1]$ while for non-steep the kernel is differentiable in all of $[0,1]$. As a result, when a steep regularizer is employed the mirror map $\mirror:\dspace\to\pspace$ cannot return any point of the boundary. In other words, the points of the boundary are infeasible not only as initial conditions but also as part of the sequence of play.
\begin{remark}
This dichotomy is important for our analysis since we study the stochastic asymptotic stability of Nash equilibria, which may lie on the boundary, and we seek a neighborhood of initial conditions such that the equilibrium to be stable and attracting. Thus, instead of demanding the existence of a neighborhood $\nhd$ of an equilibrium $\eq$, such that whenever $X_0\in \nhd$, $\eq$ is stable and attracting; we demand the existence of a neighborhood $U$ of $\eq$ such that whenever $X_0\in \nhd\cap\im\mirror$ then $\eq$ is stable and attracting.
\end{remark}
\subsection{Polar cone}\label{ss:polar cone} The notion of the polar cone is tightly connected with the notion of duality.
Given a finite dimensional vector space $\mathcal{V}$, a convex set $\mathcal{C}\subseteq\mathcal{V}$ and a point $x\in\mathcal{C}$ the tangent cone $\tcone_{\mathcal{C}}\parens{x}$ is the closure of the set of all rays emanating from $x$ and intersecting $\mathcal{C}$ in at least one other point. The dual of the tangent cone is the polar cone $\pcone_{\mathcal{C}}\parens{x} = \braces{y\in\mathcal{V}^*:\inner{y}{z}\leq 0 \text{ for all }  z\in\tcone_{\mathcal{C}}\parens{x}}$. \\
When the under consideration convex set is the simplex of the players' strategies, the polar cone corresponding to the boundary  differs significantly from the one corresponding to the interior. Formally, the polar cone at a point $x$ of the simplex is
\begin{equation}
    \pcone\parens{x} = \braces{y\in\dspace: y_a \geq y_b \text{ for all } a,b\in\pures}\footnote{It is always $y_a=y_b$ whenever $a,b\in\supp\parens{x}$.}
\end{equation} An illustration of this is depicted in \cref{fig: polar cone}.
When \eqref{eq:FTRL} is run, the notion of the polar cone emerges from the choice map $\mirror:\dspace\to\pspace$, connecting the primal space of the strategies with the dual space of the payoffs. The proposition below presents this exact connection. 
\begin{proposition}\label{prop:y in subgradient} 
Let $\hreg$ be a strong convex regularizer that satisfies the properties described in \ref{h properties} and let $\mirror:\dspace\to\pspace$ be the induced choice map then %
\begin{enumerate}
    \item $x=\mirror\parens{y}\Leftrightarrow y\in\partial\hreg\parens{x}$
    \item $\partial\hreg\parens{x}= \nabla\hreg\parens{x} + \pcone\parens{x}$ for all $x\in\pspace$.
\end{enumerate}
\end{proposition}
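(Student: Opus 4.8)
The two parts are essentially independent, so the plan is to dispatch Part~1 by a direct conjugacy argument and Part~2 by making the Moreau--Rockafellar sum rule concrete through one-sided directional derivatives.

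For Part~1, I would regard $\hreg$ as the extended-real-valued function equal to $+\infty$ outside $\pspace$ (the convention already adopted in \cref{ss:steep vs non-steep}), so that the constraint is absorbed into $\hreg$ and $\mirror\parens{y}=\argmax_{\pointalt}\braces{\inner{y}{\pointalt}-\hreg\parens{\pointalt}}$ ranges over all of $\vecspace$. The equivalence is then exactly the Fenchel--Young equality condition, which I would prove by unwinding definitions. If $y\in\partial\hreg\parens{x}$, then $\hreg\parens{\pointalt}\geq\hreg\parens{x}+\inner{y}{\pointalt-x}$ for every $\pointalt$, which rearranges to $\inner{y}{\pointalt}-\hreg\parens{\pointalt}\leq\inner{y}{x}-\hreg\parens{x}$; hence $x$ maximizes $\inner{y}{\cdot}-\hreg$, and strong convexity of $\hreg$ (from $\inf\theta''>0$, \cref{prop:Prop Bregman}) makes the maximizer unique, so $x=\mirror\parens{y}$. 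Conversely, if $x=\mirror\parens{y}$ then by optimality the same inequality holds for all $\pointalt$, which is precisely $y\in\partial\hreg\parens{x}$.

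For Part~2, write $\hreg=g+\iota_{\pspace}$, where $g\parens{x}=\sum_{a}\theta\parens{x_{a}}$ is a finite smooth convex function and $\iota_{\pspace}$ is the convex indicator of the simplex. I would first verify that the polar cone of the text is exactly the polar of the tangent cone $\tcone\parens{x}=\braces{z:\sum_{a}z_{a}=0,\ z_{a}\geq0\ \text{for}\ a\notin\supp\parens{x}}$: testing against $e_{a}-e_{b}$ with $a,b\in\supp\parens{x}$ forces $y_{a}=y_{b}$ on the support, testing against $e_{b}-e_{a}$ with $a\in\supp\parens{x}$ and $b\notin\supp\parens{x}$ forces $y_{b}\leq y_{a}$, and these conditions are plainly sufficient. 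For the inclusion $\nabla g\parens{x}+\pcone\parens{x}\subseteq\partial\hreg\parens{x}$, I take $y=\nabla g\parens{x}+w$ with $w\in\pcone\parens{x}$ and any $\pointalt\in\pspace$: convexity of $g$ gives $g\parens{\pointalt}\geq g\parens{x}+\inner{\nabla g\parens{x}}{\pointalt-x}$, while $\pointalt-x\in\tcone\parens{x}$ and $w\in\pcone\parens{x}$ give $\inner{w}{\pointalt-x}\leq0$; adding these yields the subgradient inequality. For the reverse inclusion I take $y\in\partial\hreg\parens{x}$, set $\pointalt=x+tz$ with $z\in\tcone\parens{x}$ and $t>0$ small enough that $\pointalt\in\pspace$, divide the subgradient inequality by $t$, and send $t\to0^{+}$: the one-sided derivative of $g$ at $x$ along $z$ equals $\inner{\nabla g\parens{x}}{z}$, so $\inner{y-\nabla g\parens{x}}{z}\leq0$ for all $z\in\tcone\parens{x}$, i.e. $y-\nabla g\parens{x}\in\pcone\parens{x}$.

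The main obstacle is the boundary behaviour tied to the steep/non-steep dichotomy of \cref{ss:steep vs non-steep}: the directional-derivative step implicitly uses that $\nabla g\parens{x}$ is finite, i.e. that $\theta'$ is defined at every coordinate of $x$. For non-steep kernels, where $\theta'$ is defined on all of $[0,1]$, this holds at every $x\in\pspace$ and the argument above is complete. For steep kernels, $\theta'\parens{x_{a}}\to-\infty$ as $x_{a}\to0$, so at a boundary point the directional derivative along a direction entering a vanishing coordinate is $-\infty$; one then checks that no finite $y$ can satisfy the subgradient inequality, so $\partial\hreg\parens{x}=\emptyset$ there, consistently with $\nabla g\parens{x}$ being undefined and with the image of $\mirror$ avoiding the boundary. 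I would therefore state and prove Part~2 over the relative interior of the face $\braces{\pointalt:\supp\parens{\pointalt}=\supp\parens{x}}$, where $g$ is genuinely differentiable, and record the steep boundary case separately as the degenerate (empty) situation.
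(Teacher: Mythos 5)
Your proposal is correct, but there is no paper proof to compare it against: \cref{prop:y in subgradient} is stated in \cref{ss:polar cone} and the text moves directly on to \cref{ss:Fenchel}, the result being treated as standard convex analysis, so your argument supplies exactly what the paper leaves implicit. Part~1 is the Fenchel--Young equality argument (existence of the maximizer from compactness of $\pspace$ and continuity of $\hreg$, uniqueness from strong convexity), and Part~2 is the Moreau--Rockafellar sum rule made elementary: the explicit identification of $\tcone(x)$ and its polar for the simplex, convexity of the smooth part $g$ together with $\inner{w}{x'-x}\leq 0$ for the inclusion $\nabla g(x)+\pcone(x)\subseteq\partial\hreg(x)$, and one-sided directional derivatives along tangent directions (which for a polytope are genuinely feasible directions, so no closure issue arises) for the reverse. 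Two details are worth keeping in any written version. First, your polar-cone computation is the correct one and quietly repairs the paper's displayed formula, which read literally ($y_a\geq y_b$ for \emph{all} $a,b\in\pures$) would force all coordinates of $y$ to be equal; the intended condition is the one you derive, namely $y_a=y_b$ for $a,b\in\supp(x)$ and $y_b\leq y_a$ for $a\in\supp(x)$, $b\notin\supp(x)$. Second, the steep/non-steep caveat you raise is the one genuine subtlety in Part~2 as stated ``for all $x\in\pspace$'': for steep kernels, $\theta'(0^+)=-\infty$ forces $\partial\hreg(x)=\emptyset$ at relative-boundary points, so the identity must be read as holding wherever $\nabla\hreg(x)$ exists, with both sides degenerate otherwise. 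This restriction is harmless downstream, since the paper only invokes Part~2 at points of the form $x=\mirror(y)$ (\eg in \cref{lem:Useful Expression}), where $\partial\hreg(x)\ni y$ is nonempty by Part~1 and hence, by your own degenerate-case argument, $x$ has full support whenever the regularizer is steep.
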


\subsection{Fenchel coupling}\label{ss:Fenchel}
Even though Bregman divergence is a useful tool, \eqref{eq:FTRL} evolves in the dual space of payoffs. Thus dually to the above the Fenchel coupling\footnote{The term is due to \cite{MS16}.}  is defined, $\fench_\hreg:\pspace\times \dspace \to \R$
\begin{equation}
    \fench_\hreg\parens{p,y}  = \hreg\parens{p} + \hreg^*\parens{y} -\inner{y}{p} \text{ for all } p\in\pspace, y\in\dspace
\end{equation}
where $\hreg^*: \dspace \to \R$ is the convex conjugate of $\hreg$: $\hreg^*\parens{y} = \sup_{x\in\pspace}\braces{\inner{y}{x} -\hreg\parens{x}}$. 
The fenchel conjugate is  differentiable on $\dspace$ and it holds that
\begin{equation}
    \nabla\hreg^*\parens{y} = \mirror\parens{y} \text{ for all } y\in \dspace
\end{equation}
Fenchel coupling is also a measure that connects the primal with the dual space. As we mentioned above, \eqref{eq:FTRL} evolves in the dual space and thus we use Fenchel coupling to trace its convergence properties. As the next proposition states, whenever Fenchel coupling $\fench\parens{p,y}$ is bounded from above so does $\norm{\mirror\parens{y}-p}$. This proposition in its entity, is critical for our proof, since we first need to find a neighborhood $U$ of attractness (See \cref{stability}). For this step, Bregman divergence is  necessary in order to define the aforementioned neighborhood since $\norm{\mirror\parens{y}-p}< c$ for some constant $c$ is not necessarily a neighborhood of $p$ (See \cref{ss:steep vs non-steep}). 
\begin{proposition}\label{prop:Prop Fenchel}
Let $\hreg$ be a $\stcon$-strongly convex function on $\pspace$ and has the propertied described in \ref{h properties}. Let $p\in\pspace$, then
\begin{enumerate}
    \item $\fench_\hreg\parens{p,y}\geq \dfrac{1}{2}\stcon\norm{\mirror\parens{y}-p}^2$ for all $y\in\dspace$ and whenever $\fench_\hreg\parens{p,y}\to 0$, $\mirror\parens{y}\to p$.
    \item $\fench_\hreg\parens{p,y} = \breg_\hreg\parens{p,x}$ whenever $\mirror\parens{y}=x$ and $x\in\Delta_{p}$.
    \item $\fench_\hreg\parens{p,{y}'}\leq \fench_\hreg\parens{p,y}+\inner{{y}'-y}{\mirror\parens{y}-p} + \dfrac{1}{2\stcon}\norm{{y}'-y}_*^2$.
\end{enumerate}
\end{proposition}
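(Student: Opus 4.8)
The plan is to derive all three parts from a single computational identity together with the duality between strong convexity of $\hreg$ and smoothness of its conjugate $\hreg^{*}$. First I would record that, since $\pspace$ is compact and $\hreg$ is continuous, the supremum defining $\hreg^{*}(y)$ is attained exactly at $x=\mirror(y)=\argmax_{x}\{\inner{y}{x}-\hreg(x)\}$, so that $\hreg^{*}(y)=\inner{y}{x}-\hreg(x)$ and the coupling collapses to
\begin{equation*}
\fench_\hreg\parens{p,y} = \hreg\parens{p} - \hreg\parens{x} - \inner{y}{p-x}, \qquad x = \mirror\parens{y}.
\end{equation*}
This identity is the common backbone of the three claims; the remaining work is to interpret the last inner product correctly and to invoke the appropriate convexity estimate in each case.

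For Part 1, I would use that $y\in\partial\hreg(x)$ by \cref{prop:y in subgradient}, so the $\stcon$-strong convexity of $\hreg$ gives the subgradient inequality $\hreg(p)\geq\hreg(x)+\inner{y}{p-x}+\tfrac{\stcon}{2}\norm{p-x}^{2}$. Rearranged, this reads precisely $\fench_\hreg(p,y)\geq\tfrac{1}{2}\stcon\norm{\mirror(y)-p}^{2}$, and the implication that $\fench_\hreg(p,y)\to0$ forces $\mirror(y)\to p$ follows at once by solving for $\norm{\mirror(y)-p}$.

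Part 2 is the delicate step, and I expect the support bookkeeping here to be the main obstacle. The goal is to show $\inner{y}{p-x}=\hreg'(x;p-x)$ whenever $x\in\Delta_{p}$, after which the backbone identity coincides with the definition of $\breg_\hreg(p,x)$ in \eqref{eq:Bregman}. Using the decomposition $\partial\hreg(x)=\nabla\hreg(x)+\pcone(x)$ from \cref{prop:y in subgradient}, I would write $y=\nabla\hreg(x)+w$ with $w\in\pcone(x)$. Because $\supp(p)\subseteq\supp(x)$, the displacement $p-x$ is supported on $\supp(x)$ and sums to zero; since $w$ is constant on $\supp(x)$ while $p-x$ vanishes off $\supp(x)$, this annihilates the polar-cone term, $\inner{w}{p-x}=0$. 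Moreover, as every coordinate of $\supp(x)$ is strictly positive, $\hreg$ is differentiable along directions that remain in the face spanned by $\supp(x)$, so $\hreg'(x;p-x)=\inner{\nabla\hreg(x)}{p-x}$. Combining, $\inner{y}{p-x}=\inner{\nabla\hreg(x)}{p-x}=\hreg'(x;p-x)$, exactly as required. The care is entirely in tracking supports and verifying that the feasible displacement $p-x$ kills the $\pcone(x)$ component.

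For Part 3, I would invoke the standard dual characterization that $\stcon$-strong convexity of $\hreg$ is equivalent to $\hreg^{*}$ being differentiable with $\tfrac{1}{\stcon}$-Lipschitz gradient, hence $\hreg^{*}(y')\leq\hreg^{*}(y)+\inner{\nabla\hreg^{*}(y)}{y'-y}+\tfrac{1}{2\stcon}\norm{y'-y}_{*}^{2}$. Subtracting the coupling at $y$ from that at $y'$ leaves $\fench_\hreg(p,y')-\fench_\hreg(p,y)=\hreg^{*}(y')-\hreg^{*}(y)-\inner{y'-y}{p}$; substituting the recorded identity $\nabla\hreg^{*}(y)=\mirror(y)$ and the smoothness bound, then grouping the two linear terms into $\inner{y'-y}{\mirror(y)-p}$, yields the claimed three-point inequality. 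This part is purely mechanical once the smoothness of $\hreg^{*}$ is in hand.
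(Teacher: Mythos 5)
Your proposal is correct, and Parts 1 and 3 track the paper's own proof almost exactly: the paper likewise reduces everything to the identity $\fench_\hreg\parens{p,y}=\hreg\parens{p}-\hreg\parens{x}-\inner{y}{p-x}$ with $x=\mirror\parens{y}$, proves Part 1 from $y\in\partial\hreg\parens{x}$ plus strong convexity (it re-derives your combined subgradient inequality by hand, comparing the subgradient bound along $x+t\parens{p-x}$ with the strong-convexity bound and letting $t\to0$, where you simply quote the standard strongly convex subgradient inequality \textendash\ same substance), and proves Part 3 from the $1/\stcon$-strong smoothness of $\hreg^{*}$ together with $\nabla\hreg^{*}=\mirror$, exactly as you do. Part 2 is where you genuinely diverge. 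The paper argues geometrically: for $x\in\Delta_{p}$ the segment $x+t\parens{p-x}$ stays in the relative interior of a face of $\pspace$ for all small $\abs{t}$ of \emph{either} sign, so $\hreg$ admits a two-sided directional derivative along $p-x$, and the subgradient inequality applied in both directions forces $\inner{y}{p-x}=\hreg'\parens{x;p-x}$. You instead write $y=\nabla\hreg\parens{x}+w$ with $w\in\pcone\parens{x}$ via \cref{prop:y in subgradient} and kill the polar-cone term, using that $w$ is constant on $\supp\parens{x}$ while $p-x$ vanishes off $\supp\parens{x}$ and sums to zero. Both routes are sound. The paper's is self-contained and domain-agnostic \textendash\ it never needs the explicit simplex description of $\pcone\parens{x}$ \textendash\ whereas yours makes the mechanism concrete (it is in fact the same computation the paper deploys later in the proof of \cref{lem:Useful Expression}), at the cost of leaning on \cref{prop:y in subgradient}, which the paper states without proof, and on per-player bookkeeping you should make explicit: in the product $\pspace=\prod_{\play}\strats_{\play}$ the constancy of $w$ and the sum-to-zero cancellation hold blockwise, so $\inner{w}{p-x}=\sum_{\play}c_{\play}\sum_{a\in\supp\parens{x_{\play}}}\parens{p_{\play a}-x_{\play a}}=0$ player by player rather than over one global simplex.
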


\begin{remark}
Notice that the first part of the proposition is not implied by the second one, since it is possible that $\im\mirror = \dom\partial \hreg$ is not always contained in $\Delta_p$ (see \cref{ss:steep vs non-steep}). 
\end{remark}
\begin{proof}
For the first part, let $x=\mirror\parens{y}$ then $\hreg^*\parens{y} = \inner{y}{x} - \hreg\parens{x}$ 
\begin{equation}
    \fench_\hreg\parens{p,y} = \hreg\parens{p} - \hreg\parens{x} - \inner{y}{p-x}
\end{equation}
Since  $y\in\partial \hreg\parens{x}$ (\cref{prop:y in subgradient}), it is
\begin{equation}\label{eq 1}
    \hreg\parens{x+t\parens{p-x}}
    \geq \hreg\parens{x} + t\inner{y}{p-x}
\end{equation}
and by strong convexity of $\hreg$, we have 
\begin{equation}\label{eq 2}
    \hreg\parens{x+t\parens{p-x}}\leq t\hreg\parens{p} +\parens{1-t}\hreg\parens{x} -\dfrac{1}{2}\stcon t\parens{1-t}\norm{p-x}^2
    \end{equation}
Thus by combining \eqref{eq 1},\eqref{eq 2} and taking $t\to 0$ we get
\begin{equation}
    \fench_\hreg\parens{p,y} \geq \hreg\parens{p} -\hreg\parens{x}  -\hreg\parens{p} +\hreg\parens{x} + \dfrac{\stcon}{2}\norm{p-x}^2 \geq \dfrac{\stcon}{2}\norm{p-x}^2 
\end{equation}
For the second part of the proposition, notice that $x+t\parens{p-x}$ lies in the relative interior of some face of $\pspace$ for $t$ in a neighborhood of $0$ and thus $\hreg\parens{x+t\parens{p-x}}$ is smooth and finite. So, $\hreg$ admits a two-sided derivative along $x-p$ and since $y\in\partial\hreg\parens{x}$, $\inner{y}{p-x} = \hreg'\parens{x;p-x}$ and our claim naturally follows.\\
Finally for the last part of the proposition, we have
\begin{align*}
    \fench_\hreg\parens{p,y'} &= \hreg\parens{p} +\hreg^*\parens{y'}  -\inner{y'}{p}\\
    &\leq \hreg\parens{p} + \hreg^*\parens{y} + \inner{y'-y}{\nabla\hreg^*\parens{y}} +\dfrac{1}{2\stcon}\norm{y'-y}^2_*  -\inner{y'}{p}\\
    &= \fench_\hreg\parens{p,y} + \inner{y'-y}{\mirror\parens{y} - p}  +\dfrac{1}{2\stcon}\norm{y'-y}^2_*
\end{align*}
where the second inequality follows from the fact that $\hreg^*$ is $1/\stcon$ strongly smooth \cite{RW98}.
\end{proof}
In terms of Fenchel coupling our reciprocity assumption can be written as 
\begin{equation}\label{eq: Fenchel reciprocity}\tag{Reciprocity}
    \fench_\hreg\parens{p,y}\to 0 \text{ whenever } \mirror\parens{y}\to p
\end{equation}
Again for most of $\hreg$ decomposable, the assumption is turned into a property as we prove below.
\begin{proposition}\label{prop:rec Fencel} If $\hreg\parens{x} = \sum_i\theta\parens{x_i}$, with $\theta$ having the properties described in \eqref{h properties} and furthermore it holds that $\theta'\parens{x} = o\parens{1/x}$ for $x$ close to $0$, then $\fench_\hreg\parens{p,y}\to 0 $ whenever $\mirror\parens{y}\to p$ for all $p\in\pspace$.
\end{proposition}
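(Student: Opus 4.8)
The plan is to reduce the reciprocity of the Fenchel coupling to the reciprocity of the Bregman divergence, which has already been settled in \cref{prop:rec Bregman} under the very same hypothesis $\theta'(x) = o(1/x)$. The bridge between the two is the second part of \cref{prop:Prop Fenchel}, which asserts that $\fench_\hreg(p,y) = \breg_\hreg(p,x)$ whenever $x = \mirror(y)$ and $x \in \Delta_{p}$. Accordingly, I would fix $p \in \pspace$, take an arbitrary sequence $y_n \in \dspace$ with $x_n \defeq \mirror(y_n) \to p$, and aim to show $\fench_\hreg(p, y_n) \to 0$; establishing this for every such sequence is exactly the assertion \eqref{eq: Fenchel reciprocity}.

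The one genuine subtlety --- flagged in the remark immediately after \cref{prop:Prop Fenchel} --- is that the identity $\fench_\hreg = \breg_\hreg$ is only available on $\Delta_{p}$, i.e.\ when $\supp(p) \subseteq \supp(x)$, and for non-steep regularizers $\mirror(y)$ can perfectly well land outside $\Delta_{p}$ (on a face of $\pspace$ not containing $p$). The key observation that dissolves this obstacle is that \emph{convergence forces eventual membership in} $\Delta_{p}$: since $x_n \to p$ coordinatewise and $\supp(p)$ is finite, for each $a$ with $p_a > 0$ there is an index beyond which $x_{n,a} > p_a/2 > 0$, so beyond the maximum of these finitely many indices one has $\supp(p) \subseteq \supp(x_n)$, that is, $x_n \in \Delta_{p}$. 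Thus the problematic tail of the sequence never actually materializes, and the passage from $\fench_\hreg$ to $\breg_\hreg$ is legitimate for all large $n$.

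With this in hand the argument closes quickly. For all sufficiently large $n$ the second part of \cref{prop:Prop Fenchel} gives $\fench_\hreg(p, y_n) = \breg_\hreg(p, x_n)$, and since $x_n \to p$, \cref{prop:rec Bregman} yields $\breg_\hreg(p, x_n) \to 0$; hence $\fench_\hreg(p, y_n) \to 0$. As the sequence $y_n$ was arbitrary, \eqref{eq: Fenchel reciprocity} follows. I expect the only point requiring care to be the ``eventually in $\Delta_{p}$'' step above, precisely because it is what separates the first and second parts of \cref{prop:Prop Fenchel} (the first part cannot simply be read off the second, as the remark there notes); everything downstream is a direct appeal to the already-proven Bregman reciprocity.
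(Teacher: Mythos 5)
Your proof is correct and takes essentially the same route as the paper: both arguments reduce Fenchel reciprocity to the already-established Bregman reciprocity of \cref{prop:rec Bregman} via the identity $\fench_\hreg\parens{p,y} = \breg_\hreg\parens{p,\mirror\parens{y}}$ from the second part of \cref{prop:Prop Fenchel}. The only difference is that you explicitly verify the step the paper leaves implicit, namely that $\mirror\parens{y_n}\to p$ forces $\mirror\parens{y_n}\in\Delta_{p}$ for all large $n$ so that the identity is eventually applicable, which tightens rather than changes the argument.
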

\begin{proof}
Again it is sufficient to prove that whenever $\mirror\parens{y}=x\to 0$ then $\fench_\hreg\parens{p,y}\to 0$. Notice that from \cref{prop:Prop Fenchel} $\fench_\hreg\parens{p,y} =\breg_\hreg\parens{p,x}$ whenever $x=\mirror\parens{y}$ and $x\in\Delta_p$. Thus by \cref{prop:rec Bregman} $\mirror\parens{y}=x\to 0$ implies that $\fench_\hreg\parens{p,y}\to0$. 
\end{proof}
\subsection{Variational stability} 
\label{ss:variational stability}
\begin{definition}[Variational stability] A point $\eq\in\pspace$ is said to be \emph{variationally stable} if there exists neighborhood $\nhd$ of $\eq$ such that
\begin{equation}\label{eq:variational stability}\tag{VS}
    \inner{\payv\parens{x}}{x-\eq}\leq 0 \text{ for all } x\in \nhd
\end{equation}with  equation if and only if $x=\eq$.
\end{definition}
\noindent What this property actually states is that in a neighborhood of $\eq$, it strictly dominates over all other strategies. Interestingly, strict Nash equilibria hold this property:

\begin{proposition}\label{prop:strict variational stable}For finite games in normal form, the  following are equivalent:
\begin{enumerate}[i)]
    \item $\eq$ is a strict Nash equilibrium.
    \item $\inner{\payv\parens{\eq}}{z}\leq 0$ for all $z\in\tcone\parens{\eq}$ with equality if and only if z=0.
    \item $\eq$ is variationally stable.
\end{enumerate}
\end{proposition}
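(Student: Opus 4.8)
The plan is to prove the cycle of implications $(i)\Rightarrow(ii)\Rightarrow(iii)\Rightarrow(i)$. Two structural facts will do most of the work. First, since $\strats=\prod_{\play}\strats_{\play}$ is a product of simplices, the tangent cone factorizes as $\tcone(\eq)=\prod_{\play}\tcone(\eq_{\play})$; and at a pure component $\eq_{\play}=\pure_{\play}^{\ast}$ the cone $\tcone(\eq_{\play})$ consists of all $z_{\play}$ with $\sum_{\pure_{\play}}z_{\play\pure_{\play}}=0$ and $z_{\play\pure_{\play}}\ge0$ for every $\pure_{\play}\neq\pure_{\play}^{\ast}$. Second --- and this is what drives the converse direction --- each player's payoff vector $\payv_{\play}(x)$ depends only on the opponents' strategies $x_{-\play}$ and is therefore \emph{invariant} under any change of $x_{\play}$ alone, since $\payv_{\play\pure_{\play}}(x)=\pay_{\play}(\pure_{\play};x_{-\play})$.

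For $(i)\Rightarrow(ii)$ I would argue by direct computation. If $\eq$ is strict then $\eq_{\play}=\pure_{\play}^{\ast}$ with $\payv_{\play\pure_{\play}^{\ast}}(\eq)>\payv_{\play\pure_{\play}}(\eq)$ for all $\pure_{\play}\neq\pure_{\play}^{\ast}$. Eliminating $z_{\play\pure_{\play}^{\ast}}=-\sum_{\pure_{\play}\neq\pure_{\play}^{\ast}}z_{\play\pure_{\play}}$ via the simplex constraint gives, for $z\in\tcone(\eq)$,
\begin{equation*}
\inner{\payv(\eq)}{z}=\sum_{\play\in\players}\sum_{\pure_{\play}\neq\pure_{\play}^{\ast}}\bigl(\payv_{\play\pure_{\play}}(\eq)-\payv_{\play\pure_{\play}^{\ast}}(\eq)\bigr)\,z_{\play\pure_{\play}}.
\end{equation*}
Every coefficient is strictly negative and every $z_{\play\pure_{\play}}\ge0$, so the sum is $\le0$, with equality iff all $z_{\play\pure_{\play}}=0$ for $\pure_{\play}\neq\pure_{\play}^{\ast}$; the constraint then forces $z=0$, giving $(ii)$.

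For $(ii)\Rightarrow(iii)$ I would first upgrade the pointwise condition to a uniform linear bound: the continuous functional $z\mapsto\inner{\payv(\eq)}{z}$ is strictly negative on the compact cross-section $\{z\in\tcone(\eq):\norm{z}=1\}$, hence attains a negative maximum there, so by homogeneity $\inner{\payv(\eq)}{z}\le-m\norm{z}$ for all $z\in\tcone(\eq)$ and some $m>0$. Since $x-\eq\in\tcone(\eq)$ for every $x\in\strats$ by convexity, and $\payv$ is multilinear hence $L$-Lipschitz on the compact set $\strats$, splitting
\begin{equation*}
\inner{\payv(x)}{x-\eq}=\inner{\payv(\eq)}{x-\eq}+\inner{\payv(x)-\payv(\eq)}{x-\eq}\le-m\norm{x-\eq}+L\norm{x-\eq}^{2}
\end{equation*}
shows $\inner{\payv(x)}{x-\eq}<0$ whenever $0<\norm{x-\eq}<m/L$, which is exactly \eqref{eq:variational stability} on that neighborhood. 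Matching this first-order attraction rate against the second-order perturbation of $\payv$ is the main technical point of the whole proof.

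For $(iii)\Rightarrow(i)$ I would first extract the equilibrium condition: applying \eqref{eq:variational stability} to $x=\eq+tz$ for a feasible direction $z$ and letting $t\to0^{+}$ yields $\inner{\payv(\eq)}{z}\le0$ for all $z\in\tcone(\eq)$, i.e.\ $\payv(\eq)\in\pcone(\eq)$, which is precisely the Nash characterization \eqref{eq:NE}. It remains to exclude non-strictness, and here the invariance observation is decisive. If some player $\play$ has two support strategies $a,b\in\supp(\eq_{\play})$, take $w$ equal to $e_{a}-e_{b}$ in block $\play$ and $0$ elsewhere; both $\pm w$ are feasible, and since $w$ alters only $x_{\play}$ the invariance gives $\inner{\payv(\eq+tw)}{w}=\payv_{\play a}(\eq)-\payv_{\play b}(\eq)$, a constant, so $\inner{\payv(\eq\pm tw)}{(\eq\pm tw)-\eq}$ cannot be strictly negative for both signs of small $t$, contradicting \eqref{eq:variational stability}. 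If instead $\eq$ is pure but some deviation $r\neq\pure_{\play}^{\ast}$ is payoff-indifferent, then \eqref{eq:NE} forces $\payv_{\play r}(\eq)=\payv_{\play\pure_{\play}^{\ast}}(\eq)$, and the one-sided direction $w=e_{r}-e_{\pure_{\play}^{\ast}}$ makes $\inner{\payv(\eq+tw)}{(\eq+tw)-\eq}\equiv0$ for small $t>0$, again contradicting the strict inequality in \eqref{eq:variational stability}. Hence $\eq$ is strict, closing the cycle.
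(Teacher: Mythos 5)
Your proof is correct, and it follows the same cyclic skeleton $(i)\Rightarrow(ii)\Rightarrow(iii)\Rightarrow(i)$ as the paper, with the first leg being essentially the identical computation (the paper sums $\inner{\payv_{\play}(\eq)}{x_{\play}-\eq_{\play}}<0$ over players; your elimination of $z_{\play\pure_{\play}^{\ast}}$ via the simplex constraint is the same argument made coordinate-explicit). The two legs where you genuinely diverge are worth noting. For $(ii)\Rightarrow(iii)$, the paper argues topologically: condition $(ii)$ says $\payv(\eq)$ lies in the \emph{interior} of the polar cone $\pcone(\eq)$, and continuity of $\payv$ keeps $\payv(x)$ in that cone for $x$ near $\eq$. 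Your route instead extracts a uniform slope $-m$ on the compact unit cross-section of $\tcone(\eq)$ and trades it off against the Lipschitz modulus $L$ of the multilinear map $\payv$; this buys an explicit radius $m/L$ for the variational-stability neighborhood and sidesteps the mild care needed in the polar-cone argument to ensure \emph{strict} inequality for $x\neq\eq$ (the paper relegates this to a footnote). For $(iii)\Rightarrow(i)$, you are in fact more complete than the paper: the paper jumps directly from ``not strict'' to the existence of two payoff-equal strategies $a,b$ and perturbs along $e_{a}-e_{b}$, silently using both that variational stability implies the Nash property and that the perturbed point stays in $\strats$; you make the limiting argument $x=\eq+tz$, $t\to0^{+}$ explicit to recover \eqref{eq:NE}, and you separate the mixed case (where $\pm w$ are both feasible, so no indifference assumption is even needed) from the pure-but-nonstrict case (where only the one-sided direction is feasible and support indifference forces the inner product to vanish identically). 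Both proofs also rest on the same structural fact, which you isolate explicitly and the paper uses implicitly: $\payv_{\play}$ is invariant under changes of $x_{\play}$ alone, so the variational-stability pairing evaluated at the perturbed point equals the one evaluated at $\eq$. In short: same architecture, with a quantitative substitute for the paper's polar-cone continuity step and a tighter treatment of the converse direction.
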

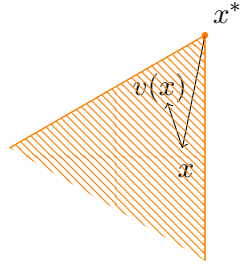
\begin{figure}[h!]
\centering
    \begin{tikzpicture}
\def\len{3}
\def\appr{1.732}
\draw[color = orange] (0,-\len)-- (0,0)--(-\len*\appr/2,-\len/2);
\filldraw[color = orange] (0,0) circle (1pt);
\node at (0.3,0.3) {$x^*$};
\draw[->] (0,0)--(-0.3,-1.5);
\node at (-0.3+0.05,-1.5-0.3) {$x$};
\draw[->] (-0.3,-1.5) -- (-0.3-0.2,-1.5+0.6);
\node at (-0.3-0.3,-1.5+0.8) {$\tiny v(x)$};
\filldraw[color = orange, pattern=north west lines,pattern color=orange] (0,-\len)-- (0,0)--(-\len*\appr/2,-\len/2);
\end{tikzpicture}
  \caption{\eqref{eq:variational stability} states that the payoff vectors are pointing "towards" the equilibrium }
\end{figure}
\begin{proof}

We will first prove that $i)\Rightarrow ii)\Rightarrow iii)\Rightarrow i)$.\\
$i)\Rightarrow ii)$ Since $\eq$ is a Nash equilibrium by definition it holds for each player $\play$ that 
\begin{equation}
  \inner{\payv\parens{\eq}}{x-\eq}\leq 0 \text{ for all }x\in\pspace
\end{equation}
For the strict part of the inequality, by definition of strict Nash equilibria it holds that $\inner{\payv_\play\parens{\eq}}{x_\play-\eq_\play} <0$ whenever $x_\play\neq \eq_\play$ and thus
\begin{equation}
    \inner{\payv\parens{\eq}}{z} = \sum_{\play=1}^{N}\inner{\payv_\play\parens{\eq}}{x_\play-\eq_\play} <0 \text{ if } x_\play\neq \eq_\play \text{ for some } \play \text { or }z\neq 0
\end{equation}
$ii)\Rightarrow iii)$ By definition of the polar cone, we have that $\payv\parens{\eq}$ belongs to the interior of $\pcone\parens{\eq}$\footnote{Indeed if it belonged to the boundary then the equality in $ii)$ would not hold only for $z=0$.}. Thus by continuity there exists some neighborhood of $\eq$ such that $\payv\parens{x}$ also belongs to the polar cone of $\pcone\parens{\eq}$ or $\eq$ is variationally stable.\\
$iii)\Rightarrow i)$ Assume now that $\eq$ is variationally stable but not strict, then there exist for some player $\play$ $a$,$b\in\pures_\play$ such that $\pay_{\play}\parens{a;\eq_{-\play}} = \pay_{\play}\parens{b;\eq_{-\play}}$. Then for $x_\play = \eq_\play +\lambda\parens{\bvec_a - \bvec_b}$ and $x_{-\play} =\eq_{-\play}$  we have
\begin{equation}
    \inner{\payv\parens{\eq}}{x -\eq} =  \inner{\payv_\play\parens{\eq}}{\lambda\parens{\bvec_a - \bvec_b}} = 0
\end{equation}
which is a contradiction.
\end{proof}
\subsection{Martingale limit theory}
Our analysis leverages tools from martingale limit theory. Below we first present a simple fact for the reader to keep in mind, followed by the main theorems that we utilize in the main body of our proofs
\begin{fact}\label{fact: submartingale}
Let $R_\round = \sum_{\subround =1}^\round r_\subround$, where $r_\subround$ is a positive random variable for all $\subround = 0,1,\hdots$ attached to the filtration $\filter_{\subround-1}$. Then $R_\round$ is a submartingale.
\end{fact}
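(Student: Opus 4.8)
The plan is to verify the three defining conditions of a submartingale for $R_\round$ directly, since the statement is essentially a bookkeeping lemma. \emph{First}, I would establish adaptedness. Because each increment $r_\subround$ is measurable with respect to $\filter_{\subround-1}$ and the filtration is increasing, every term in $R_\round = \sum_{\subround=1}^\round r_\subround$ with index $\subround \leq \round$ is $\filter_{\round-1}$-measurable; hence $R_\round$ is itself $\filter_{\round-1}$-measurable, and \emph{a fortiori} $\filter_\round$-measurable. In fact the sequence is predictable, which is more than adaptedness requires.

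\emph{Second}, I would compute the one-step conditional increment, which is where positivity does the work. Writing $R_{\round+1} = R_\round + r_{\round+1}$ and conditioning on $\filter_\round$, note that $R_\round$ is $\filter_\round$-measurable and that $r_{\round+1}$ is $\filter_\round$-measurable by hypothesis (it is attached to $\filter_{(\round+1)-1} = \filter_\round$). Therefore both summands pull out of the conditional expectation and $\exof{R_{\round+1} \given \filter_\round} = R_\round + r_{\round+1}$. Since $r_{\round+1} \geq 0$ almost surely, this yields $\exof{R_{\round+1} \given \filter_\round} \geq R_\round$, which is exactly the submartingale inequality.

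\emph{Third} comes the only point that genuinely needs care, namely integrability: to qualify as a submartingale each $R_\round$ must satisfy $\exof{\abs{R_\round}} < \infty$. As the $r_\subround$ are nonnegative we have $\abs{R_\round} = R_\round$, so this reduces to $\sum_{\subround=1}^\round \exof{r_\subround} < \infty$, which I would either record as a standing integrability hypothesis on the increments or deduce from the explicit moment bounds available wherever this fact is invoked. With that in place all three conditions hold and the claim follows. I do not expect any real obstacle here; the essential content is simply that a running sum of nonnegative predictable increments is nondecreasing in conditional expectation, and the positivity of $r_\subround$ is precisely what upgrades the trivial increment identity to the desired inequality.
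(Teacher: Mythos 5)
Your proof is correct, and since the paper states this as a bare Fact (deferring to the martingale references rather than writing out a verification), your direct check of the three defining conditions is exactly the argument that was omitted: adaptedness, the one-step increment inequality, and integrability, with positivity of the increments doing the work. Your integrability caveat is also well placed \textendash\ the Fact as stated is silent on $\exof{\abs{R_\round}}<\infty$, and the paper indeed supplies this separately each time the Fact is invoked (e.g.\ via $\exof{R_\round}\leq\sum_{\subround}\step_\subround^2\sbound_\subround^2<\infty$ under \ref{asm:variance}).

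One remark on robustness: your Step 2 leans on the literal reading that $r_{\round+1}$ is $\filter_{\round}$-measurable, so that it pulls out of the conditional expectation exactly. In the places where the paper actually applies the Fact, the increments are of the form $r_\subround = \step_\subround^2\norm{\npayv_\subround}_*^2$, and $\npayv_\subround$ is explicitly \emph{not} $\filter_\subround$-measurable (the feedback arrives only after the strategy is chosen), so the increments are not predictable and the exact pull-out identity fails. The conclusion survives because you never needed the identity: with $\mathcal{G}_\round$ any filtration to which $R_\round$ is adapted, positivity alone gives
\begin{equation*}
\exof{R_{\round+1}\given\mathcal{G}_\round} \;=\; R_\round + \exof{r_{\round+1}\given\mathcal{G}_\round} \;\geq\; R_\round ,
\end{equation*}
since the conditional expectation of a nonnegative random variable is nonnegative. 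Phrasing the key step this way, rather than via predictability, makes your proof cover both the statement as written and the way the Fact is actually used in \cref{lem:strict remain to neighborhood} and \cref{th:strict are stable}.
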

\noindent We begin with the strong law of large numbers for martingale difference sequences:
\begin{theorem}\label{law of large numbers}
Let $R_\round =\sum_{\subround =1}^\round r_\subround$ be a martingale with respect to an underlying stochastic basis $\parens{\Omega,\filter,\parens{\filter_\round}_{\round=1}^\infty,\prob}$ and let $\parens{\tau_\round}_{\round=1}^\infty$ be a nondecreasing sequence of positive numbers with $\lim_{\round \to\infty}\tau_\round =  \infty$. If $\sum_{\round =1}^\infty \tau_\round^{-p}\exof{\abs{r_\round}^p\given\filter_{\round-1}}<\infty$ for some $p\in[1,2]$ almost surely, then
\begin{equation}
    \lim_{\round\to\infty}\tau_\round^{-1} R_\round = 0 \text{ almost surely}
\end{equation}
\end{theorem}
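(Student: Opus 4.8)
The plan is to reduce the weighted law to an ordinary almost-sure convergence statement for a single auxiliary martingale and then to invoke Kronecker's lemma pathwise. Concretely, I would introduce the normalized increments $d_\subround \defeq r_\subround/\tau_\subround$ and their partial sums $M_\round \defeq \sum_{\subround=1}^\round d_\subround$. Since each $\tau_\subround$ is deterministic and $\exof{r_\subround \given \filter_{\subround-1}} = 0$ (as $R_\round$ is a martingale), the sequence $\parens{M_\round}$ is again a martingale with respect to $\parens{\filter_\round}$, and the hypothesis reads $\sum_\subround \exof{\abs{d_\subround}^p \given \filter_{\subround-1}} < \infty$ almost surely. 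Once I establish that $M_\round$ converges to a finite limit almost surely, Kronecker's lemma, applied $\omega$ by $\omega$ and using that $\parens{\tau_\round}$ is nondecreasing with $\tau_\round \to \infty$, yields $\tau_\round^{-1} R_\round = \tau_\round^{-1}\sum_{\subround=1}^\round \tau_\subround d_\subround \to 0$, which is precisely the claim.

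The core of the argument is therefore the almost-sure convergence of $M_\round$ under only a conditional $p$-th moment summability condition with $p \in [1,2]$, which I would prove by a truncation of the increments at level one. Split $d_\subround = u_\subround + w_\subround$ with $u_\subround \defeq d_\subround \oneof{\abs{d_\subround} \leq 1}$ and $w_\subround \defeq d_\subround \oneof{\abs{d_\subround} > 1}$. On $\braces{\abs{d_\subround} \leq 1}$ one has $\abs{d_\subround}^2 \leq \abs{d_\subround}^p$ (since $p \leq 2$), so the centered small part $\tilde{u}_\subround \defeq u_\subround - \exof{u_\subround \given \filter_{\subround-1}}$ is a bounded martingale difference whose conditional second moments satisfy $\exof{\tilde{u}_\subround^2 \given \filter_{\subround-1}} \leq \exof{u_\subround^2 \given \filter_{\subround-1}} \leq \exof{\abs{d_\subround}^p \given \filter_{\subround-1}}$, so their sum is almost surely finite; the standard $L^2$ criterion (after localizing along the stopping times $\sigma_j \defeq \inf\{\round : \sum_{\subround \leq \round}\exof{\tilde{u}_\subround^2 \given \filter_{\subround-1}} > j\}$) then gives almost-sure convergence of $\sum_\subround \tilde{u}_\subround$. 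On $\braces{\abs{d_\subround} > 1}$ one has $\abs{d_\subround} \leq \abs{d_\subround}^p$ (since $p \geq 1$), hence $\sum_\subround \exof{\abs{w_\subround} \given \filter_{\subround-1}} \leq \sum_\subround \exof{\abs{d_\subround}^p \given \filter_{\subround-1}} < \infty$ almost surely, which forces $\sum_\subround \abs{w_\subround} < \infty$ almost surely (a nonnegative adapted sequence with almost surely summable conditional means is itself almost surely summable, via its Doob decomposition). Finally, $\exof{d_\subround \given \filter_{\subround-1}} = 0$ gives $\exof{u_\subround \given \filter_{\subround-1}} = -\exof{w_\subround \given \filter_{\subround-1}}$, so the compensator series $\sum_\subround \exof{u_\subround \given \filter_{\subround-1}}$ is dominated in absolute value by $\sum_\subround \exof{\abs{w_\subround} \given \filter_{\subround-1}}$ and converges absolutely. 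Writing $M_\round = \sum_{\subround \leq \round}\tilde{u}_\subround + \sum_{\subround \leq \round}\exof{u_\subround \given \filter_{\subround-1}} + \sum_{\subround \leq \round} w_\subround$ exhibits $M_\round$ as a sum of three almost-surely convergent series, so $M_\round$ converges almost surely.

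The main obstacle I anticipate is that every summability hypothesis holds only almost surely, i.e.\ with a random, $\omega$-dependent bound, rather than in expectation, so one cannot directly apply a global moment inequality such as Burkholder--Davis--Gundy to the whole martingale. The device that resolves this is the localization by the stopping times $\sigma_j$: on each stopped process the relevant conditional second moments are bounded in expectation, producing an $L^2$-bounded, hence almost surely convergent, martingale, and since $\sigma_j \uparrow \infty$ on the full-measure event where the conditional $p$-th moments are summable, convergence of every stopped version upgrades to convergence of $M_\round$ itself. A secondary point that requires care is treating the whole range $p \in [1,2]$ uniformly: the two elementary inequalities $\abs{d}^2 \leq \abs{d}^p$ for $\abs{d} \leq 1$ and $\abs{d} \leq \abs{d}^p$ for $\abs{d} > 1$ are exactly what allow the single truncation at level one to control both the quadratic (small) and the absolute (large) contributions by the same conditional $p$-th moment sum.
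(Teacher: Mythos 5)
Your proposal is correct: the paper does not prove this theorem itself but defers it to \cite{HH80}, and your argument \textemdash\ normalizing the increments by $\tau_\subround$, truncating at level one, handling the small part through conditional variances with localization by predictable stopping times, bounding the large part and the compensator by the same conditional $p$-th moment series, and finishing with a pathwise application of Kronecker's lemma \textemdash\ is precisely the classical proof of Chow's martingale strong law found in that reference. In particular, your two elementary inequalities ($|d|^2\leq|d|^p$ for $|d|\leq1$ and $|d|\leq|d|^p$ for $|d|>1$) and the localization device correctly address the only delicate points, namely the uniform treatment of $p\in[1,2]$ and the fact that the summability hypothesis holds only almost surely rather than in expectation.
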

\noindent The second important result for our analysis is Doob's martingale convergence theorem:
\begin{theorem}\label{Doob's convergence}
If $R_\round$ is a submartingale that is bounded in $L_1$ (\ie $\sup_\round \exof{\abs{R_\round}}<\infty$), $R_\round$ converges almost surely to a random variable $R$ with $\exof{R}<\infty$.
\end{theorem}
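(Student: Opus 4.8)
The plan is to establish this classical result through Doob's upcrossing inequality, which is the one ingredient specific to martingale theory that the $L_1$-boundedness hypothesis feeds into. Fix two rationals $a<b$ and, for the finite segment $R_0,\dotsc,R_\round$, let $U_\round[a,b]$ count the \emph{upcrossings} of the interval $[a,b]$, i.e. the number of disjoint time windows over which the process passes from a value $\le a$ to a value $\ge b$. The device is a predictable $\{0,1\}$-valued ``betting'' process $C_\subround$ that turns on when the path falls to or below $a$ and turns off once it rises to or above $b$; since $C_\subround$ is $\filter_{\subround-1}$-measurable, nonnegative and bounded, and $Z_\subround \defeq (R_\subround-a)^+$ is a nonnegative submartingale (being a convex nondecreasing image of the submartingale $R_\subround$, and integrable because $Z_\subround \le \abs{R_\subround}+\abs{a}$), the transform $(C\cdot Z)_\round = \sum_{\subround=1}^\round C_\subround (Z_\subround - Z_{\subround-1})$ is again a submartingale.

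The main step, and the one I expect to demand the most care, is the upcrossing inequality itself. Each completed upcrossing raises $(C\cdot Z)_\round$ by at least $b-a$, while the final, possibly incomplete, upcrossing contributes a nonnegative amount because $Z\ge 0$ and betting always begins from a value $Z=0$; hence $(C\cdot Z)_\round \ge (b-a)\,U_\round[a,b]$. Since the complementary transform $((1-C)\cdot Z)$ is also a submartingale (again a nonnegative bounded predictable integrand against $Z$), we have $\exof{(C\cdot Z)_\round} \le \exof{Z_\round - Z_0} \le \exof{Z_\round}$, and therefore $(b-a)\,\exof{U_\round[a,b]} \le \exof{(R_\round - a)^+} \le \exof{\abs{R_\round}} + \abs{a}$. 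The hypothesis $\sup_\round \exof{\abs{R_\round}}<\infty$ thus bounds $\exof{U_\round[a,b]}$ uniformly in $\round$; letting $\round\to\infty$ and invoking monotone convergence gives $\exof{U_\infty[a,b]}<\infty$, whence $U_\infty[a,b]<\infty$ almost surely.

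The conclusion is then routine. The event on which $R_\round$ fails to converge in $[-\infty,+\infty]$ equals $\{\liminf_\round R_\round < \limsup_\round R_\round\}$, which is the countable union over rational pairs $a<b$ of the events $\{\liminf_\round R_\round < a < b < \limsup_\round R_\round\}$; on each of these $U_\infty[a,b]=\infty$, an event of probability zero by the previous step. A countable union of null sets being null, $R_\round$ converges almost surely to a limit $R$ valued in $[-\infty,+\infty]$. Finally, Fatou's lemma gives $\exof{\abs{R}} = \exof{\liminf_\round \abs{R_\round}} \le \liminf_\round \exof{\abs{R_\round}} \le \sup_\round \exof{\abs{R_\round}} < \infty$, so $R$ is integrable and hence finite almost surely, which completes the argument.
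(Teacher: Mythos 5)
Your proof is correct, but note that the paper does not actually prove this statement: \cref{Doob's convergence} is quoted as a known result from martingale limit theory, with the proof deferred to the reference [HH80] (Hall and Heyde). So there is no in-paper argument to compare against; what you have written is the standard self-contained upcrossing proof, and it is sound in all its essentials. You correctly adapt the upcrossing inequality to the submartingale case: since one cannot argue $\exof{(C\cdot Z)_\round}\le 0$ as one would for a supermartingale, you instead observe that the complementary transform $((1-C)\cdot Z)$ is a submartingale started at zero, whence $\exof{(C\cdot Z)_\round}\le\exof{Z_\round-Z_0}\le\exof{(R_\round-a)^+}$ \textendash\ this is exactly the right fix, and it yields the bound $(b-a)\exof{U_\round[a,b]}\le\exof{\abs{R_\round}}+\abs{a}$ that the $L_1$ hypothesis controls. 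The remaining steps \textendash\ monotone convergence to get $U_\infty[a,b]<\infty$ almost surely, the countable union over rational pairs to rule out $\liminf_\round R_\round<\limsup_\round R_\round$, and Fatou's lemma to get $\exof{\abs{R}}<\infty$ (which both makes the limit finite almost surely and gives the integrability claimed in the statement) \textendash\ are all handled correctly, including the easily-overlooked point that convergence is first established only in $[-\infty,+\infty]$ and integrability is what excludes infinite limits. The one step you compress is the formal predictability of the betting process $C_\subround$, but your recursive on/off description determines $C_\subround$ from $R_0,\dotsc,R_{\subround-1}$, so $\filter_{\subround-1}$-measurability is immediate; likewise your observation that each betting window opens at a time where $Z=(R-a)^+=0$, so the final incomplete upcrossing contributes $Z_\round-0\ge 0$, is exactly what makes the pathwise inequality $(C\cdot Z)_\round\ge(b-a)U_\round[a,b]$ go through.
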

\noindent Finally, we use the known as Doob's maximal inequality and one of its variants, presented below:
\begin{theorem}\label{Max inequality submartingales}
Let $R_\round$ be a non-negative submartingale and fix some $\eps > 0$. Then:
\begin{equation}
    \probof{\sup_\round R_\round \geq \eps}\leq \dfrac{\exof{R_\round}}{\eps}
\end{equation}
\end{theorem}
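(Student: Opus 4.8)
The plan is to establish the finite-horizon version of the inequality via a first-passage (stopping-time) argument and then pass to the supremum by monotone convergence. Fix a horizon $\round$ and work with the running maximum $\max_{0\le\subround\le\round} R_\subround$. I would introduce the first-passage time $\tau = \inf\braces{\subround\ge 0 : R_\subround \geq \eps}$, a stopping time with respect to $\parens{\filter_\subround}_\subround$, and decompose the event $\braces{\max_{0\le\subround\le\round} R_\subround \geq \eps}$ as the disjoint union $\bigcup_{j=0}^{\round}\braces{\tau = j}$. By definition of $\tau$, on each piece $\braces{\tau = j}$ we have $R_j \geq \eps$, and crucially this event is $\filter_j$-measurable.

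The heart of the argument invokes the submartingale property exactly once. Since $\oneof{\tau = j}$ is $\filter_j$-measurable and $\exof{R_\round \given \filter_j} \geq R_j$, the tower rule gives $\exof{R_\round \oneof{\tau = j}} \geq \exof{R_j \oneof{\tau = j}} \geq \eps\,\probof{\tau = j}$. Summing over $j=0,\dotsc,\round$ and using that the events $\braces{\tau=j}$ are disjoint, I obtain $\eps\,\probof{\max_{0\le\subround\le\round} R_\subround \geq \eps} \leq \exof{R_\round \oneof{\tau \leq \round}}$. At this point non-negativity of the submartingale is used to drop the indicator, bounding the right-hand side by $\exof{R_\round}$ and yielding the claimed inequality for the finite maximum. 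To recover the stated form with $\sup$ on the left, I would note that the events $\braces{\max_{0\le\subround\le\round} R_\subround \geq \eps}$ increase with $\round$ to $\braces{\sup_\subround R_\subround \geq \eps}$, so continuity of probability from below gives $\probof{\sup_\subround R_\subround \geq \eps} = \lim_{\round\to\infty}\probof{\max_{0\le\subround\le\round} R_\subround \geq \eps}$; combined with the monotonicity of $\exof{R_\round}$ for a submartingale, this delivers the theorem.

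The only delicate point — rather than a genuine obstacle — is the passage from the finite maximum to the supremum over all times, since the bound is informative only when $\exof{R_\round}$ stays controlled; in the applications of this excerpt the inequality is applied either at a fixed horizon or under an $L^1$-boundedness hypothesis, so the right-hand side is read as $\exof{R_\round}$ at the relevant $\round$ (equivalently $\sup_\round\exof{R_\round}$). Everything else is the standard stopping-time decomposition, and no tool beyond the definition of a submartingale and the tower property of conditional expectation is required.
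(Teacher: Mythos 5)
The paper does not actually prove this statement: it is quoted as a standard tool, with all proofs in this subsection deferred to \cite{HH80}, so there is no in-paper argument to compare against. Your first-passage proof is the standard one, and the finite-horizon core is entirely correct: decomposing $\braces{\max_{0\le\subround\le\round} R_\subround \ge \eps}$ into the disjoint, $\filter_j$-measurable slices $\braces{\tau = j}$, invoking the submartingale property exactly once via the tower rule to get $\exof{R_\round\oneof{\tau=j}} \ge \exof{R_j\oneof{\tau=j}} \ge \eps\,\probof{\tau=j}$, and then using non-negativity to discard the indicator $\oneof{\tau\le\round}$. The one slip is in the limiting step: the events $\braces{\max_{0\le\subround\le\round} R_\subround \ge \eps}$ increase to $\braces{\exists\,\subround : R_\subround \ge \eps}$, which can be a \emph{proper} subset of $\braces{\sup_\subround R_\subround \ge \eps}$ \textemdash\ for instance, the deterministic non-negative submartingale $R_\subround = \eps(1-2^{-\subround})$ has $\sup_\subround R_\subround = \eps$ while no finite maximum ever reaches $\eps$, so continuity from below does not apply verbatim as you state it. The standard patch is immediate: apply your finite-horizon bound at level $(1-\delta)\eps$ for $\delta\in(0,1)$, use $\braces{\sup_\subround R_\subround \ge \eps} \subseteq \bigcup_\round \braces{\max_{0\le\subround\le\round} R_\subround \ge (1-\delta)\eps}$, and let $\delta\to 0$, which yields $\probof{\sup_\subround R_\subround \ge \eps} \le \sup_\round \exof{R_\round}/\eps$. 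Your closing observation is also apt: the theorem as printed carries a dangling index $\round$ on the right-hand side, and it must indeed be read as $\sup_\round\exof{R_\round}$ (equivalently $\lim_\round\exof{R_\round}$, by monotonicity of submartingale expectations); this matches how the inequality is used in the proof of \cref{theorem:if}, where $\exof{R_\round} \le \sum_{\subround=0}^{\infty}\step_\subround^2\sbound_\subround^2 < \infty$ uniformly in $\round$, so the patched statement is exactly what the paper needs.
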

\begin{theorem}\label{Max inequality martingales}
Let $R_\round$ be a martingale and fix some $\eps >0$. Then:
\begin{equation}
    \probof{\sup_\round \abs{R_\round} \geq \eps}\leq \dfrac{\exof{R_\round^2}}{\eps^2}
\end{equation}
\end{theorem}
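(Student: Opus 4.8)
The plan is to reduce this $L^{2}$ statement to the submartingale maximal inequality already established in \cref{Max inequality submartingales} by passing to the squared process. The key observation is that if $R_\round$ is a martingale with $R_\round \in L^{2}$, then $R_\round^{2}$ is a \emph{non-negative submartingale}. Indeed, the map $t \mapsto t^{2}$ is convex, so conditional Jensen's inequality gives $\exof{R_{\round+1}^{2} \given \filter_\round} \geq \parens{\exof{R_{\round+1} \given \filter_\round}}^{2} = R_\round^{2}$, where the final equality is precisely the defining martingale identity. (If $\exof{R_\round^{2}} = \infty$ the asserted bound is vacuously true, so there is no loss in assuming square-integrability throughout.)

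Next I would rewrite the event of interest in terms of this squared process. Since squaring is monotone on the non-negative reals, the events $\braces{\sup_\round \abs{R_\round} \geq \eps}$ and $\braces{\sup_\round R_\round^{2} \geq \eps^{2}}$ coincide for every $\eps > 0$. Applying \cref{Max inequality submartingales} to the non-negative submartingale $R_\round^{2}$ with threshold $\eps^{2}$ then yields
\begin{equation*}
\probof{\sup_\round \abs{R_\round} \geq \eps}
    = \probof{\sup_\round R_\round^{2} \geq \eps^{2}}
    \leq \frac{\exof{R_\round^{2}}}{\eps^{2}},
\end{equation*}
which is exactly the claimed inequality. Note that the index conventions match between the two theorems, so the reduction is clean.

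There is no genuine obstacle here; the content is entirely the squaring reduction to \cref{Max inequality submartingales}, and the only point requiring care is verifying the submartingale property of $R_\round^{2}$ (conditional Jensen plus the martingale identity) together with the harmless remark that the bound is trivial when its right-hand side is infinite. Had the submartingale version not been available, the main work would instead lie in proving \cref{Max inequality submartingales} from scratch via an optional-stopping argument — introducing the first time $R_\round^{2}$ crosses $\eps^{2}$ and decomposing $\exof{R_\round^{2}}$ across that hitting time using the submartingale property on each piece — but since that inequality is already in hand, the squaring step is the economical route and presents no real difficulty.
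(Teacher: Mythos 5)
Your argument is correct, and it is the standard one. Note that the paper does not actually prove this statement: it lists \cref{Max inequality martingales} alongside \cref{law of large numbers,Doob's convergence,Max inequality submartingales} and defers all proofs to the cited reference \cite{HH80}, so there is no in-paper proof to compare against. Your reduction \textendash\ conditional Jensen to show that $R_\round^{2}$ is a non-negative submartingale, the identity $\braces{\sup_\round \abs{R_\round}\geq\eps} = \braces{\sup_\round R_\round^{2}\geq\eps^{2}}$ (valid because $\sup_\round R_\round^{2} = (\sup_\round\abs{R_\round})^{2}$), and an application of \cref{Max inequality submartingales} at threshold $\eps^{2}$ \textendash\ is precisely how this $L^{2}$ maximal inequality is classically derived from the first-moment version. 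One cosmetic point, inherited from the paper's statement rather than introduced by you: as printed, the inequality has a dangling index (the supremum on the left ranges over all $\round$, while $\exof{R_\round^{2}}$ on the right has a free $\round$), and the rigorous reading is the finite-horizon bound $\probof{\max_{0\leq\subround\leq\round}\abs{R_\subround}\geq\eps}\leq \exof{R_\round^{2}}/\eps^{2}$ followed by a union or monotone limit over $\round$; this finite-horizon form is in fact how the paper invokes the result in the proof of \cref{theorem:if}, where the events $E_{\round,\eps}$ involve $\sup_{0\leq\subround\leq\round}$ and are then combined via a union over $\round$. Your reduction goes through verbatim at any finite horizon, since $(R_\subround^{2})_{0\leq\subround\leq\round}$ is still a non-negative submartingale, and your remark that the bound is vacuous when $\exof{R_\round^{2}}=\infty$ correctly disposes of the integrability caveat in the Jensen step, so nothing is missing.
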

\noindent Proofs of all these results can be found in \cite{HH80}.
\subsection{Deferred Proof of  \cref{theorem:if}}
\label{sec:deferred proofs-of main lemmas}
In the following preliminary result, we focus on the case of \eqref{eq:FTRL} with payoff feedback as described in \cref{sec:feedback} and we show that if $\eq$ is a \emph{strict} Nash equilibrium, there exists a subsequence of $\parens{\pstate_\round}_{\round=0}^\infty$ that converges to it. In order to achieve this convergence result, it is necessary to assume that the sequence $\parens{\pstate_\round}_{\round=0}^\infty$ is contained in a neighborhood of $\eq$, in which \eqref{eq:variational stability} holds. %
Here, we  outline the basic steps below:
\begin{enumerate}
 \item[Step 0:] By contradiction, assume that there exists a neighborhood, in which $\pstate_\round$ is not contained for all sufficiently large $\round$ and assume without loss of generality that holds for all $\round = 0,1,\hdots$. 
  \item[Step 1:]We start by showing that the terms of the RHS of the third property described in \cref{prop:Prop Fenchel} are converging almost surely to finite values, except for one. This term, which is a consequence of $\eq$ being variational stable, goes to $-\infty$ as $\round\to\infty$ .
 \item[Step 2:] The next crucial observation is that the Fenchel coupling is bounded from below by $0$, thanks to the first property in \cref{prop:Prop Fenchel}, which gives us the contradiction.
\end{enumerate}

\begin{remark*}
For the interested reader, the assumption \ref{asm:variance},  $\sum_{\run} \step_{\run}^{2} \sbound_{\run}^{2} < \infty$, that we use in the preliminary lemma and in \cref{theorem:if} could be relaxed by using the H\"{o}lder inequality to $\sum_{\run} \step_{\run}^{1+q/2} \sbound_{\run}^{q}<\infty$ for any $q\in [2,\infty)$.
\end{remark*}

\begin{lemma}\label{lem:strict remain to neighborhood}
Let $\eq\in \pures$ be a strict Nash equilibrium. If \eqref{eq:FTRL} is run with payoff feedback of the type \eqref{eq:feedback}, that satisfies \ref{asm:bias}-\ref{asm:variance} and the sequence of play $\parens{\curr[\pstate]}_{\round=0}^\infty$ does not exit a neighborhood $\mathcal{R}$ of $\eq$, in which variational stability holds, then there exists a subsequence $\pstate_{\round_\subround}$ of $\state_\round$ that converges to $\eq$ almost surely.
\end{lemma}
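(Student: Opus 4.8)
The plan is to argue by contradiction, playing the uniform negative drift of variational stability against the almost-sure finiteness of the stochastic error terms in the telescopic bound \eqref{fenchel-telescopic-main}, and exploiting that the Fenchel coupling is nonnegative (\cref{prop:Prop Fenchel}). Suppose no subsequence of $\parens{\pstate_\round}_{\round=0}^\infty$ converges to $\eq$; then $\liminf_{\round\to\infty}\norm{\pstate_\round-\eq}>0$, so there are $\epsilon>0$ and an index after which $\norm{\pstate_\round-\eq}\geq\epsilon$, and after shifting the time origin we may assume this holds for every $\round\geq0$. The first step is to upgrade \eqref{eq:variational stability} to a \emph{uniform} drift bound: the map $x\mapsto\inner{\payv\parens{x}}{x-\eq}$ is continuous and, by variational stability, strictly negative on $\mathcal{R}\setminus\braces{\eq}$, so its maximum over the compact set $\braces{x\in\cl\parens{\mathcal{R}}:\norm{x-\eq}\geq\epsilon}$ equals some constant $-c<0$. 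Since the play never exits $\mathcal{R}$, we get $\rvdrift_\subround\leq-c$ for all $\subround$, and therefore $\sum_{\subround=0}^{\round}\step_\subround\rvdrift_\subround\leq-c\sum_{\subround=0}^{\round}\step_\subround\to-\infty$, using $\sum_\subround\step_\subround=\infty$.

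The second step is to show that the three remaining terms in \eqref{fenchel-telescopic-main} converge to finite limits almost surely. For the martingale part, $\rvnoise_\subround=\inner{\noise_\subround}{\pstate_\subround-\eq}$ has $\exof{\rvnoise_\subround\given\filter_\subround}=0$, and by Cauchy--Schwarz together with the boundedness of $\strats$ its conditional second moment is $\bigoh\parens{\sbound_\subround^2}$; by orthogonality of martingale increments the partial sums $\sum_{\subround}\step_\subround\rvnoise_\subround$ are then bounded in $L^2$ thanks to \ref{asm:variance}, hence converge almost surely by \cref{Doob's convergence}. For the bias part, $\bias_\subround$ is $\filter_\subround$-measurable, so $\dnorm{\bias_\subround}\leq\bbound_\subround$ and $\sum_\subround\step_\subround\abs{\rvbias_\subround}\leq\diam\parens{\strats}\sum_\subround\step_\subround\bbound_\subround<\infty$ by \ref{asm:bias}. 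For the quadratic remainder, $\tfrac{1}{2\stcon_\hreg}\sum_{\subround}\step_\subround^2\dnorm{\npayv_\subround}^2$ is a nondecreasing nonnegative sequence whose expectation is at most $\tfrac{1}{2\stcon_\hreg}\sum_\subround\step_\subround^2\sbound_\subround^2<\infty$, so it converges almost surely as well.

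Combining the two steps, the right-hand side of \eqref{fenchel-telescopic-main} tends to $-\infty$ almost surely, while its left-hand side $\fench_\hreg\parens{\eq,\dstate_{\round+1}}$ is nonnegative by \cref{prop:Prop Fenchel} --- a contradiction. Consequently $\liminf_{\round\to\infty}\norm{\pstate_\round-\eq}=0$, and extracting times along which $\norm{\pstate_\round-\eq}$ shrinks to zero yields the desired subsequence $\pstate_{\round_\subround}\to\eq$ almost surely. The main obstacle I anticipate is the control of the stochastic terms in the second step: the per-round variance proxy $\sbound_\subround^2$ need not be bounded uniformly in $\subround$ (under bandit feedback it blows up like $1/\epar_\subround$), so the argument cannot rely on any pathwise or per-step estimate and must instead extract almost-sure convergence purely from the aggregate summability conditions \ref{asm:bias}--\ref{asm:variance}; it is precisely this a.s.\ convergence of the noise martingale that lets the deterministic drift dominate the bound.
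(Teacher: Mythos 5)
Your proposal is correct and follows essentially the same route as the paper's proof: a contradiction argument built on the telescoped Fenchel-coupling bound, a uniform negative drift constant extracted from variational stability on the region where the play is trapped away from $\eq$, almost-sure control of the noise, bias and second-order terms, and finally the nonnegativity of $\fench_\hreg\parens{\eq,\cdot}$ from \cref{prop:Prop Fenchel}. The only differences are in the technical tools for the error terms \textemdash\ you obtain finite a.s.\ limits via $L^2$-bounded martingale convergence, the pathwise bound $\dnorm{\bias_\subround}\leq\bbound_\subround$, and monotone convergence, whereas the paper normalizes by $\tau_\round=\sum_{\subround=0}^{\round}\step_\subround$ and shows the same sums are $o\parens{\tau_\round}$ via the martingale strong law and Doob's convergence theorem \textemdash\ and both suffice because the drift term behaves like $-c\,\tau_\round\to-\infty$.
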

\begin{proof}
Suppose that there exists a neighborhood $\nhd\subseteq \mathcal{R}$ of $\eq$ , such that $\curr[\pstate]\notin \nhd$ for all large enough $\round$. Assume without loss of generality that this is true for all $\round\geq 0$. Since variational stability holds in $R$, we have
\begin{equation}\label{formulation}
    \inner{\payv\parens{\strat}}{\strat -\eq} <0 \text{ for all } \strat\in \mathcal{R} ,\; \strat\neq \eq
\end{equation}
Furthermore, from \cref{prop:Prop Fenchel} we have that for each round $\round$:
\begin{equation}
    \fench_\hreg\parens{\eq,\dstate_{\round+1}} \leq \fench_\hreg\parens{\eq,\curr[\dstate]} + \step_\round \inner{\curr[\npayv]}{\curr - \eq} + \dfrac{1}{2\stcon}\step_\round^2 \norm{\curr[\npayv]}^2_*
\end{equation}
By applying the above inequality for all rounds from $1,...,\round$ and creating the telescopic sum we get
\begin{equation}\label{fenchel inequality}
    \fench_\hreg\parens{\eq,\dstate_{\round+1}} \leq \fench_\hreg\parens{\eq,\dstate_0} + \sum_{\subround =0}^\round \step_\subround\inner{\npayv_\subround}{\pstate_\subround - \eq} + \dfrac{1}{2\stcon}\sum_{\subround =0 }^\round \step_\subround^2\norm{\npayv_\subround}_*^2
\end{equation}
Remember that for the payoff vector holds that $$\npayv_\round= \payv\parens{\pstate_\round}+\noise_\round + \bias_\round$$
\noindent We now rewrite \eqref{fenchel inequality} 
\begin{equation}
\begin{aligned}
     \fench_\hreg\parens{\eq,\dstate_{\round+1}} \leq \fench_\hreg\parens{\eq,\dstate_0} &+ \sum_{\subround =0}^\round \step_\subround\inner{\payv\parens{\pstate_\subround}}{\pstate_\subround - \eq} + \sum_{\subround =0}^\round \step_\subround\inner{\noise_\subround}{\pstate_\subround -\eq}\\
     &+\sum_{\subround =0}^\round \step_\subround\inner{\bias_\subround}{\pstate_\subround -\eq} + \dfrac{1}{2\stcon}\sum_{\subround =0 }^\round \step_\subround^2\norm{\npayv_\subround}_*^2
\end{aligned}
\end{equation}
Let $\tau_\round = \sum_{\subround =0}^\round \step_\subround$ then 
\begin{equation}\label{fenchel inequality 2}
\begin{aligned}
     \fench_\hreg\parens{\eq,\dstate_{\round+1}} \leq \fench_\hreg\parens{\eq,\dstate_0} &+ \sum_{\subround =0}^\round \step_\subround\inner{\payv\parens{\pstate_\subround}}{\pstate_\subround - \eq} +\tau_\round\parens{\dfrac{\sum_{\subround =0}^\round \step_\subround\inner{\bias_\subround}{\pstate_\subround -\eq}}{\tau_\round}}\\
     &+\tau_\round\parens{\dfrac{\sum_{\subround =0}^\round \step_\subround\inner{\noise_\subround}{\pstate_\subround -\eq}}{\tau_\round} + \dfrac{\frac{1}{2\stcon}\sum_{\subround =0}^\round \step_\subround^2\norm{\npayv_\subround}_*^2}{\tau_\round}}
\end{aligned}
\end{equation}
We focus on the asymptotic behavior of each particular term of the previous inequality. We remind that $\filter_{\run}$ denotes the history of $\state_{\run}$ up to stage $\run$ (inclusive) and thus the feedback signal, $\npayv_{\run}$ is not $\filter_{\run}$-measurable in general.
\begin{itemize}[topsep=.5ex,leftmargin=\parindent]
\setlength{\itemsep}{0pt}
\setlength{\parskip}{.2ex}
    \item Let $R_\round = \sum_{\subround =0}^\round \step_\subround^2 \norm{\npayv_\subround}^2_*$. Then \begin{equation}\label{eq: Rn lemma}
    \exof{R_\round} \leq \sum_{\subround =0}^\round \step_\subround^2 \exof{\norm{\npayv_\subround}^2_*} = \sum_{\subround =0}^\round \step_\subround^2 \exof{\exof{\norm{\npayv_\subround}_*^2\given\filter_\subround}} 
    \leq \sum_{\subround =0}^\round\step_\subround^2 \sbound_\subround^2 < \infty
\end{equation}
where $\sum_{\subround =0}^\round \step_\subround^2 \sbound_\subround^2$ is finite by assumption \ref{asm:variance}. 
Hence by \cref{fact: submartingale} and \eqref{eq: Rn lemma} $R_\round$ is an $L_1$ bounded submartingale while Doob's convergence theorem (\cref{Doob's convergence}) shows that almost surely
\begin{equation}
    \lim_{\round \to\infty} \tau_\round^{-1}R_\round = 0 
\end{equation}
\item Let $S_\round = \sum_{\subround=0}^\round\step_\subround\inner{\noise_\subround}{\pstate_\subround -\eq}$ and $\psi_\subround =\step_\subround \inner{\noise_\subround}{\pstate_\subround -\eq}$. For the expected value of  $\psi_\round$ we have
\begin{equation}
    \exof{\psi_\round\given\filter_\round} =\step_\round \inner{\exof{\noise_\round\given\filter_\round}}{\pstate_\round -\eq} = 0
\end{equation}
and so $S_\round$ is a martingale since $\exof{S_\round\given\filter_\round} = S_{\round-1}$. Moreover, for the expectation of the absolute value of $\psi_\round$, Cauchy-Schwarz inequality implies
\begin{align}
    \exof{\abs{\psi_\round}^2\given\filter_\round} &\leq \step_\round^2\exof{\norm{\noise_\round}^2_*\norm{\curr -\eq}^2\given\filter_\round}\\
    &\leq \step_\round^2\exof{\norm{\noise_\round}^2_*\given\filter_\round}\norm{\pspace}^2\\
    &\leq \step_\round^2\sbound_\round^2\norm{\pspace}^2
\end{align}
since 
\begin{align}
    \exof{\norm{\noise_\round}_*^2\given\filter_\round} &= \exof{\norm{\npayv_\round - \exof{\npayv_\round\given\filter_\round}}^2_*\given\filter_\round}\\
    &=\exof{\norm{\npayv_\round}^2_* -2\inner{\npayv_\round}{\exof{\npayv_\round\given\filter_\round}} + \norm{\exof{\npayv_\round\given\filter_\round}}_*^2\given\filter_\round}\\
    &=\exof{\norm{\npayv_\round}^2_*\given\filter_\round} - \norm{\exof{\npayv_\round\given\filter_\round}}^2_*\\
    &\leq \exof{\norm{\npayv_\round}^2_*\given\filter_\round} \leq\sbound_\round^2
\end{align}
where $\sbound_\round^2$ is the upper bound of $\exof{\norm{\npayv_\round}^2_*\given\filter_\round}$ described in \cref{sec:feedback}.\\ Obviously,
$\sum_{\round =0}^\infty\tau_\round^{-2} \exof{\abs{\psi_\round}^2\given\filter_\round} <\infty$ and so by the strong law of large number for martingales (\cref{law of large numbers}) yields that almost surely
\begin{equation}
    \lim_{\round\to\infty}\tau_\round^{-1} S_\round = 0 
\end{equation}
\item Let $W_\round =\sum_{\subround=0}^\round \step_\subround\inner{\bias_\subround}{\pstate_\subround -\eq}$ then by Cauchy-Schwarz inequality
\begin{equation}
\begin{aligned}
    \abs{\tau^{-1}_\round W_\round}\leq\abs{\tau_\round^{-1}\sum_{\subround=0}^\round\step_\subround\inner{\bias_\subround}{\pstate_\subround -\eq}}&\leq \tau_\round^{-1}\sum_{\subround=0}^\round\step_\subround\abs{ \inner{\bias_\subround}{\pstate_\subround -\eq}}\\
    &\leq \tau_\round^{-1}\sum_{\subround=0}^\round\step_\subround\norm{\bias_\subround}_*\norm{\pspace}
    \end{aligned}
\end{equation}
Let $J_\round = \sum_{\subround=0}^\round\step_\subround\norm{\bias_\subround}_*\norm{\pspace}$. Notice that $W_\round \leq J_\round$  and that from \cref{fact: submartingale} $J_\round$ is a submartingale with 
\begin{equation}
    \exof{J_\round}  = \norm{\pspace}\sum_{\subround=0}^\round\step_\subround\exof{\norm{\bias_\subround}_*} \leq \norm{\pspace}\sum_{\subround =0}^\round \step_\subround \exof{\exof{\norm{\bias_\subround}_*\given\filter_\subround}}\leq \norm{\pspace}\sum_{\subround=0}^\round \step_\subround \bbound_\subround <\infty
\end{equation}
where $\bbound_\round$ is the upper bound of $\exof{\norm{\bias_\round}_*\given\filter_\round}$. Thus, $J_\round$ is a $L_1$ bounded submartingale and by Doob's convergence theorem (\cref{Doob's convergence}) almost surely 
\begin{equation}
    \lim_{\round\to \infty}\tau_\round^{-1}J_\round = 0
\end{equation}
As a result, $\tau_\round^{-1}W_\round \to 0$.
\item Finally, we will examine the term $\sum_{\subround =0}^\round \step_\subround\inner{\payv\parens{\pstate_\subround}}{\pstate_\subround -\eq}$. Recall that we had assumed that  $\pstate_\round\in \mathcal{R}\setminus \nhd$ for all $\round\geq 0$, while variational stability holds in $\mathcal{R}$, so by continuity there exists $c>0$, such that for all $\round\geq 0$
\begin{equation}
    \inner{\payv\parens{\pstate_\round}}{\pstate_\round -\eq} \leq -c 
\end{equation}
 \end{itemize}
We return to \eqref{fenchel inequality 2} and we equivalently we have that
\begin{equation}
\begin{aligned}
     \fench_\hreg\parens{\eq,\dstate_{\round+1}} &\leq \fench_\hreg\parens{\eq,\dstate_0} + \sum_{\subround =0}^\round \step_\subround\inner{\payv\parens{\pstate_\subround}}{\pstate_\subround - \eq} +\tau_\round\parens{\tau_{\round}^{-1}W_\round+\tau_\round^{-1}R_\round +\tau_{-1}^\round S_\round}\\
     &\leq \fench_\hreg\parens{\eq,\dstate_0} -c\tau_\round +\tau_\round\parens{\tau_{\round}^{-1}W_\round+\tau_\round^{-1}R_\round +\tau^{-1}_\round S_\round}
     \end{aligned}
\end{equation}
Thus, $\fench_\hreg\parens{\eq,\dstate_{\round+1}} \sim -c\sum_{\subround=0}^\infty \step_\subround \to -\infty$.

By \cref{prop:Prop Fenchel} we conclude to a contradiction. This implies that some instance of the sequence of play is included to every neighborhood $\nhd$ of $\eq$ and thus there exists subsequence $\pstate_{\round_\subround}$ of $\pstate_\round$ that almost surely converges to $\eq$.
\end{proof}

\begin{theorem}[Restatement of \cref{theorem:if} ]\label{th:strict are stable}
Let $\eq$ be a strict Nash equilibrium. If \eqref{eq:FTRL} is run with  payoff feedback that satisfies \ref{asm:bias}-\ref{asm:variance}, then $\eq$ is stochastically asymptotically stable. 
\end{theorem}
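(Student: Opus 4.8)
The plan is to use the Fenchel coupling $\fench_\hreg\parens{\eq,\cdot}$ as a stochastic Lyapunov function and to exploit the fact that, since $\eq$ is strict, it is variationally stable by \cref{prop:strict variational stable}: there is a neighborhood of $\eq$ on which the drift $\rvdrift_\subround = \inner{\payv\parens{\pstate_\subround}}{\pstate_\subround-\eq}$ is nonpositive. The starting point is the second-order bound of \cref{prop:Prop Fenchel}, which, telescoped along \eqref{eq:FTRL}, yields \eqref{fenchel-telescopic-main} and decomposes the evolution of $\fench_\hreg\parens{\eq,\dstate_{\round+1}}$ into the nonpositive drift $\sum_\subround \step_\subround \rvdrift_\subround$ plus three error contributions: the martingale noise term $\sum_\subround \step_\subround\rvnoise_\subround$, the bias term $\sum_\subround\step_\subround\rvbias_\subround$, and the discretization term $\tfrac{1}{2\stcon}\sum_\subround\step_\subround^2\dnorm{\npayv_\subround}^2$. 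I will split the argument into \emph{stability} and \emph{attraction}, matching the two clauses of \cref{stability}.

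For stability, I fix the confidence level $\conlevel$ and pick $\eps$ small enough that \eqref{eq:variational stability} holds throughout the sublevel set $\nhd_\eps = \setdef{\pstate}{\breg_\hreg\parens{\eq,\pstate}<\eps}$; by the reciprocity property (\cref{prop:rec Bregman,prop:rec Fencel}), both $\nhd_\eps$ and its dual $\nhd_\eps^* = \setdef{\dstate}{\fench_\hreg\parens{\eq,\dstate}<\eps}$ are genuine neighborhoods of $\eq$, with $\mirror\parens{\nhd_\eps^*}\subseteq\nhd_\eps$. I then control the three error sums uniformly in $\round$: the noise term is a martingale, so \cref{Max inequality martingales} together with \ref{asm:variance} bounds it; the bias term is dominated by the submartingale $\sum_\subround \step_\subround\dnorm{\bias_\subround}\norm{\pspace}$, and the discretization term is itself a submartingale, so \cref{Max inequality submartingales} together with \ref{asm:bias}--\ref{asm:variance} bounds both. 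Taking the relevant step-size tails small enough relative to $\eps^2\conlevel$ and $\eps\conlevel$, all three sums stay below $\eps/4$ for every $\round$ with probability at least $1-\conlevel$. Conditioning on this event and starting from $\pstate_\start\in\nhd_{\eps/4}\cap\im\mirror$, a straightforward induction — using $\rvdrift_\subround\leq 0$ as long as the iterates remain in $\nhd_\eps$ — shows $\fench_\hreg\parens{\eq,\dstate_{\round+1}}<\eps$ for all $\round$, so the trajectory never leaves $\nhd_\eps$.

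For attraction, I condition on the same high-probability confinement event and invoke \cref{lem:strict remain to neighborhood}, which already guarantees that a subsequence $\pstate_{\round_\subround}$ converges to $\eq$ almost surely. To upgrade this to convergence of the whole sequence, I observe that, while the iterates stay in the variationally stable region, the nonpositivity of $\rvdrift_\subround$ makes $\fench_\hreg\parens{\eq,\dstate_\round}$ an \emph{almost supermartingale}, its increments being bounded by the noise, bias, and discretization terms whose partial sums converge almost surely by the martingale law of large numbers (\cref{law of large numbers}) and Doob's convergence theorem (\cref{Doob's convergence}). Hence $\fench_\hreg\parens{\eq,\dstate_\round}$ converges almost surely to a finite limit; since the subsequence forces $\liminf_\round \fench_\hreg\parens{\eq,\dstate_\round}=0$, that limit must be $0$, and \cref{prop:Prop Fenchel} then gives $\pstate_\round=\mirror\parens{\dstate_\round}\to\eq$. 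Combining the two parts yields stochastic asymptotic stability.

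The main obstacle is the high-probability, uniform-in-$\round$ control of the error terms — especially the martingale noise, whose per-step variance can be unbounded under inverse-propensity (bandit) scoring, so that only the aggregate bound from \ref{asm:variance} is available. This is precisely where the sharp maximal inequalities must replace the $L^2$ stochastic-approximation arguments that would otherwise fail. The second delicate point is that the almost-supermartingale structure is valid only on the confinement event, so care is needed to promote the subsequential limit to a genuine full-sequence limit equal to $0$.
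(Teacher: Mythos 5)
Your proposal is correct and follows essentially the same route as the paper's own proof: the same telescoped Fenchel-coupling bound \eqref{fenchel-telescopic-main}, the same control of the noise, bias, and discretization sums via Doob's maximal inequalities (\cref{Max inequality martingales,Max inequality submartingales}) under \ref{asm:bias}--\ref{asm:variance}, and the same inductive confinement argument using the nonpositive drift from variational stability. Your attraction step likewise matches the paper: the subsequential convergence of \cref{lem:strict remain to neighborhood}, the almost-supermartingale structure of $\fench_\hreg\parens{\eq,\dstate_\round}$ on the confinement event, and Doob's convergence theorem (\cref{Doob's convergence}) forcing the limit to equal the $\liminf$, namely $0$.
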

\begin{proof}
Fix a confidence level $\conlevel$ and let $\nhd_\eps = \braces{x: \breg_\hreg\parens{\eq,x}<\eps}$ and $\nhd_\eps^* = \left\{y\in\dspace: \fench_\hreg\parens{\eq,y} < \eps\right\}$.
\begin{itemize}[topsep=.5ex,leftmargin=\parindent]
\setlength{\itemsep}{0pt}
\setlength{\parskip}{.2ex}
\item By \cref{prop:Prop Bregman} for all $x\in \nhd_\eps$ it holds that $\norm{x-\eq}^2 < 2\eps/\stcon$. 
\item By \cref{prop:Prop Fenchel}  for all $x=\mirror\parens{y}$, $y\in \nhd_\eps^*$ it holds that $\norm{x-\eq}^2 < 2\eps/\stcon$. 
\item Notice that from \cref{prop:Prop Fenchel}  $\mirror\parens{\nhd_\eps^*}\subseteq \nhd_\eps$ and $\mirror^{-1}\parens{\nhd_\eps}= \nhd_\eps^*$. 
\end{itemize}
Thus we conclude that whenever $y\in \nhd_\eps^*$, $x=\mirror\parens{y}\in \nhd_\eps$. Finally, by \eqref{eq:Bregman reciprocity} $\nhd_\eps$ is a neighborhood of $\eq$ .
 Since $\eq$ is a strict \acl{NE}, pick  $\eps$ sufficiently small such that \eqref{eq:variational stability} holds for all $x\in \nhd_{4\eps}$.\\

\noindent\emph{(Stability).}\\
 Assume now that $\dstate_0\in \nhd_\eps^*$ and thus $\fench_\hreg\parens{\eq,\dstate_0} <\eps \leq 4\eps$. We will prove by induction that $\dstate_n\in \nhd_{4\eps}^*$ for all $\round\geq 1$ with probability at least $1-\conlevel$. Suppose that  $\fench_\hreg\parens{\eq,\dstate_\subround}<4\eps$ for all $1\leq \subround\leq \round$ and we will prove that $\dstate_{\round +1}\in \nhd_{4\eps}^*$ and consequently $\pstate_{\round+1}\in \nhd_{4\eps}$.\\\\
From \cref{prop:Prop Fenchel} we have
\begin{equation}
    \fench_\hreg\parens{\eq,\dstate_{\round+1}} \leq \fench_\hreg\parens{\eq,\dstate_\round} + \step_\round\inner{\npayv_\round}{\curr-\eq} + \dfrac{1}{2\stcon}\step_\round^2\norm{\npayv_\round}^2_*
\end{equation}
For the payoff feedback, it holds $\npayv_\round = \payv\parens{\pstate_\round} + \noise_\round + \bias_\round$. Then by telescoping the above inequality and substituting we get 
\begin{equation}\label{key inequality 3}
\begin{aligned}
    \fench_\hreg\parens{\eq,\dstate_{\round+1}} \leq \fench_\hreg\parens{\eq,\dstate_0} &+ \sum_{\subround =0}^\round \step_\subround\inner{\payv\parens{\pstate_\subround}}{\pstate_\subround-\eq}
    + \sum_{\subround=0}^\round \step_\subround\inner{\noise_\subround}{\pstate_\subround-\eq}\\
    &+ \sum_{\subround=0}^\round \step_\subround \inner{\bias_\subround}{\pstate_\subround -\eq} + \dfrac{1}{2\stcon}\sum_{\subround=0}^\round \step_\subround^2 \norm{\npayv_\subround}^2_*
\end{aligned}
\end{equation}
We will study each term of the inequality separately.
\begin{itemize}[topsep=.5ex,leftmargin=\parindent]
\setlength{\itemsep}{0pt}
\setlength{\parskip}{.2ex}
\item Let  $R_\round = \dfrac{1}{2\stcon}\sum_{\subround =0}^\round \step_\subround^2 \norm{\npayv_\subround}^2_*$  and $F_{\round, \eps} = \left\{\sup_{0\leq\subround\leq\round} R_\subround \geq \eps\right\}$. As we discussed in \cref{lem:strict remain to neighborhood}, $R_\round$  is a submartingale with $\exof{R_\round} \leq \sum_{\subround=0}^\round\step_\subround^2\sbound_\subround^2$. Doob's maximal inequality (\cref{Max inequality submartingales}) yields 
\begin{equation}
    \probof{F_{\round,\eps}}\leq\dfrac{\exof{R_\round}}{\eps}\leq \dfrac{\sum_{\subround=0}^\round\step_\subround^2\sbound_\subround^2}{2\stcon\eps}
\end{equation}
By demanding $\sum_{\subround=0}^\infty \step_\subround^2 \sbound_\subround^2 \leq 2\stcon \eps \conlevel/3$ the event $F_\eps =\union_{\round=0}^\infty F_{\eps,\round}$ will occur with probability at most $\conlevel/3$.\\\\
\item Let $S_\round = \sum_{\subround=0}^\round\step_\subround \inner{\noise_\subround}{\pstate_\subround-\eq}$ and $E_{\round,\eps} = \left\{\sup_{0\leq\subround\leq\round}S_\subround\geq \eps\right\}$. Since  $S_\round$ is a martingale, as we discussed in \cref{lem:strict remain to neighborhood}, Doob's maximal inequality (\cref{Max inequality martingales}) yields
\begin{equation}
    \probof{E_{\round,\eps}}\leq \dfrac{\exof{{S_\round}^2}}{\eps^2}\leq \dfrac{\norm{\pspace}^2\sum_{\subround=0}^{\round}\step_\subround^2\sbound_\subround^2}{\eps^2}
\end{equation}
In order to calculate the above upper bound, we define $\psi_\subround = \inner{\noise_\subround}{\pstate_\subround-\eq} $. Notice that 
$ S_\round^2 = \sum_{\subround=0}^{\round} \abs{\psi_\subround}^2 +2 \sum_{\subround<\ell}^{\round} \psi_\subround \psi_\ell    $.
Indeed it holds that
\begin{align}
    \exof{\abs{\psi_\subround}^2}&\leq \exof{\exof{\norm{\noise_\subround}_*^2\norm{\pstate_\subround-\eq}^2\given\filter_\subround}}\\
    &\leq \exof{\exof{\norm{\noise_\subround}^2_*\given\filter_\subround}}\norm{\pspace}^2
\end{align}
where,
\begin{align}
    \exof{\norm{\noise_\subround}_*^2\given\filter_\subround} &= \exof{\norm{\npayv_\subround - \exof{\npayv_\subround\given\filter_\subround}}^2_*\given\filter_\subround}\\
    &=\exof{\norm{\npayv_\subround}^2_* -2\inner{\npayv_\subround}{\exof{\npayv_\subround\given\filter_\subround}} + \norm{\exof{\npayv_\subround\given\filter_\subround}}_*^2\given\filter_\subround}\\
    &=\exof{\norm{\npayv_\subround}^2_*\given\filter_\subround} - \norm{\exof{\npayv_\subround\given\filter_\subround}}^2_*\leq \sbound_\subround^2\\
    &\leq \exof{\norm{\npayv_\subround}^2_*\given\filter_\subround} \leq\sbound_\subround^2
\end{align}
Furthermore, for all $k\neq \ell$ it holds that
    $\exof{\psi_\subround\psi_\ell} = \exof{\exof{\psi_\subround\psi_\ell\given\filter_{k\vee \ell}}}= 0$.\\
Thus, by demanding $\sum_{\subround=0}^\infty \step_\subround^2\sbound_\subround^2 \leq \dfrac{\eps^2\conlevel}{3\norm{\pspace}^2}$ we ensure that the event $E_{\eps} = \union_{\round=0}^\infty E_{\eps,\round}$ will occur with probability at most $\conlevel/3$.\\
\item Let $W_\round = \sum_{\subround=0}^\round\step_\subround \inner{\bias_\subround}{\pstate_\subround - \eq}$, $J_\round = \sum_{\subround =0}^\round \step_\subround \norm{\bias_\subround}_*\norm{\pspace}$ as we discussed in \cref{lem:strict remain to neighborhood}
\begin{equation}
    W_\round\leq J_\round
\end{equation}
where $J_\round$ is a submartingale with $\exof{J_\round}\leq \norm{\pspace}\sum_{\subround =0}^\round \step_\subround \bbound_\subround$. Similarly to the previous steps let $D_{\eps,\round} = \braces{\sup_{0\leq\subround\leq\round}J_\subround\geq\eps }$, then Doob's maximal inequality (\cref{Max inequality submartingales}) yields
\begin{equation}
    \probof{D_{\eps,\round}}\leq\dfrac{\exof{J_\round}}{\eps}\leq \dfrac{\norm{\pspace}\sum_{\subround =0}^\round \step_\subround \bbound_\subround}{\eps}
\end{equation}
By demanding $\sum_{\subround =0}^\infty \step_\subround \bbound_\subround\leq \dfrac{\eps\conlevel}{3\norm{\pspace}}$ then the event $D_\eps = \cup_{\round =0}^\infty D_{\eps,\round}$ will happen with probability at most $\conlevel/3$, which implies that with probability at most $\conlevel/3$ $W_\round$ will exceed $\eps$ for all $\round\geq 0$.\\
\item Furthermore, if  $\pstate_\subround$ belongs to a neighborhood in which \eqref{eq:variational stability} holds for all $0\leq\subround\leq\round$, we have
\begin{equation}
    \inner{\payv\parens{\pstate_\subround}}{\pstate_\subround-\eq}\leq 0 \text{ for all } \round\geq 0
\end{equation}
\end{itemize}
\noindent By demanding the parameters of the algorithm to satisfy:
\[ \sum_{\subround=0}^\infty\step_\subround^2\sbound_\subround^2\leq \min\left\{\dfrac{\eps^2\conlevel}{3\norm{\pspace}^2},\dfrac{2\stcon \eps \conlevel}{3}\right\}\ \ \  \& \ \ \ \sum_{\subround=0}^\infty\step_\subround\bbound_\subround \leq \dfrac{\eps\conlevel}{3\norm{\pspace}\norm{\dspace}_*}
\]
\noindent If all of $\bar{E}_\eps,\bar{F}_\eps,\bar{D}_\eps$ hold, this happens with probability $\probof{\bar{E}_\eps\intersect\bar{F}_\eps\intersect\bar{D}_\eps}\geq 1-\conlevel$ and from \eqref{key inequality 3} we have $\fench_\hreg\parens{\eq,\dstate_{\round+1}}<4\eps$.
This immediately yields that $\dstate_{\round+1}\in \nhd_{4\eps}^*$ and consequently as we explained in the begin of the proof $\pstate_{\round+1}\in \nhd_{4\eps}$, in which variational stability holds, with probability at least $1-\conlevel$. \\

\noindent\emph{(Convergence).}\\
By \cref{lem:strict remain to neighborhood} there exists a subsequence $\pstate_{\round_\subround}$ that converges to $\eq$. By \eqref{eq: Fenchel reciprocity} we have that $\liminf_{\round\to\infty}{\fench_\hreg\parens{\eq,\dstate_\round}} = 0$. In order to complete the proof, it is sufficient to prove that the limit of $\fench_\hreg\parens{\eq,\dstate_\round}$ exists.
Notice that since the sequence of play remains in $\nhd_{4\eps}$ variational stability holds and thus $\inner{\payv\parens{\curr[\pstate]}}{\curr[\pstate]-\eq}\leq 0$. Again using \cref{prop:Prop Fenchel} we have:
\begin{equation}
    \fench_\hreg\parens{\eq,\dstate_{\round+1}}\leq\fench_\hreg\parens{\eq,\dstate_\round} + \step_\round\inner{\npayv_\round}{\curr-\eq} +\dfrac{1}{2\stcon}\step_\round^2\norm{\npayv_\round}_*^2
\end{equation}
\begin{align}
    \exof{\fench_\hreg\parens{\eq,\dstate_{\round+1}}\given\filter_\round}&\leq \fench_\hreg\parens{\eq,\dstate_\round} + \step_\round\exof{\inner{\bias_\round}{\pstate_\round-\eq}\given\filter_\round} +\dfrac{1}{2\stcon}\step_\round^2\exof{\norm{\npayv_\round}_*^2\given\filter_\round}\\
    &\leq \fench_\hreg\parens{\eq,\dstate_\round} + \step_\round\exof{\inner{\bias_\round}{\pstate_\round-\eq}\given\filter_\round} + \dfrac{1}{2\stcon}\step_\round^2\sbound_\round^2\label{above equation}
\end{align}
Notice that since from \cref{prop:Prop Fenchel} $\fench_\hreg\parens{\eq,\dstate} \geq 0$, if we apply absolute values in the above inequality we have
\begin{align}
 \exof{\fench_\hreg\parens{\eq,\dstate_{\round+1}}\given\filter_\round} &= \abs{\exof{\fench_\hreg\parens{\eq,\dstate_{\round+1}}\given\filter_\round}}\\&\leq \abs{\fench_\hreg\parens{\eq,\dstate_\round}} + \step_\round\exof{\abs{\inner{\bias_\round}{\pstate_\round-\eq}}\given\filter_\round} +\dfrac{1}{2\stcon}\step_\round^2\sbound_\round^2\\
    &\leq \fench_\hreg\parens{\eq,\dstate_\round} + \step_\round\exof{\norm{\bias_\round}_*\given\filter_\round}\norm{\pspace} + \dfrac{1}{2\stcon}\step_\round^2\sbound_\round^2\\
    &\leq \fench_\hreg\parens{\eq,\dstate_\round} + \step_\round\bbound_\round\norm{\pspace} + \dfrac{1}{2\stcon}\step_\round^2\sbound_\round^2
\end{align}
Let
\begin{equation}
    R_\round = \fench_\hreg\parens{\eq,\dstate_\round} + \norm{\pspace}\sum_{\subround = \round}^\infty\step_\subround \bbound_\subround + \dfrac{1}{2\stcon}\sum_{\subround =\round}^\infty\step_\subround^2\sbound_\subround^2
\end{equation}
Then 
\begin{align}
    \exof{R_{\round +1}\given\filter_\round}&\leq \exof{\fench_\hreg\parens{\eq,\dstate_{\round+1}}\given\filter_{\round}} + \sum_{\subround = \round+1}^\infty\step_\subround\bbound_\subround\norm{\pspace} + \dfrac{1}{2\stcon}\sum_{\subround =\round+1}^\infty\step_\subround^2\sbound_\subround^2\\
    &\leq\fench_\hreg\parens{\eq,\dstate_\round} + \sum_{\subround = \round}^\infty\step_\subround\bbound_\subround\norm{\pspace} + \dfrac{1}{2\stcon}\sum_{\subround =\round}^\infty\step_\subround^2\sbound_\subround^2\\
    &= R_\round
\end{align}
Therefore $R_\round$ is a  supermartingale and it is also $L_1$ bounded (each one of the terms is bounded) and so from Doob's convergence theorem (\cref{Doob's convergence}) $R_\round$ converges to a finite random variable and so does $\fench_\hreg\parens{\eq,\dstate_\round}$. Inevitably, $\liminf_{\round\to\infty}\fench_\hreg\parens{\eq,\dstate_\round} = \lim_{\round\to\infty} \fench_\hreg\parens{\eq,\dstate_\round} = 0$ and by \cref{prop:Prop Fenchel}, $\mirror\parens{\dstate_\round} = \curr \to \eq$.

The above analysis shows that whenever $\dstate_0\in \nhd_{\eps}^*$ and thus $\pstate_0\in \nhd_\eps\cap \im\mirror$, $\curr\in \nhd_{4\eps}\cap \im\mirror$ and converges to $\eq$ with arbitrary high probability. Hence, $\eq$ is stochastically asymptotically stable.

\end{proof}

\section{Proof of instability of mixed Nash equilibria}
\label{appendix instability}
\subsection{Proofs of assumptions for \cref{bandit feedback}, \cref{semibandit feedback}}
Below we provide a proof for our claim in  \cref{bandit feedback} that $\bias_\round = \bigoh\parens{\epar_\round}$, $\sbound_\round^2 =\bigoh\parens{1/\epar_\round}$. Focusing on one player $\play\in\players$, notice that
\begin{equation}\label{eq:mean bandit}
    \exof{\npayv_{\play,\round}\given\filter_\round} = \sum_{\pure_{-\play}\in\pures_{-\play}}\estate_{-\play,\round} \parens{\pay_\play\parens{\pure_{\play,1};\pure_{-\play}},\hdots,\pay_\play\parens{\pure_{\play,\abs{\pures_\play}};\pure_{-\play}}} = \payv_\play\parens{\estate_\round}
    \end{equation}
Having this in mind $\npayv_{\play,\round}$ can be viewed as
\begin{equation}
    \npayv_{\play,\round} = \payv_{\play}\parens{\pstate_\round}+\noise_{\play,\round} 
    +\bias_{\play,\round}
\end{equation}
where $\noise_{\play,\round} = \npayv_{\play,\round} - \exof{\npayv_{\play,\round}\given\filter_\round} = \npayv_{\play,\round} - \payv_\play\parens{\estate_\round}$ and $\bias_{\play,\round} = \payv_\play\parens{\estate_\round} - \payv_\play\parens{\pstate_\round}$. Thus, since $\payv_\play\parens{x}$ is multi-linear in $x$ and $\estate_{\play,\round} = \parens{1-\epar_\round}\pstate_{\play,\round} + \epar_\round/\abs{\pures_\play}$ it follows that $\bias_\round = \bigoh\parens{\epar_\round}$. Finally, similarly to \eqref{eq:mean bandit} we can conclude that $\sbound_\round^2 =\bigoh\parens{1/\epar_\round}$.\\
We continue by proving that assumption \ref{asm:degen} is indeed satisfied for both \cref{semibandit feedback}, \cref{bandit feedback}. This is due to the genericity of the game. Actually in the following lemma and corollaries we show that there exist player $\play\in\players$, strategies $a,b\in\supp\parens{\eq_\play}$ and pure strategy profile $\pure_{-\play}\in\supp\parens{\eq_{-\play}}$, where $\eq$ is a mixed Nash equilibrium such that $\abs{\pay_\play\parens{a;\pure_{-\play}}-\pay_\play\parens{b;\pure_{-\play}}}\geq \noiselb$ for some $\noiselb > 0$. %
In order to acquire the exact statement of \ref{asm:degen}, we have to take into account the round in which the game is evolved. Let $\round >0$ be this round, then when examining the stochastic asymptotic stability of a mixed Nash equilibrium $\eq$, the sequence of play is contained in a neighborhood of $\eq$ and thus all of the strategies belonging to the support of $\eq$ have strictly positive probability to be chosen, verifying the statement of \ref{asm:degen}.
\begin{lemma}\label{lem:not equal payoffs}
If the game is generic and has a mixed Nash equilibrium $\eq$, then there exist player $\play\in\players$, pure strategies $a,b\in\supp\parens{\eq_\play}$ ($a\neq b$) and pure strategy profile $\pure_{-\play}\in\supp\parens{\eq_{-\play}}$ such that $\pay_\play\parens{a;\pure_{-\play}} \neq \pay_\play\parens{b;\pure_{-\play}}$.
\end{lemma}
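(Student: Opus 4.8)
The plan is to argue by contraposition through a measure-zero (dimension-counting) argument: I will show that a mixed \acl{NE} can display the opposite behaviour — equal pure-profile payoffs for player $\play$ across the whole support — only for a meager, Lebesgue-null family of payoff tensors, which a generic game avoids by definition. To set up, note first that because $\eq$ is \emph{mixed} there is at least one player $\play\in\players$ with $\abs{\supp\parens{\eq_\play}}\geq 2$; fix such a $\play$ together with two distinct pure strategies $a,b\in\supp\parens{\eq_\play}$. Now assume the conclusion fails, so that in particular $\pay_\play\parens{a;\pure_{-\play}} = \pay_\play\parens{b;\pure_{-\play}}$ for every supported opponent profile $\pure_{-\play}\in\supp\parens{\eq_{-\play}}$; I will derive that the game must be non-generic, a contradiction.

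The key point is that each such equality is a \emph{nontrivial linear constraint} on the game. Since $a\neq b$, the numbers $\pay_\play\parens{a;\pure_{-\play}}$ and $\pay_\play\parens{b;\pure_{-\play}}$ are two \emph{distinct} coordinates of the payoff tensor, so the equation $\pay_\play\parens{a;\pure_{-\play}} - \pay_\play\parens{b;\pure_{-\play}} = 0$ carves out a hyperplane in the finite-dimensional space of games, which is Lebesgue-null and nowhere dense.

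To eliminate the circular dependence of $a,b,\pure_{-\play}$ on the payoff-dependent equilibrium $\eq$, I enumerate over \emph{supports} rather than over equilibria. There are only finitely many support profiles $S=(S_\play)_{\play\in\players}$ with $S_\play\subseteq\pures_\play$ and $\abs{S_\play}\geq 2$ for some $\play$; for each such $S$ fix once and for all a player $\play$, strategies $a\neq b$ in $S_\play$, and a profile $\pure_{-\play}\in\prod_{\playalt\neq\play}S_{\playalt}$ (nonempty, since supports are nonempty), and let $H_S$ denote the resulting hyperplane. Any game possessing a mixed \acl{NE} with support exactly $S$ that violates the conclusion must lie in $H_S$, because $\supp\parens{\eq_{-\play}}=\prod_{\playalt\neq\play}S_{\playalt}$ and the violation forces this particular equality. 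Hence the set of games for which some mixed equilibrium violates the conclusion is contained in the finite union $\bigcup_S H_S$, which is measure-zero and meager. A generic game lies outside this exceptional set, giving the desired contradiction and proving the lemma.

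The step that requires care — and the main obstacle — is precisely this decoupling. One cannot work directly with the single identity $\sum_{\pure_{-\play}}\eq_{-\play}\parens{\pure_{-\play}}\bracks{\pay_\play\parens{a;\pure_{-\play}}-\pay_\play\parens{b;\pure_{-\play}}}=0$ that the equilibrium characterization \eqref{eq:NE} supplies, for two reasons: it does not force the individual summands to vanish, and its weights $\eq_{-\play}\parens{\cdot}$ are themselves functions of the unknown payoffs. Passing to the finite combinatorial list of support profiles strips the equilibrium weights out of the constraints, which is exactly what turns the heuristic ``payoff ties are non-generic'' (and the remark that all-equal payoffs would force a degenerate game) into a rigorous measure-zero statement consistent with the paper's Baire/measure-theoretic notion of genericity.
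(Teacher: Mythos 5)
Your reduction to hyperplanes is correctly executed as far as it goes — enumerating over support profiles $S$ rather than over equilibria does decouple the constraint from the payoff-dependent $\eq$, each $H_S$ is a genuine hyperplane since $a\neq b$ makes the two payoffs distinct coordinates of the tensor, and the containment of the ``bad'' games in the finite union $\bigcup_S H_S$ (null and meager) is sound. The genuine gap is the final step: ``a generic game lies outside this exceptional set.'' That does not follow from the paper's definition of genericity. In \cref{sec:prelims} a game is called \emph{generic} when its \aclp{NE} are quasi-strict (the footnoted meager set is specifically the set of games possessing non-quasi-strict equilibria) — genericity is a fixed, pointwise property of the given game, not membership in the complement of an arbitrary null set. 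Indeed no single game avoids \emph{every} null set (its own singleton is null), so ``the exceptional set is null, hence a generic game avoids it'' is a non sequitur unless you prove that your exceptional family is subsumed by the paper's, i.e., that any game lying in $H_S$ by virtue of payoff ties over the support of one of its mixed equilibria must possess a non-quasi-strict \acl{NE}. That implication is exactly what your proof is missing, and without it you have proved a different (weaker) statement: ``for all games outside a fixed null/meager set, the conclusion holds,'' which matches the heuristic but not the lemma's hypothesis.

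This missing link is precisely where the paper's own proof operates, and it is purely game-theoretic rather than measure-theoretic: assuming the negation (ties for every player $\play$, all $a,b\in\supp\parens{\eq_\play}$, and all $\pure_{-\play}\in\supp\parens{\eq_{-\play}}$), the paper argues that the pure profiles in $\supp\parens{\eq}$ become pure \aclp{NE}, and any such pure equilibrium is automatically non-quasi-strict — the tied strategy $b\neq a$ is an unsupported best reply attaining equality in \eqref{eq:NE} — contradicting genericity directly at the level of the given game. To repair your proof you would either need to establish that link yourself (e.g., by analyzing the equilibrium component created by the ties and locating a non-quasi-strict equilibrium on it), or restate the lemma under a ``payoffs-in-general-position'' notion of genericity, under which your hyperplane argument would suffice but which is not the definition the paper uses.
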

\begin{proof}
Assume that for all players $\play\in\players$, pure strategy profiles $\pure_{-\play}\in\supp\parens{\eq_{-\play}}$ and pure strategies $a,b\in\supp\parens{\eq_\play}$ it is
\begin{equation}
    \pay_\play\parens{a;\pure_{-\play}} = \pay_\play\parens{b;\pure_{-\play}}
\end{equation}
Then for each player $\play$, this implies that all of the payoffs corresponding to pure strategy profiles, which consists of the support of the equilibrium, are equal. Then each pure strategy profile $\parens{\pure_\play;\pure_{-\play}}\in\supp\parens{\eq}$ is a pure Nash equilibrium, which is a contradiction to the genericity of the game.
\end{proof}
\noindent Immediate implications of \cref{lem:not equal payoffs} are:
\begin{corollary}\label{cor:non zero payoff}
There exists player $\play\in\players$ and pure strategy profile $\parens{\pure_\play;\pure_{-\play}}\in\supp\parens{\eq}$, such that $\pay_\play\parens{\pure_\play;\pure_{-\play}}\neq 0$.
\end{corollary}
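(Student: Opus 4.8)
The plan is to obtain this as an immediate consequence of \cref{lem:not equal payoffs}, which already carries all the substantive content. Applying that lemma to the mixed \acl{NE} $\eq$ of the generic game $\fingame$, I would first extract a player $\play\in\players$, two distinct pure strategies $a,b\in\supp\parens{\eq_\play}$, and a pure strategy profile $\pure_{-\play}\in\supp\parens{\eq_{-\play}}$ of the remaining players for which $\pay_\play\parens{a;\pure_{-\play}} \neq \pay_\play\parens{b;\pure_{-\play}}$.

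From here the argument is elementary: two distinct real numbers cannot both equal zero, so at least one of the payoffs $\pay_\play\parens{a;\pure_{-\play}}$ and $\pay_\play\parens{b;\pure_{-\play}}$ is nonzero. I would let $\pure_\play\in\braces{a,b}$ denote the strategy realizing a nonzero value. Since $a,b\in\supp\parens{\eq_\play}$ and $\pure_{-\play}\in\supp\parens{\eq_{-\play}}$, the profile $\parens{\pure_\play;\pure_{-\play}}$ belongs to $\supp\parens{\eq}$ --- here using that the support of the product equilibrium distribution is the product of its marginals --- and by construction $\pay_\play\parens{\pure_\play;\pure_{-\play}}\neq 0$, which is exactly the assertion. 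There is no genuine obstacle in this corollary; its only role is to package the distinctness guaranteed by the lemma into the nonvanishing form needed later (for verifying \ref{asm:degen} in the oracle-based and bandit feedback models), and the single point that merits explicit mention is the support identity $\supp\parens{\eq} = \prod_\play \supp\parens{\eq_\play}$ invoked above.
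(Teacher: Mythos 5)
Your proposal is correct and coincides with the paper's intended argument: the paper states this corollary without proof as an ``immediate implication'' of \cref{lem:not equal payoffs}, and your write-up simply makes the immediate step explicit \textendash\ two distinct payoffs $\pay_\play\parens{a;\pure_{-\play}} \neq \pay_\play\parens{b;\pure_{-\play}}$ cannot both vanish, so one of $a,b$ serves as $\pure_\play$. Your remark on the product structure $\supp\parens{\eq} = \prod_\play \supp\parens{\eq_\play}$ is exactly the (implicit) fact the paper relies on when writing $\parens{\pure_\play;\pure_{-\play}}\in\supp\parens{\eq}$, so nothing is missing.
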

\begin{corollary}\label{cor:payoffs bounded}
There exist $\paybound >0$, player $\play$, strategies $a,b\in\supp\parens{\eq_\play}$ and pure strategy profile $\pure_{-\play}\in\supp\parens{\eq_{-\play}}$ such that $\abs{\pay_\play\parens{a;\pure_{-\play}} -\pay_\play\parens{b;\pure_{-\play}}}\geq \paybound$. There also exist $\paybound' >0$ and $\parens{\pure_\play;\pure_{-\play}}\in \supp\parens{\eq}$ such that $\abs{\pay_\play\parens{\pure_\play;\pure_{-\play}}}\geq\paybound'$.
\end{corollary}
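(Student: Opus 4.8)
The plan is to read off both inequalities as purely quantitative restatements of the two qualitative facts just established, \cref{lem:not equal payoffs} and \cref{cor:non zero payoff}; no new argument is required beyond converting a strict nonequality into an explicit positive gap.

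For the first claim, I would invoke \cref{lem:not equal payoffs}: under the genericity hypothesis it supplies a player $\play\in\players$, two distinct strategies $a,b\in\supp\parens{\eq_\play}$, and a reduced pure profile $\pure_{-\play}\in\supp\parens{\eq_{-\play}}$ for which $\pay_\play\parens{a;\pure_{-\play}}\neq\pay_\play\parens{b;\pure_{-\play}}$. Because these are two fixed real numbers with nonzero difference, I would simply set
\[
\paybound \defeq \abs{\pay_\play\parens{a;\pure_{-\play}}-\pay_\play\parens{b;\pure_{-\play}}} > 0,
\]
so that the bound $\abs{\pay_\play\parens{a;\pure_{-\play}}-\pay_\play\parens{b;\pure_{-\play}}}\geq\paybound$ holds (with equality) for this very triple. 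For the second claim, I would call on \cref{cor:non zero payoff} to produce a player $\play$ and a profile $\parens{\pure_\play;\pure_{-\play}}\in\supp\parens{\eq}$ with $\pay_\play\parens{\pure_\play;\pure_{-\play}}\neq 0$, and then put
\[
\paybound' \defeq \abs{\pay_\play\parens{\pure_\play;\pure_{-\play}}} > 0,
\]
which is the asserted lower bound.

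I do not expect any genuine obstacle here: the content is entirely contained in the preceding lemma and corollary, and the step from ``$\neq$'' to ``$\geq\paybound$'' is legitimate precisely because each quantity is a single, fixed value \textendash\ guaranteed by the finiteness of the action sets $\pures_\play$ \textendash\ rather than an infimum that could degenerate to $0$. If one later needed a gap uniform over \emph{all} mixed equilibria of $\fingame$ rather than a fixed $\eq$, the natural fix would be to take the minimum of the constants $\paybound$ over the finite family of equilibrium supports; but for the statement as phrased, the two displays above already close the argument.
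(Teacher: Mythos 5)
Your proposal is correct and matches the paper's treatment exactly: the paper states \cref{cor:payoffs bounded} as an immediate implication of \cref{lem:not equal payoffs} (together with \cref{cor:non zero payoff}) with no further argument, and your conversion of each strict nonequality into an explicit positive constant by taking the absolute value of the fixed, finite payoff difference is precisely the intended reasoning. Your side remark about finiteness of the action sets ruling out a degenerate infimum is also the right justification for why this step is legitimate.
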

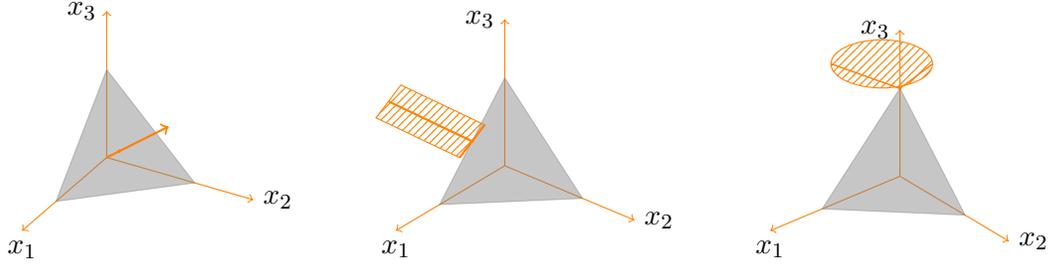
\begin{figure}[!ht]
 \usetikzlibrary{3d}
\tdplotsetmaincoords{60}{120}
\begin{minipage}[t]{.32\textwidth}
\begin{tikzpicture}[tdplot_main_coords,scale = 0.45]
    \def\laxis{5}
    \def\ltriangle{3}
    \begin{scope}[->,orange,tdplot_main_coords]
        \draw (0,0,0) -- (\laxis,0,0) node [below] {\textcolor{black}{$x_1$}};
        \draw (0,0,0) -- (0,\laxis,0) node [right] {\textcolor{black}{$x_2$}};
        \draw (0,0,0) -- (0,0,\laxis) node [left] {\textcolor{black}{$x_3$}};
    \end{scope}
    \filldraw [opacity=0.45,gray,tdplot_main_coords] (\ltriangle,0,0) -- (0,\ltriangle,0) -- (0,0,\ltriangle) -- cycle;
    \filldraw[orange, tdplot_main_coords] (1/3*3,1/3*3,1/3*3) circle (1pt);
    \draw[->,thick,orange,tdplot_main_coords] (0*3,0*3,0*3) -- (1/3*3,1/3*3,1/3*3)-- (5,5,5);
\end{tikzpicture}
\end{minipage}
\begin{minipage}[t]{.32\textwidth}
\begin{tikzpicture}[tdplot_main_coords,scale = 0.45]
    \def\laxis{5}
    \def\ltriangle{3}
    \tdplotsetrotatedcoords{0}{0}{-10}
    \begin{scope}[->,orange,tdplot_rotated_coords]
        \draw (0,0,0) -- (\laxis,0,0) node [below] {\textcolor{black}{$x_1$}};
        \draw (0,0,0) -- (0,\laxis,0) node [right] {\textcolor{black}{$x_2$}};
        \draw (0,0,0) -- (0,0,\laxis) node [left] {\textcolor{black}{$x_3$}};
    \end{scope}
    \def\l{4}
    \def\m{3}
    \filldraw [opacity=.45,gray,tdplot_rotated_coords] (\ltriangle,0,0) -- (0,\ltriangle,0) -- (0,0,\ltriangle) -- cycle;
    \filldraw[orange, tdplot_rotated_coords] (0.5*3,0,0.5*3) circle (1pt);
    \draw[thick,orange,tdplot_rotated_coords] (0.5*3-\m,0*3-\m,0.5*3-\m) -- (0.5*3+\m,0*3+\m,0.5*3+\m);
    \draw[thick,orange,tdplot_rotated_coords] (0.5*3,0,0.5*3) -- (0.5*3+1/6*\l,0*3-2/3*\l,0.5*3+1/6*\l);
    \filldraw[orange, tdplot_rotated_coords , pattern=north east lines,pattern color=orange] (0.5*3-\m,0*3-\m,0.5*3-\m) -- (0.5*3+1/6*\l-\m,0*3-2/3*\l-\m,0.5*3+1/6*\l-\m) -- (0.5*3+1/6*\l +\m,0*3-2/3*\l+\m,0.5*3+1/6*\l+\m) -- (0.5*3+\m,0*3+\m,0.5*3+\m) -- cycle ;
\end{tikzpicture}
\end{minipage}
\begin{minipage}[t]{.32\textwidth}

\begin{tikzpicture}[tdplot_main_coords,scale = 0.45]
    \def\laxis{5}
    \def\ltriangle{3}
    \def\a{0.6*\sqrt(2)}
    \tdplotsetrotatedcoords{0}{0}{-20}
    \begin{scope}[->,orange,tdplot_rotated_coords]
        \draw (0,0,0) -- (\laxis,0,0) node [below] {\textcolor{black}{$x_1$}};
        \draw (0,0,0) -- (0,\laxis,0) node [right] {\textcolor{black}{$x_2$}};
        \draw (0,0,0) -- (0,0,\laxis) node [left] {\textcolor{black}{$x_3$}};
    \end{scope}
    \filldraw [opacity=.45,gray,tdplot_rotated_coords] (\ltriangle,0,0) -- (0,\ltriangle,0) -- (0,0,\ltriangle) -- cycle;
    \filldraw[orange, tdplot_main_coords] (0,0,1*3) circle (1pt);
    \filldraw [orange, tdplot_rotated_coords,pattern=north east lines,pattern color=orange] (0,3*0.5,3*1.5) arc (0:180:1.5cm and 0.712cm) -- (0,0,3) -- cycle;
     \filldraw [orange, tdplot_rotated_coords,pattern=north east lines,pattern color=orange] (0,3*0.5,3*1.5) arc (0:-180:1.5cm and 0.712cm) -- (0,0,3) -- cycle;
\end{tikzpicture}
\end{minipage}
    \caption{The polar cone corresponding to different points of the simplex. For an interior point this is a line perpendicular to the simplex. For a point of the boundary, it is a plane perpendicular to the simplex tangential to the point of the boundary. For an edge the polar cone corresponds to a cone.}
    \label{fig: polar cone}
\end{figure}

\subsection{Deferred proof of
 \cref{theorem:only-if}}
Before moving on our proof we first provide some intuition derived from the notion of the polar cone (\cref{ss:polar cone}).
Looking at \cref{fig: polar cone}, the polar cone corresponding to \emph{fully mixed} or \emph{mixed} Nash equilibria has a key difference with the one corresponding to \emph{strict} Nash equilibria. The latter, in contrast to the former, is fully dimensional. Thus intuitively, considering a sufficiently small neighborhood of a \emph{mixed} Nash equilibrium, the  slightest perturbation in the dual space of the payoffs, will lead to instability of the system. Our result is based on this intuition; we prove  by contradiction that there exists a sufficiently small neighborhood of a \emph{mixed} Nash equilibrium, from which the sequence of play will escape with strictly positive probability. The decomposability assumption of the regularizers ensures that the proof holds also for steep regularizers (See \cref{ss:steep vs non-steep}). 
Below, leveraging the definition of the polar cone in simplex, we prove a useful property for the difference of the aggregated payoffs of FTRL for a sequence of play that shares common pure strategies.

\begin{lemma}\label{lem:Useful Expression}
Let $\pstate_{\play} = \mirror\parens{\dstate_\play}\in \pspace_\play$ be a mixed strategy profile and $a,b\in\supp\parens{\pstate_\play}$ be two pure strategies, for some player $\play\in\players$. Then it holds:
\begin{equation*}
    \inner{\dstate_\play}{\bvec_a-\bvec_b} = \inner{\nabla \hreg_\play\parens{\pstate_\play}}{\bvec_a-\bvec_b}
\end{equation*}
Additionally, if \eqref{eq:FTRL} is run then for a sequence of play $\pstate_{\play,\round_1},\hdots,\pstate_{\play,\round_2}$ that maintains in its support both pure strategies $a,b\in\pures_\play$ it holds
\begin{equation*}
    \inner{\dstate_{\play,\subround_1} -\dstate_{\play,\subround_2}}{\bvec_a-\bvec_b} = \inner{\nabla \hreg_{\play}\parens{\pstate_{\play,\subround_1}} -  \nabla\hreg_{\play}\parens{\pstate_{\play,\subround_2}}}{\bvec_a-\bvec_b} \ \forall \subround_1,\subround_2\in \{\round_1,\hdots,\round_2\}
\end{equation*}
\end{lemma}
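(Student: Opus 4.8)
The plan is to derive both identities directly from the subgradient characterization of the choice map in \cref{prop:y in subgradient}, combined with the explicit description of the polar cone $\pcone(\pstate_\play)$ of the simplex. The single structural fact that does all the work is that, for any $w \in \pcone(\pstate_\play)$, the coordinates $w_a$ and $w_b$ coincide whenever $a,b \in \supp(\pstate_\play)$; this is precisely the statement recorded in the footnote to the definition of $\pcone(\cdot)$, and it is the only step that is not pure bookkeeping.

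First I would establish that coordinate-equality claim. When $a,b \in \supp(\pstate_\play)$, both corresponding coordinates of $\pstate_\play$ are strictly positive, so each of the two directions $\pm(\bvec_a-\bvec_b)$ keeps the point inside the simplex and therefore lies in the tangent cone $\tcone(\pstate_\play)$. By the definition of the polar cone, every $w \in \pcone(\pstate_\play)$ then satisfies both $\inner{w}{\bvec_a-\bvec_b} \le 0$ and $\inner{w}{\bvec_b-\bvec_a} \le 0$, which forces $\inner{w}{\bvec_a-\bvec_b} = w_a - w_b = 0$.

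With this in hand, the first identity is immediate. Since $\pstate_\play = \mirror(\dstate_\play)$, \cref{prop:y in subgradient} gives $\dstate_\play \in \partial\hreg_\play(\pstate_\play) = \nabla\hreg_\play(\pstate_\play) + \pcone(\pstate_\play)$, so I may write $\dstate_\play = \nabla\hreg_\play(\pstate_\play) + w$ for some $w \in \pcone(\pstate_\play)$. Pairing this decomposition with $\bvec_a-\bvec_b$ and invoking $\inner{w}{\bvec_a-\bvec_b}=0$ from the previous step yields $\inner{\dstate_\play}{\bvec_a-\bvec_b} = \inner{\nabla\hreg_\play(\pstate_\play)}{\bvec_a-\bvec_b}$.

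Finally, for the sequence version I would apply the first identity at each of the two rounds $\subround_1$ and $\subround_2$. The hypothesis that the play keeps both $a$ and $b$ in its support throughout guarantees $a,b \in \supp(\pstate_{\play,\subround_1})$ and $a,b \in \supp(\pstate_{\play,\subround_2})$, so the identity applies verbatim to both $\dstate_{\play,\subround_1}$ and $\dstate_{\play,\subround_2}$; subtracting the two equations gives the claim. There is no genuine obstacle beyond the polar-cone observation of the second paragraph, and as a consistency check I note that specializing the sequence identity to $\subround_1 = \round+1$, $\subround_2 = \round$ and using the update rule $\dstate_{\play,\round+1}-\dstate_{\play,\round} = \step_\round\,\npayv_{\play,\round}$ together with $(\nabla\hreg_\play(\pstate_\play))_{\pure_\play} = \theta'_\play(\pstate_{\play\pure_\play})$ recovers the informal statement in \cref{Main:Useful Expression}.
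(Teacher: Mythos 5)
Your proposal is correct and follows essentially the same route as the paper's proof: decompose $\dstate_\play = \nabla\hreg_\play(\pstate_\play) + \PCsym$ with $\PCsym \in \pcone(\pstate_\play)$ via \cref{prop:y in subgradient}, use that $\PCsym_a = \PCsym_b$ for $a,b \in \supp(\pstate_\play)$, and obtain the sequence version by applying the pointwise identity at rounds $\subround_1,\subround_2$ and subtracting. The only (harmless) difference is that you derive the equal-coordinates property of the polar cone from the tangent-cone definition, whereas the paper simply invokes it as part of the definition of $\pcone(\cdot)$.
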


\begin{proof}
From \cref{prop:y in subgradient}, $\dstate_\play$ can be analyzed as $\dstate_\play = \nabla\hreg_\play\parens{\pstate_\play}+ \PCsym$, $\PCsym\in \pcone\parens{\pstate_\play}$. Notice that $\nabla\hreg_\play\parens{\pstate_\play} = \parens{\theta_\play\parens{\pstate_{\play,\pure_1}},\hdots,\theta_\play\parens{\pstate_{\play,\pure_{\abs{\pures_\play}}}}}$. Since $\pstate_\play$ assigns positive probability to both $a,b$, by definition of the polar cone it is $\PCsym_{a} = \PCsym_{b}$. Thus, 
\begin{align}
    \inner{\dstate_\play}{\bvec_a-\bvec_b} &= \PCsym_a + {\theta}'_\play\parens{\pstate_{\play,a}} - \PCsym_b - {\theta}'_\play\parens{\pstate_{\play,b}}\\
    &= \inner{\nabla\hreg_\play\parens{\pstate_\play}}{\bvec_a-\bvec_b}\label{eq1 lemma3}
\end{align}
For the second part, by applying \eqref{eq1 lemma3} for both cases of $\dstate_{\play,\subround_1},\dstate_{\play,\subround_2}$ we have:
\begin{align}
     \inner{\dstate_{\play,\subround_1}}{\bvec_a-\bvec_b} &= \inner{\nabla\hreg_\play\parens{\pstate_{\play,\subround_1}}}{\bvec_a-\bvec_b}\\
     \inner{\dstate_{\play,\subround_2}}{\bvec_a-\bvec_b} &= \inner{\nabla\hreg_\play\parens{\pstate_{\play,\subround_2}}}{\bvec_a-\bvec_b}
\end{align}
From the subtraction of the above equations, we derive the desideratum: 
\begin{equation}
    \inner{\dstate_{\play,\subround_1}-\dstate_{\play,\subround_2}}{\bvec_a-\bvec_b} = \inner{\nabla\hreg_\play\parens{\pstate_{\play,\subround_1}}-\nabla\hreg_\play\parens{\pstate_{\play,\subround_2}}}{\bvec_a-\bvec_b}
\end{equation}

\end{proof}
\begin{theorem}[Restatement of \cref{theorem:only-if} ]\label{th:theorem 1}
Let $\eq$ be a mixed Nash equilibrium. If \eqref{eq:FTRL} is run with any feedback model that satisfies \ref{asm:degen}, then $\eq$ cannot be stochastically asymptotically stable for any choice of step-schedules.
\end{theorem}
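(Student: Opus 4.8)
The plan is to argue by contradiction. Suppose $\eq$ is stochastically asymptotically stable; in particular it is stochastically stable in the sense of \cref{stability}. First I would fix an arbitrary round $\round$ and invoke \ref{asm:degen} at this round to produce a player $\play\in\players$, two pure strategies $a,b\in\supp(\eq_\play)$, and constants $\noiselb,\problb>0$ with $\probof{\abs{\npayv_{\play a,\round}-\npayv_{\play b,\round}}\geq\noiselb\given\filter_\round}\geq\problb$ on the event that the play is close to $\eq$. Next, applying stochastic stability with a target $\eps$-neighborhood of $\eq$ and confidence level $\conlevel$, I obtain an initial neighborhood $\nhd_0$ such that, whenever $\pstate_0\in\nhd_0$, the entire trajectory $(\pstate_\round)_{\round\geq0}$ stays within $\eps$ of $\eq$ with probability at least $1-\conlevel$.

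On this stability event both $\pstate_\round$ and $\pstate_{\round+1}$ lie within $\eps$ of $\eq$, so the triangle inequality bounds the increments $\abs{\pstate_{\play a,\round+1}-\pstate_{\play a,\round}}$ and $\abs{\pstate_{\play b,\round+1}-\pstate_{\play b,\round}}$ by $\bigoh(\eps)$. The decisive structural point is that, since $a,b\in\supp(\eq_\play)$, the equilibrium weights $\eq_{\play a},\eq_{\play b}$ are strictly positive; choosing $\eps$ small confines the $a$- and $b$-coordinates of the iterates to a compact subinterval of $(0,1]$ on which $\theta'_\play$ is Lipschitz (recall $\theta_\play$ is $C^2$ on $(0,1]$). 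Hence each kernel increment $\theta'_\play(\pstate_{\play a,\round+1})-\theta'_\play(\pstate_{\play a,\round})$, and likewise for $b$, is $\bigoh(\eps)$. Feeding this into \cref{lem:Useful Expression} applied to the consecutive iterates $\pstate_{\play,\round+1},\pstate_{\play,\round}$, together with the score update $\dstate_{\play,\round+1}=\dstate_{\play,\round}+\step_\round\npayv_{\play,\round}$, yields
\begin{equation*}
\step_\round\parens{\npayv_{\play a,\round}-\npayv_{\play b,\round}}
 = \bigl(\theta'_\play(\pstate_{\play a,\round+1})-\theta'_\play(\pstate_{\play a,\round})\bigr)-\bigl(\theta'_\play(\pstate_{\play b,\round+1})-\theta'_\play(\pstate_{\play b,\round})\bigr)
 = \bigoh(\eps),
\end{equation*}
so that on the stability event $\abs{\npayv_{\play a,\round}-\npayv_{\play b,\round}}=\bigoh(\eps/\step_\round)$.

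The contradiction then follows by a choice of scales. Because $\round$ is fixed, $\step_\round$ is merely a fixed positive scalar, and I may shrink $\eps$ — concretely $\eps=\bigoh(\noiselb\step_\round)$ — until the bound $\bigoh(\eps/\step_\round)$ drops strictly below $\noiselb$. The separation event $\{\abs{\npayv_{\play a,\round}-\npayv_{\play b,\round}}\geq\noiselb\}$ is then contained in the complement of the stability event, so its probability is at most $\conlevel$; taking $\conlevel=\problb/2$ contradicts the lower bound $\problb$ from \ref{asm:degen}. Since $\step_\round$ enters only through the choice of $\eps$, this works verbatim for every step-size schedule, which is exactly why instability is schedule-independent. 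I expect the main obstacle to be the passage between the conditional guarantee of \ref{asm:degen} and the unconditional probability delivered by stability: I would reconcile them by restricting to the positive-probability event that the play has entered $\nhd_0$ and remains near $\eq$, then taking expectations of the conditional estimate, so that the unconditional separation probability is bounded below by roughly $\problb(1-\conlevel)$. A related subtlety is that $\pstate_{\round+1}$ is not $\filter_\round$-measurable, so the increment bound must be read as a deterministic \emph{implication} — non-separation of the two signals is forced whenever both consecutive iterates stay in the neighborhood — rather than in the reverse direction.
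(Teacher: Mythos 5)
Your proposal is correct and follows essentially the same route as the paper's proof: contradiction via stochastic stability, the key identity $\step_\round(\npayv_{\play a,\round}-\npayv_{\play b,\round}) = \bigl(\theta'_\play(\pstate_{\play a,\round+1})-\theta'_\play(\pstate_{\play a,\round})\bigr)-\bigl(\theta'_\play(\pstate_{\play b,\round+1})-\theta'_\play(\pstate_{\play b,\round})\bigr)$ from \cref{lem:Useful Expression}, a Lipschitz bound on $\theta'_\play$ on a compact subinterval of $(0,1]$ where the supported coordinates are confined, the choice $\eps = \bigoh(\noiselb\step_\round)$ with $\conlevel$ below $\problb$, and the resulting probability clash with \ref{asm:degen}. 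If anything, your treatment of the conditional-versus-unconditional probability passage and of the non-$\filter_\round$-measurability of $\pstate_{\round+1}$ is slightly more careful than the paper's, which elides these points.
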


\begin{proof}
We start by determining all the parameters of the algorithm \eqref{eq:FTRL} and we assume ad absurdum that $\eq$ is a mixed Nash equilibrium, which is stochastically asymptotically stable. Then for all neighborhoods $\nhd$ of $\eq$ and $\conlevel >0$, there exists some neighborhood $\nhd_{\start}$ such that whenever $\pstate_0 \in \nhd_{\start}$, it holds that $\pstate_\round\in \nhd$ for all $\round\geq 0$ with probability at least $1-\conlevel$. This equivalently implies that for all $\eps,\conlevel >0$ if $\pstate_0\in \nhd_{\start}$, $\norm{\curr -\eq} <\eps$ for all $\round \geq 0$, with probability at least $1-\conlevel$. We leave $\eps$ to be chosen at the end of our analysis, but we will consider it to be fixed.\\
For each player $\play\in\players$ and round $\round$ if $\pstate_{\play,\round},\pstate_{\play,\round+1}$ are two consecutive instances of the sequence of play; then $\norm{\currplay-\eq_\play}<\eps$, $\norm{\nextplay-\eq_\play}<\eps$ and by the triangle inequality
\begin{equation}
    \norm{\nextplay -\currplay} <2\eps
\end{equation}
We fix a round $\round$ and focus on player $\play\in\players$ who has the property of \ref{asm:degen}; Since for two pure  strategies $a,b\in\supp\parens{\eq_\play}$ of player $\play\in\players$, holds that $\probof{\abs{\npayv_{\play a,\round}-\npayv_{\play b,\round}}\geq \noiselb\given\filter_\round}>0$ for all $\round \geq 0$, there exists for each round $\round \geq 0$,  $\problb_\round >0$ such that $\probof{\abs{\npayv_{\play a,\round}-\npayv_{\play b,\round}}\geq \noiselb\given\filter_\round} = \problb_\round$. Choose $\conlevel$ such that $\conlevel < \problb_\round$ and consequently 
\begin{equation}
    1-\conlevel > 1- \problb_\round
\end{equation}
This is possible, since $\problb_\round$ is strictly positive and $\conlevel$ can be chosen arbitrarily small.\\
Consider now the projection of the aggregate payoffs $\currplay[\dstate],\nextplay[\dstate]$ in the difference of the directions of these two strategies. From \cref{lem:Useful Expression} we have
\begin{equation}
    \inner{\nextplay[\dstate]-\currplay[\dstate]}{\bvec_a-\bvec_b} = \inner{\nabla\hreg_\play\parens{\nextplay}-\nabla\hreg_\play\parens{\currplay}}{\bvec_a-\bvec_b}
\end{equation}
However, by definition of \eqref{eq:FTRL}  $\nextplay[\dstate] - \currplay[\dstate] = \step_\round \currplay[\npayv]$ and by taking into consideration that the regularizers used are decomposable, we get
\begin{equation}
    \parens{{\theta}'_\play\parens{\pstate_{\play a,\round+1 }} - {\theta}'_\play\parens{\pstate_{\play b,\round+1}}- \parens{{\theta}'_\play\parens{\pstate_{\play a,\round}} - {\theta}'_\play\parens{\pstate_{\play b,\round}}}}= \step_\round\inner{\currplay[\npayv]}{\bvec_a-\bvec_b}
\end{equation}
By rearranging we have
\begin{equation}\label{key}
   \parensnew{{\theta}'_\play\parens{\pstate_{\play a,\round+1 }} - {\theta}'_\play\parens{\pstate_{\play a,\round }}} - \parensnew{{\theta}'_\play\parens{\pstate_{\play b,\round+1}} - {\theta}'_\play\parens{\pstate_{\play b,\round}}}  = \step_\round\parens{\npayv_{\play a,\round}-\npayv_{\play b,\round}} 
\end{equation}
As a consequence of $\theta_\play$ being continuously differentiable in all of $(0,1]$, $\theta'_\play$ is continuous in $[L\parens{\eps},1]$, where $L\parens{\eps}$ is the lower bound of $\pstate_{\play a},\pstate_{\play b}$ whenever $\norm{\pstate_\play-\eq_\play}<\eps$. $L\parens{\eps}$ can be guaranteed to be positive for a sufficiently small $\eps < \eps'$, which ensures that all the points of the neighborhood contain the support of the equilibrium for player $\play$. Therefore, from extreme value theorem in $\theta'_\play$, there exist finite $\Lipscon_a,\Lipscon_b$ corresponding to $a,b$ equivalently, such that 
\begin{align}
    \absnew{{\theta}'_\play\parens{\pstate_{\play a,\round+1 }} - {\theta}'_\play\parens{\pstate_{\play a,\round}} } &\leq \Lipscon_a\abs{\pstate_{\play a,\round+1} - \pstate_{\play a,\round}} < 2\cdot \Lipscon_a\cdot \eps\label{bounded 1}\\
    \absnew{{\theta}'_\play\parens{\pstate_{\play b,\round+1 }} - {\theta}'_\play\parens{\pstate_{\play b,\round}} }&\leq \Lipscon_b\abs{\pstate_{\play b,\round+1} - \pstate_{\play b,\round}} < 2\cdot \Lipscon_b\cdot \eps\label{bounded 2}
\end{align}
By applying the triangle inequality in \eqref{key} and using \eqref{bounded 1},\eqref{bounded 2} we get
\begin{equation}
    \step_\round \abs{\npayv_{\play a,\round} - \npayv_{\play b,\round}} <
     \parens{2\cdot \Lipscon_a + 2\cdot \Lipscon_b }\cdot \eps
\end{equation}
Equivalently,
\begin{equation}\label{noise dif}
   \abs{\npayv_{\play a,\round} - \npayv_{\play b,\round}}< \dfrac{2\cdot \Lipscon_a + 2\cdot \Lipscon_b } {\step_\round} \cdot \eps
\end{equation}
The above inequality holds with probability $1-\conlevel$. Thus, if the sequence of play $\curr$ is contained to an  $\eps-$neighborhood of $\eq$ \ie $\norm{\pstate_\round -\eq} <\eps$ for all $n\geq 0$, then the difference of the feedback, for some player $\play\in\players$, to two strategies of the equilibrium is $O\parens{\eps/\step_\round}$ with probability at least $1-\conlevel$. \\
We now fix $\eps$ to be
\begin{equation}\label{eps}
    \eps < \min\left\{\eps',\dfrac{\step_\round}{2\cdot \Lipscon_a + 2\cdot \Lipscon_b } \noiselb\right\}
\end{equation}
and consequently
\begin{equation}\label{contradiction part 1}
     \probof{\abs{\npayv_{\play a,\round} - \npayv_{\play b,\round}}< \noiselb\given\filter_\round} \geq 1-\conlevel
\end{equation}
However, from assumption \ref{asm:degen}, it holds that 
\begin{equation}\label{contradiction part 2}
    \probof{\abs{\npayv_{\play a,\round} - \npayv_{\play b,\round}}\geq \noiselb} \geq \problb_\round
\end{equation}
Combining \eqref{contradiction part 1},\eqref{contradiction part 2} we conclude
\begin{align}
    1&=\prob\left[\left\{\abs{\npayv_{\play a,\round} - \npayv_{\play b,\round}} \geq \noiselb \right\}\cup\left\{ \abs{\npayv_{\play a,\round} - \npayv_{\play b,\round}} < \noiselb \right\}\right]\\
    &= \prob\left[\abs{\npayv_{\play a,\round} - \npayv_{\play b,\round}} \geq \noiselb \right] + \prob\left[\abs{\npayv_{\play a,\round} - \npayv_{\play b,\round}} <\noiselb \right]\\
    &\geq \problb_\round +1-\conlevel\\
    &>  1
\end{align}
which is a contradiction.

Thus, a mixed Nash equilibrium cannot be stochastically asymptotically stable, under \eqref{eq:FTRL} for types of payoff feedback described in \cref{sec:feedback}. Notice that this analysis holds even for the first round. Once the parameters of the algorithm have been determined, asymptotic instability can be derived in whichever finite round.

\end{proof}

\end{document}